\title{Tape Diagrams for Monoidal Monads} 
\author{Filippo Bonchi}{University of Pisa}{}{}{Bonchi is supported by the Ministero dell'Università e della Ricerca of Italy grant PRIN 2022 PNRR No. P2022HXNSC - RAP (Resource Awareness in Programming). This study was carried out within the National Centre on HPC, Big Data and Quantum Computing - SPOKE 10 (Quantum Computing) and received funding from the European Union Next-GenerationEU - National Recovery and Resilience Plan (NRRP) – MISSION 4 COMPONENT 2, INVESTMENT N. 1.4 – CUP N. I53C22000690001.}
\author{Cipriano Junior Cioffo}{University of Pisa}{}{}{}
\author{Alessandro Di Giorgio}{Tallinn University of Technology}{}{}{}
\author{Elena Di Lavore}{University of Oxford}{}{}{}
\authorrunning{F. Bonchi, C.J. Cioffo, A. Di Giorgio, E. Di Lavore} 
\keywords{rig categories, string diagrams, monads, probabilistic control} 
\begin{document}

\maketitle

\begin{abstract}
Tape diagrams provide a graphical representation for arrows of rig categories, namely categories equipped with two monoidal structures, $\piu$ and $\per$, where $\per$ distributes over $\piu$. However, their applicability is limited to categories where $\piu$ is a biproduct, i.e., both a categorical product and a coproduct. In this work, we 
extend tape diagrams to deal with Kleisli categories of symmetric monoidal monads, presented by algebraic theories.
\end{abstract}

\section{Introduction}\label{sec:intro}

Motivated by the growing interest in formal languages that are both diagrammatic and compositional, an increasing number of works~\cite{BaezErbele-CategoriesInControl,DBLP:journals/pacmpl/BonchiHPSZ19,Bonchi2015,coecke2011interacting,Fong2015,DBLP:journals/corr/abs-2009-06836,Ghica2016,DBLP:conf/lics/MuroyaCG18,Piedeleu2021,DBLP:journals/jacm/BonchiGKSZ22} leverage \emph{string diagrams}~\cite{joyal1991geometry,selinger2010survey}. These diagrams formally represent arrows in a strict symmetric monoidal category freely generated by a monoidal signature. A symbol $\gen$ in the signature is depicted as a box $\Cgen{\gen}{}{}$, and arbitrary string diagrams are constructed through horizontal ($;$) and vertical ($\per$) composition of such symbols. 

While string diagrammatic languages are typically expressive enough to represent all relevant systems, they often become cumbersome when dealing with \emph{control}. For example, representing quantum control within the ZX-calculus~\cite{Coecke2008,Coecke2017} or probabilistic control of Boolean circuits~\cite{piedeleu2025boolean} requires ingenuity. Various works~\cite{DBLP:journals/corr/abs-2103-07960,zhao2021analyzing,stollenwerk2022diagrammatic,villoria2024enriching,DBLP:conf/fossacs/BoisseauP22} adopt a hybrid approach combining diagrammatic and algebraic syntax. Others~\cite{duncan2009generalised, james2012information, staton2015algebraic} extend beyond monoidal categories to \emph{rig categories}~\cite{laplaza_coherence_1972,johnson2021bimonoidal}, roughly categories equipped with two monoidal products, $\per$ and $\piu$, where $\per$ distributes over $\piu$.

A key challenge in visualizing arrows in rig categories arises from the need to compose them not only horizontally ($;$) and vertically ($\per$) but also using the additional monoidal product $\piu$. One natural approach, implemented by \emph{sheet diagrams}~\cite{comfort2020sheet}, utilizes three dimensions. An alternative, which remains within two dimensions, is offered by \emph{tape diagrams}~\cite{bonchi2023deconstructing}.

Intuitively, tape diagrams are \emph{string diagrams of string diagrams}: the vertical composition of inner diagrams represents $\per$, whereas the vertical composition of outer diagrams represents $\piu$.
\begin{equation*}\label{eq:example intro}
    
    \InputIfFileExists{zx/control.tikz}{}{\input{./tikz/zx/control.tikz}}
 \qquad\qquad \scalebox{0.8}{
    \InputIfFileExists{addition/additionDef.tikz}{}{\input{./tikz/addition/additionDef.tikz}}
} \qquad {\texttt{while(x>0)\{x:=x-1; y:=y+1\}; return y}}
\end{equation*}

For instance, the leftmost tape diagram above represents a controlled unitary using ZX-circuits inside the tapes. The central tape diagram denotes the imperative program on the right. In both cases, the inner diagrams correspond to data flow, while the outer diagrams capture control flow. The leftmost diagram is interpreted within the category of complex matrices, whereas the central diagram within the category of sets and relations. In both categories, $\piu$ is a \emph{biproduct}, namely a product and a coproduct. This fact guarantees the existence of natural \emph{comonoids} and \emph{monoids} for $\piu$, which provide the crucial linguistic primitives for representing both the branching and the join of tapes:
\begin{center}
$\smash{\Tcomonoid{\!\!\!\!}}$ and $\smash{\Tmonoid{\!\!\!\!}}$
\end{center}
Unfortunately, in many contexts, such as probabilistic systems, $\piu$ cannot be assumed to be a biproduct.

\medskip

In this paper, we generalize the framework of tape diagrams from~\cite{bonchi2023deconstructing} to accommodate categories where $\piu$ is not necessarily a biproduct. Specifically, our proposed tape diagrams represent arrows of rig categories that share the structure of Kleisli categories for \emph{monoidal monads}~\cite{kock1972strong}. Like in~\cite{bonchi2023deconstructing}, we deliberately avoid \emph{monoidal traces}, which --while useful for iteration as illustrated in~\cite{bonchi2024diagrammatic} and in the central tape above-- are not essential for our development.

Our starting observation is that in Kleisli categories, when the monoidal monad is presented by an algebraic theory $\mathbb{T}=(\Sigma,E)$, each symbol $f$ of the signature $\Sigma$ induces a natural transformation $\scalebox{0.8}{\Talgop{\!\!\!\!}{f}}$ that we use in place of the missing $\smash{\Tcomonoid{\!\!\!\!}}$. Conveniently, $\smash{\Tmonoid{\!\!\!\!}}$ always exists in any Kleisli category, as $\piu$ remains a coproduct~\cite{borceux2}. Notably, when $\mathbb{T}$ corresponds to the algebraic theory of commutative monoids, our $\mathbb{T}$-tape diagrams naturally generalize those in~\cite{bonchi2023deconstructing} (see Remark~\ref{remark:Generalization}).

Another fundamental aspect of our work concerns comonoids for the product $\per$. These structures, which reside within the Kleisli category of any commutative monad~\cite{fritz_2020}, have recently gained attention due to their significance in copy-discard~\cite{cho2019disintegration} (aka gs-monoidal~\cite{GadducciCorradini-termgraph}) and Markov categories~\cite{fritz_2020}.

We analyze, in the setting of rig categories, the interaction between $\per$-comonoids and $\piu$-monoids in Kleisli categories of monoidal monads, introducing the notion of \emph{finite coproduct-copy discard rig categories} (Definition~\ref{def: distributive gs-monoidal category}). By further considering the natural transformation arising from the algebraic theory $\mathbb{T}$, we obtain  \emph{$\mathbb{T}$-copy discard rig categories} (Definition \ref{def: T distributive gs-monoidal category}). We demonstrate that any Kleisli category arising from a monoidal monad presented by $\mathbb{T}$ inherently possesses this structure (Proposition~\ref{prop:kleisliTrig}). Furthermore, we establish that our tape diagrammatic language  provides a $\mathbb{T}$-copy discard rig category freely generated by a monoidal signature (Theorem~\ref{th:tapes-free-T-cd}). These results ensure that, whenever the monoidal signature is interpreted within such a Kleisli category, the interpretation extends uniquely to a rig functor, thus providing a compositional semantics for tapes. We conclude by illustrating how interpreting Boolean circuits within the Kleisli category of subdistributions yields a language of tapes that both encodes~\cite{piedeleu2025boolean} and easily manages probabilistic control (Example~\ref{example: probabilistic control}).

\textbf{Synopsis.} Section~\ref{sec:running} illustrates the key structures within the Kleisli category of subdistributions. Section~\ref{sec:CD} recalls the fundamental concepts of copy-discard and finite (co)product categories, alongside monoidal monads. Section~\ref{sec:rigcategories} provides an overview of rig categories, followed by our definition of finite coproduct-copy discard rig categories in Section~\ref{sec:fccdrig}. Section~\ref{sec:Trig} introduces $\mathbb{T}$-rig categories and $\mathbb{T}$-copy discard rig categories, while their corresponding tape diagrams are in Section \ref{sec:tapediagrams}. The appendix includes coherence diagrams and proofs. We assume familiarity with monads, Kleisli categories, and monoidal categories. For a symmetric monoidal category $(\Cat{C}, \perG, \unoG)$, we denote symmetries by $\sigma^\perG_{X,Y}$ and the remaining structural isomorphisms by $\assoc{X}{Y}{Z}$, $\lunit{X}$, and $\runit{X}$. For a natural isomorphism $\phi$, its inverse is denoted by $\phi^-$.

\section{Motivating Example}\label{sec:running} 
To provide a clearer illustration of our approach, we begin with the Kleisli category \(\Cat{Set}_{\subdistr}\) associated with the monad of finitely supported subdistributions \(\subdistr\) on   \(\Cat{Set}\) (see e.g., \cite{hasuo2007generic}).
%
Objects are sets; morphisms \(f \colon X \to Y\) are functions \(X \to \subdistr(Y)\) mapping each \(x \in X\) into a finitely supported subdistribution over $Y$.
  We write \(f(y \mid x)\) for \(f(x)(y)\)  as this number represents the probability that \(f\) returns \(y\) given the input \(x\).
  Identities \(\id{X} \colon X \to \subdistr(X)\) map each element \(x \in X\) to \(\delta_{x}\), the Dirac distribution at \(x\).
  For two functions \(f \colon X \to \subdistr(Y)\) and \(g \colon Y \to \subdistr(Z)\), their composition in \(\Cat{Set}_{\subdistr}\) is obtained by summing on the middle variable: \(f ; g (z \mid x) \defeq \sum_{y \in Y} f(y \mid x) \cdot g(z \mid y)\).

The category \(\Cat{Set}_{\subdistr}\) carries two symmetric monoidal structures: $ (\Cat{Set}_{\subdistr}, \per, \uno)$ and $(\Cat{Set}_{\subdistr}, \piu, \zero)$. The monoidal product $\per$ is defined on objects as the cartesian product of sets, hereafter denoted by $\times$, with unit the singleton $\uno \defeq \{\bullet\}$; $\piu$ is the disjoint union of sets with unit the empty set $\zero \defeq \{\}$. Hereafter we denote the disjoint union of two sets $X$ and $Y$ by $X+Y \defeq \{(x,0) \mid x\in X\} \cup \{(y,1) \mid y \in Y\}$ where $0$ and $1$ are tags used to distinguish the set of provenance. For arrows  \(f \colon X \to Y\) and \(g \colon X' \to Y'\), $f\per g\colon X\times X' \to Y \times Y'$ and $f \piu g \colon X+X' \to Y+Y'$ are defined as
\begin{equation}\label{ex:products}
f \per g(y,y' \mid x,x') \defeq f(y \mid x) \cdot g(y' \mid x') \quad  f\piu g(v \mid u) \defeq \begin{cases} f(y \mid x) & \text{if } u=(x,0) \text{ and } v=(y,0)\\ g(y' \mid x') & \text{if } u=(x',1) \text{ and } v=(y',1) \\ 0 & \text{otherwise} \end{cases}
\end{equation}
for all $x\in X$, $x'\in X'$, $y\in Y$, $y'\in Y'$, $u\in X+X'$ and $v\in Y+Y'$. 

Within the two monoidal structures, for all objects $X$, there are arrows particularly relevant for us:
%
%
%
\begin{equation}\label{ex:comonoids}
\begin{array}{cccc}
\begin{array}{rcl}
\copier{X} \colon  X & \to & X \times X \\
x & \mapsto & \delta_{(x,x)}
\end{array}
&
\begin{array}{rcl}
\discharger{X}  \colon X & \to & \uno\\
x & \mapsto& \delta_{\bullet}
\end{array}
&
\begin{array}{rcl}
\codiag{X}  \colon  X+X & \to & X\\
(x,i) & \mapsto& \delta_{x}
\end{array}
&
\begin{array}{rcl}
\cobang{X}\colon  \zero &\to& X\\
\text{ }
\end{array}
\end{array}
\end{equation}
Indeed $(\copier{X},\discharger{X})$ provide a \emph{comonoid} for the monoidal category  $ (\Cat{Set}_{\subdistr}, \per, \uno)$, while $(\codiag{X},\cobang{X})$ a \emph{monoid} for $(\Cat{Set}_{\subdistr}, \piu, \zero)$. The following arrows, where $p\in (0,1)$, are also crucial:
\begin{equation}\label{ex:op}
 \Br{+_p}_X \colon X \to X+X \qquad \Br{\star}_X \colon X \to 0\end{equation}
The function $\Br{+_p}_X$ maps each $x\in X$ into the distribution that assigns to $(x,0)$ probability $p$ and to $(x,1)$ probability $1-p$. Instead, $\Br{\star}_X$ is the  function mapping each $x$ into the null subdistribution.

The structures above constitute the fundamental linguistic constructs for the language we are about to introduce. Naturally, they are not arbitrary: 
the structures in \eqref{ex:products} and \eqref{ex:comonoids} exist in the Kleisli category of any \emph{symmetric monoidal monad}, while those in \eqref{ex:op} are specific to the monad \(\subdistr\). However, analogous structures can also be found in monads equipped with an \emph{algebraic presentation} \(\mathbb{T}\).

Before introducing our diagrammatic language in Section \ref{sec:tapediagrams}, we examine how these structures interact: \(\per, \uno,\copier{},\discharger{}\) define a copy-discard category (Definition \ref{def:cd}); \(\piu, \zero,\codiag{},\cobang{}\) form a finite coproduct category (Definition \ref{def:fp}). Together, they constitute what we refer to as a \emph{finite coproduct-copy discard rig category} (Definition \ref{def: distributive gs-monoidal category}). By further incorporating the arrows provided by the algebraic theory \(\mathbb{T}\), we obtain a \emph{\(\mathbb{T}\)-copy discard rig category} (Definition \ref{def:Trig cat}).

%
%
%
%
%

\section{Copy Discard and Finite (Co)Product Categories}\label{sec:CD}

Fox's theorem \cite{fox1976coalgebras} characterises categories with finite products as symmetric monoidal categories equipped with coherent comonoids which are natural.
It is convenient to split the definition in two:

%

\begin{table}
{\small
\begin{equation}\label{eq:comonoid}
\copier{X}; (\copier{X} \perG \id{X}); \assoc{X}{X}{X} = \copier{X} ; (\id{X} \perG \copier{X})\quad
\copier{X}; (\discharger{X} \perG \id{X}) ; \lunit{X} = \id{X} = \copier{X} ; (\id{X} \perG \discharger{X}); \runit{X} \quad
\copier{X}; \sigma_{X,X}^\perG = \copier{X}
\end{equation}
\vspace{-0.5cm}
\begin{equation}\label{eq:comonoidcoherence}
\begin{array}{c}\copier{X\perG Y} = (\copier{X}\perG\copier{Y}) ; \assoc{X}{X}{Y\perG Y} ; (\id{X} \perG \assoc{X}{Y}{Y}^-) ; ( \id{X} \perG (\sigma^{\perG}_{X,Y} \perG\id{Y})) ; (\id{X} \perG \assoc{Y}{X}{Y}) ; \assoc{X}{Y}{X\perG Y}^- \\
\discharger{X\perG Y} = \discharger{X} \perG \discharger{Y} ; \lunit{\unoG} \qquad 
\copier{\unoG} = \lunit{\unoG}^- \qquad
\discharger{\unoG} = \id{\unoG}
\end{array}
\end{equation}
\vspace{-0.5cm}
\begin{equation}\label{eq:naturality}
f; \copier{Y} = \copier{X}; (f \perG f) \qquad f;\discharger{Y} = \discharger{X} \qquad \qquad \text{for all }f\colon X \to Y
\end{equation}
}
\caption{Axioms of finite product categories. 
See also Figure~\ref{fig:comonoidax} and \ref{fig:fpcoherence} in Appendix \ref{app:coherence axioms}.}\label{tab:axiomsfc}
\end{table}

\begin{definition}\label{def:cd}
A \emph{copy discard} (cd, for short) \emph{category} is a symmetric monoidal category $(\Cat C, \perG, \unoG)$ and, for all objects $X$, two arrows $\copier{X}\colon X \to X \perG X$ and $\discharger{X}  \colon X \to \unoG$ such that
\begin{itemize}
\item $\copier{X}$ and $\discharger{X}$ form a comonoid, i.e., the laws in Table \ref{tab:axiomsfc}.\eqref{eq:comonoid} hold;
\item $\copier{X}$ and $\discharger{X}$ satisfy the coherence axioms in Table \ref{tab:axiomsfc}.\eqref{eq:comonoidcoherence}.
\end{itemize}
A \emph{morphism of cd categories} is a strong symmetric monoidal functor preserving comonoids.
\end{definition}

\begin{definition}\label{def:fp}
A \emph{finite product} (fp, for short) \emph{category} is a copy discard category such that
\begin{itemize}
\item $\copier{}$ and $\discharger{}$  are natural transformations, i.e., for all $f\colon X \to Y$ in $\Cat{C}$ the laws in Table \ref{tab:axiomsfc}.\eqref{eq:naturality} hold;
\end{itemize}
A \emph{finite coproduct} (fc) \emph{category} is a symmetric monoidal category with natural and coherent monoids $(\codiag{},\cobang{})$, i.e., $\Cat C$ is a finite coproduct category iff $\Cat{C}^{op}$ is a fp category. A morphism of (fp)fc categories is a strong symmetric monoidal functor preserving (co)monoids.
\end{definition}
Definition \ref{def:fp} is equivalent to require the existence of categorical (co)products and (initial) final objects but, differently from the usual definition, it does not rely on universal properties (see Remark \ref{rem:productcomonoids} in App. \ref{ssec:fbcat} for more details). 
Moreover, since any monoidal category is equivalent to a \emph{strict} one \cite{mac_lane_categories_1978}, namely a monoidal category where the structural isomorphism $\assoc{X}{Y}{Z}\colon (X \perG Y) \perG Z \to X\perG (Y \perG Z)$, $\lunit{X}\colon \unoG \perG X \to X$ and $\runit{X} \colon X \perG \unoG \to X$ are identities, the laws in Table~\ref{tab:axiomsfc} can remarkably simplified. In particular, they acquire an intuitive geometrical meaning when drawn as \emph{string diagrams}, formally arrows of a freely generated strict symmetric monoidal category (see App. \ref{sec:monoidal}).
By depicting, $\copier{X}$ as $\comonoid{X}$, $\discharger{X}$ as $\counit{X}$ and an arbitrary arrow $f\colon X \to Y$ as $\boxTypedCirc{f}{X}{Y}$, the laws in Table~\ref{tab:axiomsfc} appear as follows.
  \[
  \!\!\!\!\begin{array}{c @{} c@{}c@{}c}
  \begin{array}{c@{}c@{}c @{\quad} c@{}c@{}c @{\quad} c@{}c@{}c}
    
    \begin{tikzpicture}
	\begin{pgfonlayer}{nodelayer}
		\node [style=label] (120) at (3, 1.25) {$X$};
		\node [style=black] (121) at (0, 0) {};
		\node [style=none] (122) at (0.75, -0.625) {};
		\node [style=none] (123) at (0.75, 0.625) {};
		\node [style=label] (124) at (-1.5, 0) {$X$};
		\node [style=none] (125) at (2.5, -0.625) {};
		\node [style=none] (127) at (-1, 0) {};
		\node [style=black] (128) at (1.5, 0.625) {};
		\node [style=none] (129) at (2.25, 0) {};
		\node [style=none] (130) at (2.25, 1.25) {};
		\node [style=none] (131) at (2.5, 0) {};
		\node [style=none] (132) at (2.5, 1.25) {};
		\node [style=label] (133) at (3, 0) {$X$};
		\node [style=label] (134) at (3, -0.625) {$X$};
	\end{pgfonlayer}
	\begin{pgfonlayer}{edgelayer}
		\draw [bend right] (123.center) to (121);
		\draw [bend right] (121) to (122.center);
		\draw (125.center) to (122.center);
		\draw (127.center) to (121);
		\draw [bend right] (130.center) to (128);
		\draw [bend right] (128) to (129.center);
		\draw (130.center) to (132.center);
		\draw (131.center) to (129.center);
		\draw (123.center) to (128);
	\end{pgfonlayer}
\end{tikzpicture}
}
 & = & 
    \begin{tikzpicture}
	\begin{pgfonlayer}{nodelayer}
		\node [style=label] (120) at (3, -0.625) {$X$};
		\node [style=black] (121) at (0, 0.625) {};
		\node [style=none] (122) at (0.75, 1.25) {};
		\node [style=none] (123) at (0.75, 0) {};
		\node [style=label] (124) at (-1.5, 0.625) {$X$};
		\node [style=none] (125) at (2.5, 1.25) {};
		\node [style=none] (127) at (-1, 0.625) {};
		\node [style=black] (128) at (1.5, 0) {};
		\node [style=none] (129) at (2.25, 0.625) {};
		\node [style=none] (130) at (2.25, -0.625) {};
		\node [style=none] (131) at (2.5, 0.625) {};
		\node [style=none] (132) at (2.5, -0.625) {};
		\node [style=label] (133) at (3, 0.625) {$X$};
		\node [style=label] (134) at (3, 1.25) {$X$};
	\end{pgfonlayer}
	\begin{pgfonlayer}{edgelayer}
		\draw [bend left] (123.center) to (121);
		\draw [bend left] (121) to (122.center);
		\draw (125.center) to (122.center);
		\draw (127.center) to (121);
		\draw [bend left] (130.center) to (128);
		\draw [bend left] (128) to (129.center);
		\draw (130.center) to (132.center);
		\draw (131.center) to (129.center);
		\draw (123.center) to (128);
	\end{pgfonlayer}
\end{tikzpicture}
}

    &
    
    \begin{tikzpicture}
	\begin{pgfonlayer}{nodelayer}
		\node [style=black] (121) at (0, 0) {};
		\node [style=none] (122) at (0.75, 0.625) {};
		\node [style=none] (123) at (0.75, -0.625) {};
		\node [style=label] (124) at (-1.5, 0) {$X$};
		\node [style=black] (125) at (1, 0.625) {};
		\node [style=none] (127) at (-1, 0) {};
		\node [style=none] (135) at (1.25, -0.625) {};
		\node [style=label] (140) at (1.75, -0.625) {$X$};
	\end{pgfonlayer}
	\begin{pgfonlayer}{edgelayer}
		\draw [bend left] (123.center) to (121);
		\draw [bend left] (121) to (122.center);
		\draw (127.center) to (121);
		\draw (122.center) to (125);
		\draw (135.center) to (123.center);
	\end{pgfonlayer}
\end{tikzpicture}
}
 & = & 
    \begin{tikzpicture}
	\begin{pgfonlayer}{nodelayer}
		\node [style=none] (123) at (-1, 0) {};
		\node [style=label] (124) at (-1.5, 0) {$X$};
		\node [style=none] (135) at (1, 0) {};
		\node [style=label] (140) at (1.5, 0) {$X$};
	\end{pgfonlayer}
	\begin{pgfonlayer}{edgelayer}
		\draw (135.center) to (123.center);
	\end{pgfonlayer}
\end{tikzpicture}
}

    &
    
    \begin{tikzpicture}
	\begin{pgfonlayer}{nodelayer}
		\node [style=black] (121) at (0, 0) {};
		\node [style=none] (122) at (0.75, 0.625) {};
		\node [style=none] (123) at (0.75, -0.625) {};
		\node [style=label] (124) at (-1.5, 0) {$X$};
		\node [style=none] (125) at (2.5, 0.625) {};
		\node [style=none] (127) at (-1, 0) {};
		\node [style=label] (134) at (3, 0.625) {$X$};
		\node [style=none] (135) at (2.5, -0.625) {};
		\node [style=none] (136) at (1, -0.625) {};
		\node [style=none] (137) at (1, 0.625) {};
		\node [style=none] (138) at (2.25, -0.625) {};
		\node [style=none] (139) at (2.25, 0.625) {};
		\node [style=label] (140) at (3, -0.625) {$X$};
	\end{pgfonlayer}
	\begin{pgfonlayer}{edgelayer}
		\draw [bend left] (123.center) to (121);
		\draw [bend left] (121) to (122.center);
		\draw (127.center) to (121);
		\draw [in=180, out=0, looseness=0.75] (137.center) to (138.center);
		\draw [in=0, out=-180, looseness=0.75] (139.center) to (136.center);
		\draw (122.center) to (137.center);
		\draw (135.center) to (138.center);
		\draw (125.center) to (139.center);
		\draw (136.center) to (123.center);
	\end{pgfonlayer}
\end{tikzpicture}
}
 &=  & 
    \begin{tikzpicture}
	\begin{pgfonlayer}{nodelayer}
		\node [style=black] (121) at (0, 0) {};
		\node [style=none] (122) at (0.75, 0.625) {};
		\node [style=none] (123) at (0.75, -0.625) {};
		\node [style=label] (124) at (-1.5, 0) {$X$};
		\node [style=none] (125) at (1.25, 0.625) {};
		\node [style=none] (127) at (-1, 0) {};
		\node [style=label] (134) at (1.75, 0.625) {$X$};
		\node [style=none] (135) at (1.25, -0.625) {};
		\node [style=label] (140) at (1.75, -0.625) {$X$};
	\end{pgfonlayer}
	\begin{pgfonlayer}{edgelayer}
		\draw [bend left] (123.center) to (121);
		\draw [bend left] (121) to (122.center);
		\draw (127.center) to (121);
		\draw (122.center) to (125.center);
		\draw (135.center) to (123.center);
	\end{pgfonlayer}
\end{tikzpicture}
}

  \end{array}
  &
  
    \begin{tikzpicture}
	\begin{pgfonlayer}{nodelayer}
		\node [style=black] (121) at (0, 0) {};
		\node [style=none] (122) at (0.75, 0.625) {};
		\node [style=none] (123) at (0.75, -0.625) {};
		\node [style=label] (124) at (-3.5, 0) {$X$};
		\node [style=none] (125) at (1.25, 0.625) {};
		\node [style=label] (134) at (1.75, 0.625) {$Y$};
		\node [style=none] (135) at (1.25, -0.625) {};
		\node [style=label] (140) at (1.75, -0.625) {$Y$};
		\node [style=none] (141) at (-3, 0) {};
		\node [style=bbox] (142) at (-1.5, 0) {$f$};
	\end{pgfonlayer}
	\begin{pgfonlayer}{edgelayer}
		\draw [bend left] (123.center) to (121);
		\draw [bend left] (121) to (122.center);
		\draw (122.center) to (125.center);
		\draw (135.center) to (123.center);
		\draw (141.center) to (142);
		\draw (142) to (121);
	\end{pgfonlayer}
\end{tikzpicture}
}
 &=& 
    \begin{tikzpicture}
	\begin{pgfonlayer}{nodelayer}
		\node [style=black] (121) at (0.25, 0) {};
		\node [style=none] (122) at (1, 0.625) {};
		\node [style=none] (123) at (1, -0.625) {};
		\node [style=label] (124) at (-1.25, 0) {$X$};
		\node [style=none] (125) at (3, 0.625) {};
		\node [style=label] (134) at (3.5, 0.625) {$Y$};
		\node [style=none] (135) at (3, -0.625) {};
		\node [style=label] (140) at (3.5, -0.625) {$Y$};
		\node [style=none] (141) at (-0.75, 0) {};
		\node [style=bbox, scale=0.8] (142) at (2, 0.625) {$f$};
		\node [style=bbox, scale=0.8] (143) at (2, -0.625) {$f$};
	\end{pgfonlayer}
	\begin{pgfonlayer}{edgelayer}
		\draw [bend left] (123.center) to (121);
		\draw [bend left] (121) to (122.center);
		\draw (122.center) to (125.center);
		\draw (135.center) to (123.center);
		\draw (141.center) to (121);
	\end{pgfonlayer}
\end{tikzpicture}
}

  \\
  \begin{array}{c@{\quad}c@{\quad}c@{\quad}c}
    \CBcopier[][I] = 
    \InputIfFileExists{empty.tikz}{}{\input{./tikz/empty.tikz}}
 & 
    \InputIfFileExists{fbAx/coherence/copierLhs.tikz}{}{\input{./tikz/fbAx/coherence/copierLhs.tikz}}
 = 
    \InputIfFileExists{fbAx/coherence/copierRhs.tikz}{}{\input{./tikz/fbAx/coherence/copierRhs.tikz}}
 & \CBdischarger[][I] = 
    \InputIfFileExists{empty.tikz}{}{\input{./tikz/empty.tikz}}
 & 
    \begin{tikzpicture}
	\begin{pgfonlayer}{nodelayer}
		\node [style=none] (6) at (-1.25, 0) {};
		\node [style=black] (7) at (-0.25, 0) {};
		\node [style=label] (8) at (-2.25, 0) {$X \perG Y$};
	\end{pgfonlayer}
	\begin{pgfonlayer}{edgelayer}
		\draw (6.center) to (7);
	\end{pgfonlayer}
\end{tikzpicture}
}
 = 
    \InputIfFileExists{fbAx/coherence/dischargerRhs.tikz}{}{\input{./tikz/fbAx/coherence/dischargerRhs.tikz}}

  \end{array}
  &
  
    \begin{tikzpicture}
	\begin{pgfonlayer}{nodelayer}
		\node [style=black] (121) at (0, 0) {};
		\node [style=label] (124) at (-3.5, 0) {$X$};
		\node [style=none] (141) at (-3, 0) {};
		\node [style=bbox] (142) at (-1.5, 0) {$f$};
	\end{pgfonlayer}
	\begin{pgfonlayer}{edgelayer}
		\draw (141.center) to (142);
		\draw (142) to (121);
	\end{pgfonlayer}
\end{tikzpicture}
}
 &=& 
    \begin{tikzpicture}
	\begin{pgfonlayer}{nodelayer}
		\node [style=black] (121) at (-1.5, 0) {};
		\node [style=label] (124) at (-3.5, 0) {$X$};
		\node [style=none] (141) at (-3, 0) {};
	\end{pgfonlayer}
	\begin{pgfonlayer}{edgelayer}
		\draw (141.center) to (121);
	\end{pgfonlayer}
\end{tikzpicture}
}

  \end{array}
  \]


The archetypical examples of finite product and finite coproduct category are  \((\Cat{Set}, \times, \uno)\) and  \((\Cat{Set}, +, 0)\):  \emph{copy} $\copier{X} \colon X \to X \times X$ is  $\langle \id{X},\id{X}\rangle$;  the \emph{discard} $\discharger{X}\colon X \to \uno$ is the unique function into $\uno$; $\codiag{X}\colon X+X \to X$ is $[\id{X},\id{X}]$ and $\cobang{X}\colon \zero \to X$ is the unique function from $\zero$.  

As anticipated in Section \ref{sec:running}, also $(\Cat{Set}_{\subdistr}, \piu, \zero)$ is a finite coproduct category. More generally, the \emph{Kleisli category} of a monad $T\colon\Cat{C}\to \Cat{C}$, hereafter denoted by $\Cat{C}_T$, is a fc category whenever $\Cat{C}$ is fc~\cite{borceux2}.  Instead if $\Cat{C}$ is a fp category, then $\Cat{C}_T$ is \emph{not} necessarily fp, actually not even monoidal. 

\begin{definition}
Let $(\Cat{C},\perG,\unoG)$ be a symmetric monoidal category. A symmetric monoidal monad  consists of a monad \((T, \eta, \mu)\) over $\Cat{C}$ and a natural transformation \(m_{X,Y} \colon T(X) \perG T(Y) \to T(X \perG Y)\) satisfying the expected laws (see Figure \ref{fig:monoidalmonads} and also \cite{eilenberg1966closed,kock1972strong,kock1970monads}). 
\end{definition}
\begin{example}\label{ex:monoidal monads}
Any monad on  \((\Cat{Set}, +, 0)\) and, more generally, any monad on a fc category is symmetric monoidal: take $m_{X,Y}=[T(\iota_X),T(\iota_Y)]$. 
Instead, not all monads on \((\Cat{Set}, \times, \uno)\)  are symmetric monoidal: for instance, the monad of convex sets of (sub)distributions (see e.g.~\cite{jacobs2008coalgebraic,goy2020combining}), combining probabilistic and non-deterministic choice. The monad \(\subdistr\) is symmetric monoidal~\cite{jacobs2018probability}: $m_{X,Y}$ is defined for all $d_X\in \subdistr(X)$, $d_Y\in \subdistr(Y)$, $x\in X$ and $y\in Y$ as  $m_{X,Y}(d_X,d_Y)(x,y)=d_X(x)\cdot d_Y(y)$.
%
\end{example}
When $T$ is a symmetric monoidal monad, then one can define a monoidal product $\perG_T$ on $\Cat{C}_T$: on objects, $\perG_T$ is defined as $\perG$ in $\Cat{C}$; 
for arrows $f\colon X_1 \to X_2$, $g\colon Y_1 \to Y_2$ in $\Cat{C}_T$, $f\perG_T g \defeq (f\perG g) ; m_{X_2,Y_2}$.
Moreover, if $\Cat{C}$ has coherent monoids or comonoids, then for all objects $X$, one can define
\begin{equation}\label{eq:copierflat}
\copier{X}^\flat \defeq  \copier{X} ;\eta_{X\perG X} \qquad \discharger{X}^\flat \defeq \discharger{X} ; \eta_{\unoG} \qquad \text{ and } \qquad   \codiag{X}^\flat \defeq  \codiag{X} ;\eta_{X} \qquad  \cobang{X}^\flat \defeq \cobang{X} ; \eta_{X}\text{,}
\end{equation}
forming coherent (co)monoids in $\Cat{C}_T$. If $\Cat{C}$ is fc, then the above monoids are natural in $\Cat{C}_T$. Instead, if $\Cat{C}$ is fp, then the comonoids are \emph{not} necessarily natural. In other words, $\Cat{C}_{T}$ may not have finite products, but it is always a \emph{copy-discard} category (Definition \ref{def:cd}).

\begin{proposition}[From \cite{cioffogadduccitrotta}]\label{prop: Kleisli gs e cogs}
  Let $T\colon \Cat{C}\to \Cat{C}$ be a symmetric monoidal monad on a cd category $(\Cat{C},\perG,\unoG, \copier{}, \discharger{})$.
  Then the Kleisli category $(\Cat{C}_T, \perG_T, \unoG, \copier{}^\flat, \discharger{}^\flat)$ is a cd category. 
\end{proposition}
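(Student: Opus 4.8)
The goal is to verify that $(\Cat{C}_T, \perG_T, \unoG, \copier{}^\flat, \discharger{}^\flat)$ satisfies the three clauses of Definition~\ref{def:cd}: that $(\Cat{C}_T,\perG_T,\unoG)$ is symmetric monoidal, that $\copier{}^\flat$ and $\discharger{}^\flat$ form a comonoid (the laws in \eqref{eq:comonoid}), and that they satisfy the coherence axioms in \eqref{eq:comonoidcoherence}. The key technical device throughout is that the unit $\eta$ of the monad is a monoidal natural transformation relating the symmetric monoidal structure $(\perG,\unoG)$ on $\Cat{C}$ to the structure $(\perG_T,\unoG)$ on $\Cat{C}_T$; concretely, $\eta$ is the identity-on-objects strict symmetric monoidal functor $J\colon \Cat{C}\to \Cat{C}_T$, and all of the structural isomorphisms $\assoc{}{}{}, \lunit{}, \runit{}, \sigma^{\perG}$ of $\Cat{C}_T$ are the $J$-images of those of $\Cat{C}$. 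This lets me transport every equation from $\Cat{C}$ to $\Cat{C}_T$ by applying $J$, since $J$ preserves composition, $\perG_T$, and the structural morphisms by construction.

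\textbf{Step 1: Monoidal structure on $\Cat{C}_T$.} First I would recall (or cite, from \cite{kock1972strong} and the standard theory of monoidal monads) that for a symmetric monoidal monad $T$ the Kleisli category $(\Cat{C}_T,\perG_T,\unoG)$ is symmetric monoidal, with structural isomorphisms obtained by post-composing those of $\Cat{C}$ with $\eta$. The compatibility conditions on $m$ displayed in Figure~\ref{fig:monoidalmonads} are exactly what is needed for bifunctoriality of $\perG_T$ and naturality/coherence of the associator, unitors, and symmetry; I would simply invoke this. The upshot is the strict symmetric monoidal functor $J = (\mathrm{id}_{Ob}, \eta)\colon \Cat{C}\to\Cat{C}_T$.

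\textbf{Step 2: Comonoid and coherence laws.} Here the strategy is: since $\copier{X}^\flat = J(\copier{X})$ and $\discharger{X}^\flat = J(\discharger{X})$, and since $J$ is strict symmetric monoidal and preserves composition, every law in \eqref{eq:comonoid} and \eqref{eq:comonoidcoherence} is the $J$-image of the corresponding law already holding in $\Cat{C}$ by hypothesis (that $(\Cat{C},\perG,\unoG,\copier{},\discharger{})$ is a cd category). For instance, coassociativity in $\Cat{C}_T$ reads $\copier{X}^\flat \dcomp_T (\copier{X}^\flat \perG_T \id{X}^\flat) \dcomp_T \assoc{X}{X}{X}^{\flat} = \copier{X}^\flat \dcomp_T (\id{X}^\flat \perG_T \copier{X}^\flat)$, and applying $J$ to the coassociativity law in $\Cat{C}$ gives precisely this, using $J(f)\dcomp_T J(g) = J(f\dcomp g)$, $J(f)\perG_T J(g) = J(f\perG g)$, and $J(\assoc{X}{X}{X}) = \assoc{X}{X}{X}^\flat$. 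The same template handles the counit laws, cocommutativity, and the four coherence equations for $\copier{X\perG Y}$, $\discharger{X\perG Y}$, $\copier{\unoG}$, $\discharger{\unoG}$. One small point to spell out is that $J(\id X)$ is the Kleisli identity $\eta_X$, i.e.\ $\id X^\flat$, which holds by the monad unit law.

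\textbf{Main obstacle.} The only genuinely non-routine ingredient is Step~1 — establishing, with the correct handling of the coherence data, that the Kleisli category of a symmetric monoidal monad is symmetric monoidal and that $\eta$ is a strict monoidal functor onto it. Since this is a well-known fact (and the relevant diagrams for $m$ are already collected in Figure~\ref{fig:monoidalmonads}), I would state it as a lemma with a reference rather than reprove it, and then Step~2 is a mechanical ``apply $J$'' argument. I would note that naturality of $\copier{}^\flat,\discharger{}^\flat$ is \emph{not} claimed here — that is exactly where fp-ness can fail in $\Cat{C}_T$, consistent with the discussion preceding the proposition — so the proof genuinely stops at the cd level.
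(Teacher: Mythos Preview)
The paper does not supply its own proof of this proposition: it is attributed to \cite{cioffogadduccitrotta} and simply invoked. Your argument is correct and is the standard one. The only point worth tightening is the claim that $J$ is \emph{strict} symmetric monoidal: strictness on the tensor follows from the axiom ($\eta$-mon.tr.) in Figure~\ref{fig:monoidalmonads}, giving $J(f)\perG_T J(g) = (f\perG g)\dcomp(\eta\perG\eta)\dcomp m = (f\perG g)\dcomp\eta = J(f\perG g)$, and the structural isomorphisms of $\Cat{C}_T$ are by construction the $J$-images of those of $\Cat{C}$; you state this but it is the one place where a reader might want the one-line computation spelled out. With that in hand, Step~2 is indeed purely mechanical.
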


In a cd category $(\Cat{C},\perG,\unoG,\copier{},\discharger{})$, an arrow $f\colon X\to Y$ is called \emph{deterministic} if $f;\copier{Y}= \copier{X};(f\perG f)$ and \emph{total} if
$f;\discharger{Y}= \discharger{X}$. The category of total deterministic morphisms is denoted by $\mathrm{Map}(\Cat{C})$ and it is a finite product category.

Since  \((\Cat{Set}, \times, \uno)\) is a finite product category and \(\subdistr\) is a symmetric monoidal monad on  \((\Cat{Set}, \times, \uno)\) (Example \ref{ex:monoidal monads}) then, by Proposition \ref{prop: Kleisli gs e cogs}, the Kleisli category $\Cat{Set}_{\subdistr}$ is a copy discard category: monoidal product and comonoids are exactly those defined in \eqref{ex:products} and \eqref{ex:comonoids}. In $\Cat{Set}_{\subdistr}$, deterministic and total morphism are simply functions.

\section{Rig Categories}\label{sec:rigcategories}


We have seen that $\Cat{Set}_{\subdistr}$ carries two monoidal categories $(\Cat{Set}_{\subdistr},\per,\uno)$ and $(\Cat{Set}_{\subdistr},\piu, \zero)$. The appropriate categorical setting to study their interactions is provided by rig categories~\cite{laplaza_coherence_1972}. 
An extensive treatment was recently given in~\cite{johnson2021bimonoidal}, from which we borrow most of the notation in this paper.

\begin{definition}\label{def:rig}
    A \emph{rig category} is a category $\Cat{C}$ with 
    two symmetric monoidal structures $(\Cat{C}, \per, \uno)$ and 
    $(\Cat{C}, \piu, \zero)$ and natural isomorphisms 
    \[ \dl{X}{Y}{Z} \colon X \per (Y \piu Z) \to (X \per Y) \piu (X \per Z) \qquad  \annl{X} \colon \zero \per X \to \zero \]
    \[ \dr{X}{Y}{Z} \colon (X \piu Y) \per Z \to (X \per Z) \piu (Y \per Z) \qquad \annr{X} \colon X \per \zero \to \zero \]
satisfying the coherence axioms in Figure~\ref{fig:rigax}.  A rig functor is a functor that is strong symmetric monoidal for both $(\Cat{C}, \per, \uno, \symmt)$ and $(\Cat{C}, \piu, \zero, \symmp)$ and that preserves the four above natural isomorphisms.
\end{definition}



In this work, we focus on those rig categories where $\piu$ is a coproduct. 

\begin{definition}\label{def:fcrig}
A \emph{finite coproduct (fc) rig category} is a rig category $(\Cat{C}, \piu, \zero, \per, \uno)$  such that $(\Cat{C}, \piu, \zero)$ is a finite coproducts category. A \emph{morphism of fc rig categories} is both a rig functor and a morphism of fc categories. 
\end{definition}
By definition, in a finite coproduct rig category, every object $X$ has a natural monoid $\codiag{X}\colon X \piu X \to X $ and $\cobang{X}\colon \zero \to X$. The following result from \cite{bonchi2023deconstructing} illustrates the interactions of $\codiag{}$ and $\cobang{}$ with $\per$.
 \begin{lemma}\label{prop:fcrig}
	In any finite coproduct rig category the following equalities hold. 
\[
	\codiag{X\per Y}=\Idr{X}{Y}{Y};(\codiag{X}\per \id{Y})	=\Idl{X}{Y}{Y};(\id{X}\per \codiag{Y})\qquad \qquad
	\cobang{X\per Y}=\Iannl{Y};(\cobang{X}\per\id{Y})=\Iannr{X};(\id{X}\per\cobang{Y})
	\]
\end{lemma}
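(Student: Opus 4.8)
The plan is to exploit that, in a finite coproduct rig category, $\piu$ is the categorical coproduct (Definition~\ref{def:fcrig}): writing $\linj{A},\rinj{A}\colon A\to A\piu A$ for the two injections, one has $\linj{A};\codiag{A}=\rinj{A};\codiag{A}=\id{A}$, the object $\zero$ is initial, and $\cobang{A}\colon\zero\to A$ is its unique arrow into $A$. Hence a morphism out of $A\piu A$ is determined by its composites with $\linj{A}$ and $\rinj{A}$, and any two morphisms out of $\zero$ are equal.

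The two $\cobang{}$ equalities are then immediate: $\cobang{X\per Y}$, $\Iannl{Y};(\cobang{X}\per\id{Y})$ and $\Iannr{X};(\id{X}\per\cobang{Y})$ all have type $\zero\to X\per Y$ (the annihilators give $\zero\cong\zero\per Y$ and $\zero\cong X\per\zero$), so they coincide by initiality of $\zero$. For the equality $\codiag{X\per Y}=\Idr{X}{X}{Y};(\codiag{X}\per\id{Y})$ I would precompose both sides with $\linj{X\per Y}$ and $\rinj{X\per Y}$: the left-hand side yields $\id{X\per Y}$ by definition of the codiagonal, and the right-hand side reduces to the compatibility of the right distributor with the coproduct injections, $\linj{X\per Y};\Idr{X}{X}{Y}=\linj{X}\per\id{Y}$ (and similarly with $\rinj{}$); granted this, $\linj{X\per Y};\Idr{X}{X}{Y};(\codiag{X}\per\id{Y})=(\linj{X};\codiag{X})\per\id{Y}=\id{X\per Y}$, and dually for $\rinj{}$, so the two morphisms agree on both injections and are therefore equal. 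The remaining equality $\codiag{X\per Y}=\Idl{X}{Y}{Y};(\id{X}\per\codiag{Y})$ is the mirror image, with $\dl{X}{Y}{Y}$ and the injections of $Y\piu Y$ in place of those for $X\piu X$ (alternatively, it can be deduced from the first through the coherence axiom relating the left and right distributors and the symmetries $\symmt$ of Figure~\ref{fig:rigax}).

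The only genuinely technical point is the compatibility between the right distributor and the injections, which I would extract from the rig coherence axioms. Write $\linj{X}=\Irunitp{X};(\id{X}\piu\cobang{X})$, that is, the right unitor of $\piu$ followed by $\id{X}\piu\cobang{X}$, using $\cobang{X}\colon\zero\to X$. Then naturality of the right distributor in its middle argument rewrites $((\id{X}\piu\cobang{X})\per\id{Y});\dr{X}{X}{Y}$ as $\dr{X}{\zero}{Y};(\id{X\per Y}\piu(\cobang{X}\per\id{Y}))$; the Laplaza coherence condition governing $\dr{X}{\zero}{Y}$ (its interaction with the right unitor of $\piu$ and the annihilator $\annl{Y}$) rewrites $(\Irunitp{X}\per\id{Y});\dr{X}{\zero}{Y}$ as $\Irunitp{X\per Y};(\id{X\per Y}\piu\Iannl{Y})$; and finally $\Iannl{Y};(\cobang{X}\per\id{Y})=\cobang{X\per Y}$ (the $\cobang{}$ identity just proved) collapses the whole composite to $\Irunitp{X\per Y};(\id{X\per Y}\piu\cobang{X\per Y})=\linj{X\per Y}$, which is exactly $(\linj{X}\per\id{Y});\dr{X}{X}{Y}=\linj{X\per Y}$, the desired compatibility; the case of $\rinj{}$ is symmetric, via the left unitor of $\piu$. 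Everything else is universal-property bookkeeping. Alternatively, since this compatibility is a standard fact about rig categories whose additive monoidal structure is a coproduct, one may simply invoke~\cite{johnson2021bimonoidal} or~\cite{bonchi2023deconstructing}.
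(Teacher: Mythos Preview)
Your proof is correct; the paper does not prove this lemma but simply attributes it to \cite{bonchi2023deconstructing}, an option you already anticipate in your final sentence. Your self-contained argument via initiality of $\zero$ and the universal property of the coproduct, with the key compatibility $(\linj{X}\per\id{Y});\dr{X}{X}{Y}=\linj{X\per Y}$ extracted from naturality of $\delta^r$ together with the coherence laws \eqref{eq:dl8} and \eqref{eq:rigax11}, is the standard direct proof.
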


We expect the following result to be known but we cannot find a reference for it. 
\begin{proposition}\label{prop: Kleisli of distributive monoidal}
Let $(\Cat{C},\piu,\zero,\per, \uno)$ be a finite coproduct rig category and $T \colon \Cat{C}\to \Cat{C}$ be a symmetric monoidal monad over $(\Cat{C},\per, \uno)$. Then the Kleisli category $\Cat{C}_T$ is a finite coproduct rig category. 
\end{proposition}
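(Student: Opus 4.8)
The plan is to obtain the rig structure on $\Cat{C}_T$ by transporting the one of $\Cat{C}$ along the Kleisli inclusion $J\colon\Cat{C}\to\Cat{C}_T$ (the identity on objects, $f\mapsto f;\eta$), so that everything except naturality of the distributors reduces to the observation that $J$ is a functor.

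First I would record the ambient structure. By Section~\ref{sec:CD} (see \cite{borceux2}) the category $(\Cat{C}_T,\piu,\zero)$ is a finite coproduct category, and $J$, being a left adjoint, preserves finite coproducts: concretely $J(f\piu g)=Jf\piu Jg$ and the injections and copairings of $\Cat{C}_T$ are the $J$-images of those of $\Cat{C}$. Since $T$ is symmetric monoidal over $(\Cat{C},\per,\uno)$, the category $(\Cat{C}_T,\perG_T,\uno)$ is symmetric monoidal; here $J$ is moreover \emph{strict} symmetric monoidal, because $J(f\per g)=Jf\perG_T Jg$ --- indeed $(f;\eta)\per(g;\eta);m=(f\per g);(\eta\per\eta);m=(f\per g);\eta$, using bifunctoriality of $\per$ and the monoidal-monad unit law $(\eta\per\eta);m=\eta$ --- and the associator, unitors and symmetry of $\perG_T$ are the $J$-images of those of $\per$. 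Thus $J$ is an identity-on-objects functor that is strict monoidal for $\per$, preserves $\piu$, and sends the structural isomorphisms of the two monoidal structures of $\Cat{C}$ to those of $\Cat{C}_T$.

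Second, I would define the remaining rig data on $\Cat{C}_T$ as the $J$-images of that of $\Cat{C}$: $\dl{X}{Y}{Z}:=J(\dl{X}{Y}{Z})$, $\dr{X}{Y}{Z}:=J(\dr{X}{Y}{Z})$, $\annl{X}:=J(\annl{X})$, $\annr{X}:=J(\annr{X})$. Being $J$-images of isomorphisms, these are isomorphisms in $\Cat{C}_T$. Every coherence diagram of Figure~\ref{fig:rigax} for $\Cat{C}_T$ is obtained by composition, $\perG_T$ and $\piu$ from these isomorphisms together with the structural isomorphisms of the two monoidal structures, each of which is a $J$-image by the previous paragraph; hence the diagram is the $J$-image of the corresponding (commuting) diagram of $\Cat{C}$, so it commutes. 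The only step that is not pure transport is naturality of the four transformations against \emph{arbitrary} Kleisli arrows, which need not be $J$-images. For $\annl{X}\colon\zero\perG_T X\to\zero$ and $\annr{X}\colon X\perG_T\zero\to\zero$ this is immediate: each is an isomorphism onto the initial object $\zero$ of $\Cat{C}_T$, so its source is itself initial and any naturality square with that source commutes by uniqueness of maps out of an initial object. For $\dl{}{}{}$ (and dually $\dr{}{}{}$) I would use the coproduct universal property: by the standard fact that, when $\piu$ is a coproduct, the inverse distributor is the canonical comparison map, $\Idl{X}{Y}{Z}=\copairing{\id{X}\per\iota_Y,\ \id{X}\per\iota_Z}$ in $\Cat{C}$ (a consequence of the coherence axioms of Figure~\ref{fig:rigax}; cf.\ the analogous Lemma~\ref{prop:fcrig}), applying $J$ gives $\Idl{X}{Y}{Z}=\copairing{\id{X}\perG_T\iota_Y,\ \id{X}\perG_T\iota_Z}$ in $\Cat{C}_T$; equivalently $\dl{X}{Y}{Z}$ exhibits $X\perG_T(Y\piu Z)$ as the coproduct of $X\perG_T Y$ and $X\perG_T Z$ with injections $\id{X}\perG_T\iota_Y$ and $\id{X}\perG_T\iota_Z$. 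A naturality square for $\dl{}{}{}$ --- in any of its three variables, against any Kleisli arrow --- then commutes iff it does after precomposition with these two injections, and there the equation collapses, by the interchange law of the bifunctor $\perG_T$ and the definition of $\piu$ on arrows, to a triviality. I would spell this out once for the variable $X$ (the apparently hardest case, since $\perG_T$ on that side involves $m$): given $f\colon X\to X'$ in $\Cat{C}_T$, precomposing both legs of the square with $\id{X}\perG_T\iota_Y$ and using $(\id{X}\perG_T\iota_Y);(f\perG_T\id{Y\piu Z})=(f\perG_T\id{Y});(\id{X'}\perG_T\iota_Y)$ turns both of them into $(f\perG_T\id{Y})$ followed by the coproduct injection $X'\perG_T Y\to(X'\perG_T Y)\piu(X'\perG_T Z)$; the case of $\iota_Z$ is symmetric. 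Together with $(\Cat{C}_T,\piu,\zero)$ being a finite coproduct category, this yields the statement by Definition~\ref{def:fcrig}.

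The main obstacle is this last step: a generic Kleisli arrow is not in the image of $J$, so naturality of $\dl{}{}{}$ and $\dr{}{}{}$ genuinely has to be verified, and a direct computation unfolding $\perG_T$ through $m$ and the Kleisli composition is unpleasant --- the point that makes it work is the observation that the distributor already turns the $\per$-side into a coproduct, after which the universal property forces naturality in every variable for free. A secondary care point is making sure the identification $\Idl{X}{Y}{Z}=\copairing{\id{X}\per\iota_Y,\id{X}\per\iota_Z}$ (and its $\dr{}{}{}$-analogue) is available in $\Cat{C}$, so that it transports cleanly along $J$.
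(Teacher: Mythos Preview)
Your proof is correct and takes a genuinely different, more elementary route than the paper's. The paper proves naturality of the distributors in $\Cat{C}_T$ by direct computation (Lemmas~\ref{lemma: nat dist Kleisli 2} and~\ref{lemma:nat dist Kleisli 3}): it unfolds the Kleisli naturality square through $\eta$, $\mu$, the double strength $s\colon T(X)\per T(Y)\to T(X\per Y)$ and the canonical $k\colon T(X)\piu T(Y)\to T(X\piu Y)$, and chases a sizeable diagram in $\Cat{C}$. You sidestep all of this by observing that in any fc rig category the inverse distributor \emph{is} the canonical comparison map, $\Idl{X}{Y}{Z}=[\id{X}\per\iota_Y,\id{X}\per\iota_Z]$ --- this is indeed derivable from the derived law~\eqref{eq:dl7} together with Lemma~\ref{prop:fcrig}, and the paper actually proves and uses the $\dr{}{}{}$-analogue $(\iota_X\per\id{Z});\dr{X}{Y}{Z}=\iota_{X\per Z}$ at the end of Lemma~\ref{lemma: nat dist Kleisli 2}, just not as the organising principle. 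Once that identity is transported along $J$, naturality against arbitrary Kleisli arrows follows purely from the coproduct universal property in $\Cat{C}_T$ and bifunctoriality of $\perG_T$, with no appeal to the monoidal-monad laws at all. Your argument is conceptually cleaner and uniform in all three variables; the paper's computation makes the interaction of the monad axioms with distributivity explicit, which is heavier here but could be informative in settings where $\piu$ is not a genuine coproduct in $\Cat{C}_T$.
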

The fact that $\Cat{C}_T$ has coproduct easily follows by standard arguments. The hard part of the proof, in Appendix \ref{sec:rigcategories}, consists in proving naturality of the distributors $\delta^l$ and $\delta^r$.
%
%
%
Before combining fc rig categories with cd categories in Section \ref{sec:fccdrig}, it is convenient to fix the following definition.
%
%
\begin{definition}\label{def:fcfprig}
A \emph{finite coproduct-finite product (fc-fp) rig category} is a rig category $(\Cat{C}, \piu, \zero, \per, \uno)$  such that $(\Cat{C}, \piu, \zero)$ is a finite coproduct category and $(\Cat{C}, \per, \uno)$ is a finite product category. 
\end{definition}
\begin{remark}\label{remark: distributive arrows in distriutive categories}
Note that, by definition, any fc-fp rig category has both finite coproducts (given by $\piu$) and products ($\per$). The reader may wonder whether any category with finite products and finite coprodutcs is also a fc-fp rig category. This is not the case, since the canonical morphism $(X\times Y)+ (X\times Z) \to X\times (Y + Z)$ induced by the universal properties of (co)products is not necessarily an iso. Categories with this property are known as \emph{distributive categories} \cite{carboni1993introduction}: note that the only difference with the structures in Definition \ref{def:fcfprig} consists in the fact that in fc-fp rig categories the distributors $\delta^l$ and $\delta^r$ are not forced to be the canonical ones.  
However, as shown in \cite{lacknoncanonical}, the assumption of  natural isomorphisms $\delta^l$ and $\delta^r$ implies that the canonical morphisms are isos. Hence, every fc-fp rig category is also distributive. A similar mismatch arises amongst fc rig categories and \emph{monoidal distributive categories} \cite{annalabella}.
\end{remark}

%
%
%

\subsection{Finite Coproduct-Copy Discard Rig categories }\label{sec:fccdrig}
In Section \ref{sec:CD}, we have seen how copy discard categories generalise finite product categories. Here, we introduce finite coproduct-copy discard rig categories that, similarly, generalise finite coproduct-finite product rig categories in Definition \ref{def:fcfprig}.

 \begin{definition}\label{def: distributive gs-monoidal category}
    A \emph{finite coproduct-copy discard (fc-cd) rig category} is a rig category $(\Cat{C}, \piu, \zero, \per, \uno)$ and, for all objects $X$, $\copier{X} \colon X \to X \per X$, $\discharger{X} \colon X \to \uno$, $\codiag{X}\colon X \piu X \to X$ and $\cobang{X} \colon \zero \to X$ such that
     \begin{center}
    (a) $(\Cat{C}, \piu, \zero, \codiag{}, \cobang{})$ is a fc category, \hspace{1cm}  (b) $(\Cat{C}, \per, \uno, \copier{}, \discharger{})$ is a cd category and
    \end{center}        
     (c) monoids and comonoids interact according to the following coherence condition (see also Fig. \ref{fig:cohfccd}). 
\begin{equation}\label{equation: coherence1}
           \discharger{X\oplus Y}=(\discharger{X} \oplus \discharger{Y});\codiag{\uno} \; \;
 \copier{X\oplus Y}=(\Irunitp{X}\piu\Ilunitp{Y} ); ((\copier{X} \oplus \cobang{X\per Y}) \oplus( \cobang{Y \per X} \oplus \copier{Y}));(\Idl{X}{X}{Y}\oplus \Idl{Y}{X}{Y}); (\Idr{X}{Y}{X\oplus Y})
\end{equation}  
A morphism of fc-cd rig categories is a rig functor preserving monoids and comonoids. 
\end{definition}

\begin{figure}
\[
\begin{tikzcd}
	{X\piu Y} && {(X\piu Y) \per (X\piu Y)} \\
	 && {(X \per (X\piu Y))\piu (Y\per (X\piu Y))} \\
	{(X\piu 0 ) \piu (0 \piu Y)} && {((X\per X) \piu (X\per Y)) \piu ((Y\per X) \piu (Y \per Y))}
	\arrow["{\copier{X\piu Y}}", from=1-1, to=1-3]
	\arrow["{\Irunitp{X}\piu\Ilunitp{Y}  }"', from=1-1, to=3-1]
	\arrow["{(\Idr{X}{Y}{X\oplus Y})}"', from=2-3, to=1-3]
	\arrow["(\copier{X} \oplus \cobang{X\per Y}) \oplus( \cobang{Y \per X} \oplus \copier{Y})"',from=3-1, to=3-3]
	\arrow["{(\Idl{X}{X}{Y}\oplus \Idl{Y}{X}{Y})}"', from=3-3, to=2-3]
\end{tikzcd}
\qquad
\begin{tikzcd}
	{X \piu Y} & {\uno \piu \uno} \\
	& \uno
	\arrow["{\discharger{X}\piu \discharger{Y}}", from=1-1, to=1-2]
	\arrow["{\discharger{X\piu Y}}"', from=1-1, to=2-2]
	\arrow["{\codiag{\uno}}", from=1-2, to=2-2]
\end{tikzcd} \]
\caption{Diagrams for the coherence conditons in \eqref{equation: coherence1}.}\label{fig:cohfccd}
\end{figure}

The coherence conditions in \eqref{equation: coherence1} will be fundamental in Section \ref{sc:tapeTcd} for the diagrammatic language that we are going to introduce. Moreover, they guarantee several useful properties.

\begin{proposition}\label{prop:map}
    Let $\Cat{C}$ be a fc-cd rig category. Then, for all objects $X$, the followings hold:
    \begin{enumerate}
    \item $\codiag{X};\copier{X}= (\copier{X}\piu\copier{X});\codiag{X \per X}$,  $\quad \codiag{X};\discharger{X}=(\discharger{X} \piu \discharger{X});\codiag{\uno}$,
    $\quad \cobang{X};\copier{X}= ;\cobang{X \per X} \;$ and $\quad \cobang{X};\discharger{X}=\cobang{\uno}$; 
    \item Both $\codiag{X}$ and $\cobang{X}$ are functional and total, i.e., 
    \[\codiag{X};\copier{X}= \copier{X\piu X};(\codiag{X}\per \codiag{X})\text{,} \quad \codiag{X};\discharger{X}= \discharger{X\piu X}\text{,} \quad \cobang{X};\copier{X}= \copier{\zero};(\cobang{X}\per \cobang{X}) \quad \text{and} \quad \cobang{X};\discharger{X}= \discharger{\zero}\text{;} \]
    \item The category $\mathrm{Map}(\Cat{C})$ is a finite coproduct-finite product rig category.
    \end{enumerate}
  \end{proposition}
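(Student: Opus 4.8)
The plan is to prove the three items in order, since each relies essentially on the coherence conditions \eqref{equation: coherence1} together with the already-established structure: $\Cat{C}$ is a fc category for $\piu$ (so $\codiag{}$ and $\cobang{}$ are \emph{natural} transformations), a cd category for $\per$ (so $\copier{}$ and $\discharger{}$ are coherent comonoids), and a rig category (so the distributors $\delta^l,\delta^r$ and annihilators $\lambda^\bullet,\rho^\bullet$ are natural isomorphisms satisfying Laplaza's coherence). Throughout I would freely replace the strict-monoidal coherence isos for $\per$ by identities where convenient and keep careful track only of the $\piu$-structural isos, mirroring the style of the string-diagram calculations in Section~\ref{sec:CD}.

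\textbf{Item 1.} For the first equation $\codiag{X};\copier{X}=(\copier{X}\piu\copier{X});\codiag{X\per X}$, I would start from the second coherence law in \eqref{equation: coherence1} specialized to $Y=X$ and precompose with the codiagonal. The trick is that $\codiag{X}\colon X\piu X\to X$ postcomposed with $\Irunitp{X}\piu\Ilunitp{X}$ and then with the big copairing collapses, using naturality of $\delta^l,\delta^r$ and the definitions of $\codiag{}$ on $X\piu X$, to $(\copier{X}\piu\copier{X});\codiag{X\per X}$; here one must also observe that the "cross terms'' $\cobang{X\per X}$ get absorbed because $\codiag{}$ is natural and $\cobang{}$ is the coprojection-free map out of $\zero$. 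Alternatively, and probably more cleanly, I would derive all four equations of Item~1 simultaneously by using naturality of $\copier{}$ and $\discharger{}$ is \emph{not} available (since $\per$ is only cd, not fp), so instead I would use that $\codiag{X}$ and $\cobang{X}$ are morphisms of the \emph{coherent comonoid} structure — which is exactly what the coherence condition \eqref{equation: coherence1} encodes via the diagrams in Figure~\ref{fig:cohfccd}. The equations $\codiag{X};\discharger{X}=(\discharger{X}\piu\discharger{X});\codiag{\uno}$ and $\cobang{X};\discharger{X}=\cobang{\uno}$ are essentially immediate from the second diagram of Figure~\ref{fig:cohfccd} and the fact that $\cobang{}$ is the unique map from $\zero$. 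The equation $\cobang{X};\copier{X}=\cobang{X\per X}$ (the statement has a typo, ``$=\,;\cobang{X\per X}$'') follows from the first coherence law by setting $X=\zero$, using $\annl{},\annr{}$ to kill the relevant summands, and uniqueness of maps out of $\zero$ in the fc category $(\Cat{C},\piu,\zero)$.

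\textbf{Item 2.} Functionality and totality of $\codiag{X}$ and $\cobang{X}$ mean: $\codiag{X};\copier{X}=\copier{X\piu X};(\codiag{X}\per\codiag{X})$, $\codiag{X};\discharger{X}=\discharger{X\piu X}$, and the analogues for $\cobang{X}$. Here the plan is to combine Item~1 with Lemma~\ref{prop:fcrig} and the comonoid coherence \eqref{eq:comonoidcoherence}: the coherence law \eqref{eq:comonoidcoherence} gives a formula for $\copier{X\piu X}$ in terms of $\copier{X}$'s and the $\per$-structural isos, while Lemma~\ref{prop:fcrig} gives $\codiag{X\per X}$ in terms of $\codiag{X}$ and the distributors; chaining these with the first equation of Item~1 yields $\copier{X\piu X};(\codiag{X}\per\codiag{X})$ after a coherence bookkeeping argument. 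For totality, $\codiag{X};\discharger{X}=(\discharger{X}\piu\discharger{X});\codiag{\uno}$ by Item~1, and $(\discharger{X}\piu\discharger{X});\codiag{\uno}=\discharger{X\piu X}$ is precisely the first coherence law of \eqref{equation: coherence1} (the $\discharger{}$ part, i.e.\ the second diagram of Figure~\ref{fig:cohfccd}). The $\cobang{}$ cases again reduce to uniqueness of maps out of $\zero$: both sides are morphisms $\zero\to X\per X$ (resp.\ $\zero\to\uno$), hence equal, once one checks the codomains using $\annl{},\annr{}$ and the coherence formula for $\copier{\zero}$.

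\textbf{Item 3.} With Items~1--2 in hand, $\mathrm{Map}(\Cat{C})$ is a finite coproduct-finite product rig category. The key point is that $\mathrm{Map}(\Cat{C})$ — total deterministic morphisms — is closed under $\per$ (standard for cd categories) and, by Item~2, under $\piu$ as well, and it contains the structural isos of both monoidal structures (these are total and deterministic: immediate from naturality and the comonoid coherence), the distributors and annihilators, and the morphisms $\copier{},\discharger{},\codiag{},\cobang{}$ themselves — the last two by Item~2. Thus $(\mathrm{Map}(\Cat{C}),\per,\uno,\copier{},\discharger{})$ is a finite \emph{product} category (copy and discard are now natural in $\mathrm{Map}(\Cat{C})$ because every arrow there is deterministic/total, which is the definition of being in $\mathrm{Map}$), $(\mathrm{Map}(\Cat{C}),\piu,\zero,\codiag{},\cobang{})$ is a finite coproduct category (inherited, since $\codiag{},\cobang{}$ were already natural in $\Cat{C}$, a fortiori in the subcategory), and the rig coherence axioms of Figure~\ref{fig:rigax} hold in $\mathrm{Map}(\Cat{C})$ because they hold in $\Cat{C}$ and the inclusion is (identity-on-objects, faithful and) compatible with both monoidal products and the distributors. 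The main obstacle I anticipate is Item~2, specifically the functionality equation $\codiag{X};\copier{X}=\copier{X\piu X};(\codiag{X}\per\codiag{X})$: it requires threading together three separate coherence results (the fc-cd coherence \eqref{equation: coherence1}, the comonoid coherence \eqref{eq:comonoidcoherence}, and Lemma~\ref{prop:fcrig}) and the Laplaza coherence for the distributors, so the bookkeeping — best carried out in string/tape-diagram form — is where the real work lies; everything else is either a direct reading-off of a coherence diagram or a uniqueness-out-of-$\zero$ argument.
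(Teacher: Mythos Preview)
Your overall strategy is sound, but there are two issues worth flagging.

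First, you overcomplicate Item~1. The four equations there are nothing more than naturality of $\codiag{}$ and $\cobang{}$ --- which holds by definition of a fc category --- instantiated at the particular morphisms $\copier{X}$ and $\discharger{X}$. For instance, $\codiag{X};\copier{X}=(\copier{X}\piu\copier{X});\codiag{X\per X}$ is literally the naturality square of $\codiag{}$ at $\copier{X}\colon X\to X\per X$. No appeal to the coherence condition \eqref{equation: coherence1} is needed; the paper's proof is the one-liner ``the first point follows immediately by naturality of $\codiag{X}$ and $\cobang{X}$.'' You seem to have momentarily confused which transformations are natural: $\copier{}$ and $\discharger{}$ are not, but $\codiag{}$ and $\cobang{}$ are, and that is exactly what is being used.

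Second, and more seriously, your Item~3 glosses over the real difficulty. You assert that the distributors and annihilators lie in $\mathrm{Map}(\Cat{C})$, but in a general cd category being an isomorphism does \emph{not} automatically make a morphism deterministic or total; there is no coherence axiom relating $\copier{}$ directly to $\delta^r$ or $\delta^l$. The paper explicitly identifies functionality of $\dr{X}{Y}{Z}$ as ``the most difficult part'' of the whole proposition and devotes a substantial computation to it, using the coherence condition \eqref{equation: coherence1} for $\copier{X\piu Y}$, naturality of $\delta^r$, and Lemma~\ref{prop:fcrig}; the argument is comparable in length to the one for Item~2. Your anticipated obstacle (functionality of $\codiag{X}$) is indeed nontrivial, and the paper handles it roughly along the lines you sketch --- via \eqref{equation: coherence1}, not \eqref{eq:comonoidcoherence}, which concerns $\copier{X\per Y}$ rather than $\copier{X\piu Y}$ --- together with Lemma~\ref{prop:fcrig}. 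But it is not the only substantial piece; the distributors require their own argument.
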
 
Note that in the last point, $\mathrm{Map}(\Cat{C})$ is defined as for cd categories in Section \ref{sec:fccdrig}. Again, the hard part of the proof consists in showing that the distributors are functional and total.


\begin{lemma}\label{lemma:distributive}
    Every fc-fp rig category is a fc-cd rig category.
\end{lemma}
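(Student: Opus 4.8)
The plan is the following. Let $\Cat{C}$ be a fc-fp rig category, with monoid $(\codiag{},\cobang{})$ for $\piu$ and comonoid $(\copier{},\discharger{})$ for $\per$. Conditions (a) and (b) of Definition~\ref{def: distributive gs-monoidal category} hold by hypothesis: $(\Cat{C},\piu,\zero,\codiag{},\cobang{})$ is a finite coproduct category, and a finite product category is, by Definition~\ref{def:fp}, in particular a copy discard category, so $(\Cat{C},\per,\uno,\copier{},\discharger{})$ is cd. Hence only the two coherence equations of \eqref{equation: coherence1} in condition (c) remain to be checked.

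For the equation $\discharger{X\piu Y}=(\discharger{X}\piu\discharger{Y});\codiag{\uno}$, I would use that in a finite product category $\uno$ is a terminal object: naturality of $\discharger{}$ applied to an arbitrary $g\colon Z\to\uno$ gives $g=g;\discharger{\uno}=\discharger{Z}$ (using $\discharger{\uno}=\id{\uno}$ from the coherence axioms), so every hom-set into $\uno$ is a singleton; in particular the two parallel morphisms $X\piu Y\to\uno$ appearing in the equation are equal.

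For the equation on $\copier{X\piu Y}$, since $(\Cat{C},\piu,\zero)$ is a finite coproduct category it suffices to verify the identity after precomposition with the injections $\linj{X}\colon X\to X\piu Y$ and $\rinj{Y}\colon Y\to X\piu Y$; I treat $\linj{X}$, the case of $\rinj{Y}$ being symmetric. On the left, naturality of $\copier{}$ gives $\linj{X};\copier{X\piu Y}=\copier{X};(\linj{X}\per\linj{X})$. On the right, one precomposes the long composite by $\linj{X}$ and simplifies: the injection slides through $\Irunitp{X}\piu\Ilunitp{Y}$ into the leftmost summand $X\piu\zero$ (factoring as $\Irunitp{X}$ followed by a coproduct injection); the summand-wise map then sends it via $\copier{X}$ into the $X\per X$ corner of the four-fold coproduct, the $\cobang{}$-summands contributing nothing; and finally the composite $(\Idl{X}{X}{Y}\piu\Idl{Y}{X}{Y});\Idr{X}{Y}{X\piu Y}$, together with the unitors, reassembles the four-fold coproduct into $(X\piu Y)\per(X\piu Y)$, restricting on the $X\per X$ corner to exactly $\linj{X}\per\linj{X}$. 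Each step is an instance of naturality of $\delta^l$, $\delta^r$ and of the structural isomorphisms, together with the rig coherence axioms of Figure~\ref{fig:rigax}; assembling them shows that $\linj{X}$ composed with the right-hand side of \eqref{equation: coherence1} equals $\copier{X};(\linj{X}\per\linj{X})$, which together with the symmetric computation for $\rinj{Y}$ finishes the proof by the universal property of $X\piu Y$.

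The step I expect to be the main obstacle is this last simplification: one must show that, when restricted along the coproduct injections, the distributors $\delta^l$ and $\delta^r$ behave as the canonical coproduct comparison maps, even though --- as recalled in Remark~\ref{remark: distributive arrows in distriutive categories} --- they need not be canonical on the nose. This is where the full set of rig coherence axioms is used: the axioms relating $\delta^l,\delta^r$ to the $\zero$- and unit-structure determine $\delta^l_{X,Y,\zero}$, $\delta^l_{X,\zero,Z}$ and their $r$-variants, from which the required restriction identities follow; granting this, the remaining equality is the familiar decomposition of the diagonal on a coproduct that holds in any distributive category.
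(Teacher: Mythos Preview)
Your proposal is correct and follows essentially the same approach as the paper's proof. The paper likewise dispatches the $\discharger{}$-coherence via naturality, and for the $\copier{}$-coherence it too reduces to a componentwise check on the injections (phrased as precomposing with the decomposition $\id{X\piu Y}=[\linj{X},\rinj{Y}]$ and then using naturality of $\codiag{}$ to split the verification), invoking exactly the rig axioms \eqref{eq:dl7}, \eqref{eq:dl8} and naturality of $\delta^r$ that you anticipated would be needed to control the distributors along injections.
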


In Proposition \ref{prop: Kleisli of distributive monoidal}, we have seen that for a commutative monad $T$ over a finite coproduct rig category, its Kleisli category $\Cat{C}_T$ is again a finite coproduct rig category. By Proposition \ref{prop: Kleisli gs e cogs},  $\Cat{C}_T$ is also a cd category. It turns out that it is a fc-cd category.

\begin{proposition}\label{prop:Kleislifccd}
        Let $(\Cat{C},\piu,\zero,\per, \uno)$ be a fc-cd rig category and $T \colon \Cat{C}\to \Cat{C}$ be a symmetric monoidal monad over $(\Cat{C},\per, \uno)$. Then the Kleisli category $\Cat{C}_T$ is fc-cd rig.
\end{proposition}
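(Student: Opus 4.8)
The plan is to obtain items (a) and (b) of Definition~\ref{def: distributive gs-monoidal category} directly from the earlier propositions, and then to deduce the coherence conditions (c) in $\Cat{C}_T$ by transporting the corresponding identities of $\Cat{C}$ along the Kleisli inclusion $J\colon\Cat{C}\to\Cat{C}_T$, $J(f)\defeq f\dcomp\eta$.

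Since $\Cat{C}$ is fc-cd rig, $(\Cat{C},\piu,\zero,\per,\uno)$ is in particular a finite coproduct rig category, so Proposition~\ref{prop: Kleisli of distributive monoidal} gives that $\Cat{C}_T$ is a finite coproduct rig category, and by construction its monoids are the $\codiag{}^\flat,\cobang{}^\flat$ of \eqref{eq:copierflat}; this is item (a). Since $(\Cat{C},\per,\uno,\copier{},\discharger{})$ is a cd category, Proposition~\ref{prop: Kleisli gs e cogs} gives that $(\Cat{C}_T,\per_T,\uno,\copier{}^\flat,\discharger{}^\flat)$ is a cd category, which is item (b). So the only thing left to prove is that the coherence conditions \eqref{equation: coherence1} hold in $\Cat{C}_T$ for $\copier{}^\flat,\discharger{}^\flat,\codiag{}^\flat,\cobang{}^\flat$.

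For this I would first record that $J$ is a \emph{strict} rig functor that \emph{strictly} preserves the comonoid and monoid structures. It is the identity on objects; it is strict symmetric monoidal from $(\per,\uno)$ to $(\per_T,\uno)$, because the monoidal-monad law $(\eta\per\eta)\dcomp m=\eta$ yields $J(f\per g)=J(f)\per_T J(g)$ and $J$ carries the associator, unitors and symmetry of $\per$ to those of $\per_T$; it is strict symmetric monoidal for $\piu$ since $J$ is a left adjoint, hence preserves the chosen finite coproducts together with their structural isomorphisms; and it sends the distributors $\delta^l,\delta^r$ and annihilators $\lambda^\bullet,\rho^\bullet$ of $\Cat{C}$ to those of $\Cat{C}_T$ by the construction used in the proof of Proposition~\ref{prop: Kleisli of distributive monoidal}. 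Preservation of the (co)monoids is immediate from \eqref{eq:copierflat}: $J(\copier{X})=\copier{X}^\flat$, $J(\discharger{X})=\discharger{X}^\flat$, $J(\codiag{X})=\codiag{X}^\flat$ and $J(\cobang{X})=\cobang{X}^\flat$.

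Granting this, the conclusion is formal: the two equations \eqref{equation: coherence1} hold in $\Cat{C}$ because $\Cat{C}$ is fc-cd rig, and applying $J$ to each side — using that $J$ preserves $\dcomp$, turns $\per$ into $\per_T$, $\piu$ into $\piu$, each structural isomorphism into its $\Cat{C}_T$-counterpart, and each of $\copier{},\discharger{},\codiag{},\cobang{}$ into its $\flat$-version — produces exactly the instances of \eqref{equation: coherence1} in $\Cat{C}_T$. The main obstacle is thus not in the reasoning but in the bookkeeping of the previous paragraph, i.e.\ checking that $J$ is strict with respect to \emph{all} the structural morphisms of both monoidal structures and the distributors; the genuinely delicate part of that (naturality of the Kleisli distributors) has, however, already been settled in Proposition~\ref{prop: Kleisli of distributive monoidal}, so here it reduces to unwinding definitions. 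If one prefers to avoid the functorial packaging, a direct verification of \eqref{equation: coherence1} in $\Cat{C}_T$ — substituting the definitions in \eqref{eq:copierflat} and commuting the unit $\eta$ past the structural maps with naturality and the monad and monoidal-monad laws — is equally routine.
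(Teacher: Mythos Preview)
Your proposal is correct and follows essentially the same approach as the paper: obtain (a) and (b) from Propositions~\ref{prop: Kleisli of distributive monoidal} and~\ref{prop: Kleisli gs e cogs}, then transport the coherence conditions \eqref{equation: coherence1} along the Kleisli inclusion (which the paper calls $\mathcal{K}$). The paper's proof is terser, merely stating that $\mathcal{K}$ sends $\copier{},\discharger{}$ to $\copier{}^\flat,\discharger{}^\flat$ and hence preserves \eqref{equation: coherence1}, whereas you spell out more carefully why $J$ is a strict rig functor preserving all the relevant structure; but the underlying argument is the same.
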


Since $\Cat{Set}$ is a finite coproduct-finite product rig category, by Lemma \ref{lemma:distributive} and Proposition \ref{prop:Kleislifccd}, the Kleisli category $\Cat{Set}_T$ of any symmetric monoidal monad over $(\Cat{Set}, \times, \uno)$ is a fc-cd rig category.

\section{Combining Rig Categories with Algebraic Theories}\label{sec:Trig}

In the previous section, we have illustrated how monoids for $\piu$ and comonoids for $\per$ interact within the Kleisli category of a symmetric monoidal monads. In Section \ref{sec:running}, we have seen that in $\Sets_{\subdistr}$, there are two more interesting arrows $\Br{+_p}_X$ and $\Br{\star}_X$ which arise from an algebraic theory presenting the monad $\subdistr$. In this section we focus on such arrows arising from algebraic theories.

\subsection{Preliminaries on Algebraic Theories}
A \emph{cartesian signature} is a couple $(\Sigma, \ar)$ where $\Sigma$ is a set of symbols and $\ar \colon \Sigma \to \mathbb{N}$ the arity function. 
An \emph{algebraic theory} is a pair $\mathbb{T} = (\Sigma, E)$, where $\Sigma$ is a cartesian signature and $E$ is a set of equations between $\Sigma$-terms, represented as pairs $(t_1, t_2)$. Hereafter we write $T_{\mathbb{T}}$ for the \emph{term monad} on $\Sets$, i.e., the monad assigning to each set $X$ the set of all $E$-equivalence classes of $\Sigma$-terms with variables in $X$, and  $\lawveret$ for the \emph{Lawvere theory} associated to $\mathbb{T}$, i.e., the strict finite product category freely generated by $(\Sigma, E)$. In this category, objects are natural numbers, and a morphism $n \to m$ is a tuple $\langle t_1, \dots, t_m \rangle$ 
of $\Sigma$-terms --modulo the equivalence relation induced by $E$-- where each term $t_i$ has variables in  $x_1, \dots, x_n$. The categorical product in $\lawveret$ is determined by addition on natural numbers. 

Particularly relevant for our exposition is $\aleph_0$, the strict finite coproduct category freely generated by a single object. In $\aleph_0$, objects are finite ordinals --we write $n$ for the ordinal $\{0,\dots , n-1\}$--  and arrows are functions. By freeness, there exists an identity on object functor $d\colon \aleph_0 \to \opcat{\lawveret}$ for any algebraic theory $\mathbb{T}$. 
Moreover, there is a morphism of finite coproduct categories $c\colon \aleph_0 \to \Sets_{T_{\mathbb{T}}}$ that is obtained by  composing the embedding $\aleph_0 \to \Sets$ with the canonical left adjoint $\Sets \to \Sets_{T_{\mathbb{T}}}$. We write $ \Sets_{T_{\mathbb{T}}}^{\aleph_0}$ for the the full subcategory of $\Sets_{T_{\mathbb{T}}}$ having as object only finite ordinals. It is crucial for our development the well known fact \cite{hylandpoweruniversalalgebra} that $\Sets_{T_{\mathbb{T}}}^{\aleph_0}$ and $\opcat{\lawveret}$ are isomorphic as finite coproduct categories. We call $i\colon \opcat{\lawveret} \to  \Sets_{T_{\mathbb{T}}}$ the morphism obtained by postcomposing the isomorphism $\opcat{\lawveret} \to \Sets_{T_{\mathbb{T}}}^{\aleph_0}$ with the embedding $\Sets_{T_{\mathbb{T}}}^{\aleph_0}\to \Sets_{T_{\mathbb{T}}}$. In summary, we are interested in the morphisms of finite coproduct categories illustrated below.
%
\begin{equation}\label{eq:Lawdiag}
\begin{tikzcd}[column sep=2em]
		\opcat{\lawveret} \ar[rr,"i"] &  &  \Sets_{T_{\mathbb{T}}}\\
		 & \aleph_0 \ar[ur,"c"']  \ar[ul,"d"]
\end{tikzcd}
\end{equation}
\begin{example}\label{ex:algebraic-theory-convex}
Consider $\mathbb{PCA}$, the algebraic theory of \emph{pointed convex algebras}: the signature $\Sigma$ consists of a binary operation $+_p$ for all $p\in(0,1)$ and a constant $\star$; the set of equations $E$ contains
\[(x_1+_q x_2)+_p x_3 =  x_1 +_{pq} (x_2+_{\frac{p(1-q)}{1-pq}} x_3) \qquad x_1+_px_2=x_2+_{1-p}x_1 \qquad  x_1+_p x_1 = x_1\]
It is well known, see e.g., \cite{swirszcz1974monadic,semadeni1973monads,jacobs2010convexity,doberkat2006eilenberg}, that the above theory provides a presentation for the monad $\subdistr$, in the sense that $T_\mathbb{PCA}$ and $\subdistr$ are isomorphic as monad. We will take the freedom to tacitly identify their Kleisli categories. 
Consider the morphism $x_1 +_p x_2 \colon 2 \to 1$ and $\star \colon 0 \to 1$ in $\Cat{L}_{\mathbb{PCA}}$. These are mapped, by $i$ into the function $i(+_p) \colon \{0\} \to \subdistr (\{0,1\})$ mapping the only element $0$ into the distribution where $0$ has probability $p$ and $1$ has probability $1-p$ and into $i(\star)\colon \{0\} \to \subdistr(\{\})$ mapping $0$ into the null subdistribution. 
\end{example}

\subsection{$\mathbb{T}$-rig categories}\label{ssec:rig}
We can now combine algebraic theories with finite coproduct rig categories (Definition \ref{def:fcrig}). 

We start by observing that in a rig category, for any natural number $n$, there exists an object $\objn{n}\defeq \PiuL[i=1][n]{\uno}$, where $n$-ary sums are defined as in Table \ref{tab:equationsonobject}.(b). Then, we show that, like for Kleisli categories,  for any  fc rig category $\Cat{C}$, there exists a canonical morphism of fc categories $c\colon \aleph_0 \to \Cat{C}$.
\begin{lemma}\label{lemma:aleph}
Let $\Cat{C}$ be a finite coproduct rig category. Then there exists a canonical morphism of finite coproduct categories $c\colon \aleph_0 \to \Cat{C}$ mapping each object $n$ of $\aleph_0$  into $\objn{n}$.
\end{lemma}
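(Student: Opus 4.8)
The plan is to build $c$ by exploiting the universal (freeness) property of $\aleph_0$: since $\aleph_0$ is the \emph{free} finite coproduct category on one object, a morphism of finite coproduct categories $c\colon \aleph_0 \to \Cat{C}$ is uniquely determined (up to the coherence data) by the choice of a single object of $\Cat{C}$ to be the image of the generating object $1$. The obvious choice is $c(1) \defeq \uno = \objn{1}$, and then $c(n) = \objn{n} = \Piu[i=1][n]{\uno}$ is forced, since $n = \Piu[i=1][n]{1}$ in $\aleph_0$ and a finite-coproduct-preserving functor must send this $n$-fold copower of the generator to the corresponding $n$-fold copower $\objn{n}$ of $\uno$ in $(\Cat{C},\piu,\zero)$ (using the chosen bracketing/order in Table~\ref{tab:equationsonobject}.(b)). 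The only thing that actually needs checking, therefore, is that $(\Cat{C},\piu,\zero,\codiag{},\cobang{})$ really is a finite coproduct category in the sense of Definition~\ref{def:fp} --- but this is immediate from Definition~\ref{def:fcrig}, which requires precisely that $(\Cat{C},\piu,\zero)$ be a finite coproduct category. Hence the hypotheses of the universal property are met and $c$ exists.

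Concretely, I would proceed as follows. First, recall (or cite, from the finite coproduct analogue of Fox's theorem, dualising Section~\ref{sec:CD}) that $\aleph_0$ is the finite coproduct category freely generated by one object: for every finite coproduct category $\Cat D$ and every object $D \in \Cat D$ there is an essentially unique morphism of finite coproduct categories $F\colon \aleph_0 \to \Cat D$ with $F(1) \cong D$. Second, apply this with $\Cat D = (\Cat{C},\piu,\zero,\codiag{},\cobang{})$ and $D = \uno$, obtaining $c\colon\aleph_0\to\Cat{C}$. Third, unwind what $c$ does on objects: in $\aleph_0$ the object $n$ is (canonically isomorphic to) the $n$-fold coproduct $1\piu\cdots\piu 1$; since $c$ strictly (or strongly, depending on the chosen presentation of $\aleph_0$ --- $\aleph_0$ is usually taken strict) preserves $\piu$ and $\zero$, we get $c(n) \cong \Piu[i=1][n]{\uno} = \objn{n}$, and by choosing the canonical representative on the nose we may take $c(n) = \objn{n}$. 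On arrows, a map $f\colon m\to n$ in $\aleph_0$ is a function between finite ordinals, i.e.\ a copairing of coprojections $\inj{f(j)}\colon \uno \to \objn{n}$, so $c(f)$ is the induced copairing in $\Cat{C}$; naturality of the coproduct structure makes this assignment functorial and coproduct-preserving automatically.

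I do not expect a genuine obstacle here, since the statement is essentially a direct invocation of freeness; the only subtlety is \emph{strictness bookkeeping}. Depending on how $\aleph_0$ and $\Cat{C}$ are set up, $c$ may only be a \emph{strong} (not strict) coproduct functor, so "$c(n) = \objn{n}$" should be read up to the coherent natural isomorphism comparing $c(m\piu n)$ with $c(m)\piu c(n)$; I would either absorb this into the definition of $\objn{n}$ via the fixed bracketing in Table~\ref{tab:equationsonobject}.(b), or simply state the equality up to canonical iso and note that coherence makes all such choices equivalent. The word "canonical" in the statement refers exactly to this: $c$ is the unique-up-to-unique-iso coproduct functor hitting $\uno$, and its value on $n$ is the canonical $n$-fold $\piu$-sum of units. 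Everything else --- functoriality, preservation of $\codiag{}$, $\cobang{}$ --- is part of the conclusion of the universal property and needs no separate argument.
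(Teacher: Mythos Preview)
Your approach is correct but takes a different route from the paper's. You invoke the universal property of $\aleph_0$ as the free finite coproduct category with respect to \emph{strong} coproduct-preserving functors into arbitrary (possibly non-strict) fc categories, and then deal with the resulting strictness bookkeeping by hand. The paper instead uses only the \emph{strict} universal property (which is how $\aleph_0$ is introduced in the text), and for that reason first passes through Mac Lane's strictification: it forms the strictification $\Cat{C_S}$ of $(\Cat{C},\piu,\zero)$, obtains a strict fc functor $t\colon \aleph_0 \to \Cat{C_S}$ sending $1$ to the list $[1]$ (hence $n$ to the length-$n$ list $[1,\dots,1]$), and then composes with the equivalence $s\colon \Cat{C_S}\to\Cat{C}$, which by construction sends $[1,\dots,1]$ to $\objn{n}$ on the nose. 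Your route is shorter and conceptually cleaner if one is willing to take the bicategorical freeness for granted; the paper's detour through strictification uses only the strict freeness actually stated and delivers $c(n)=\objn{n}$ as a literal equality without the ``up to canonical iso'' caveat you flag --- which matters, since the downstream uses (Definition~\ref{def:Trig cat} and the construction of $\Br{t}_X$ in \eqref{eq:natf}) want that equality on the nose.
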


This lemma justifies the following definition.

\begin{definition}\label{def:Trig cat}
Let $\mathbb{T}$ be an algebraic theory. A \emph{$\mathbb{T}$-rig category} is a finite coproduct rig category $\Cat{C}$ together with a morphism of fc categories $i\colon \opcat{\lawveret} \to \Cat{C}$ making the following diagram commute. A morphism of $\mathbb{T}$-rig categories from $i \colon \opcat{\lawveret} \to \Cat{C}$ to $j \colon \opcat{\lawveret} \to \Cat{D}$ is a morphism of fc rig categories $f\colon \Cat{C} \to \Cat{D}$ such that $ i;f = j$. We will write simply $\Cat{C}$ whenever $i\colon \opcat{\lawveret} \to \Cat{C}$ is clear from the context.
\[\begin{tikzcd}[column sep=2em]
		\opcat{\lawveret} \ar[rr,"i"] &  &  \Cat{C}\\
		 & \aleph_0 \ar[ur,"c"']  \ar[ul,"d"]
\end{tikzcd}\]
\end{definition}
Note that commutativity of the above diagrams entails that for all $t\colon n \to 1$ in $\lawveret$, $i(t)$ is an arrow of type $1 \to \objn{n}$ in $\Cat{C}$. This gives rise to a natural transformation:
for all objects $X$,
\begin{equation}\label{eq:natf}
\Br{t}_X \defeq 
\begin{tikzcd}
	X & {1\per X} & { (\PiuL[i=1][n]{1})\per  X} & {\PiuL[i=1][n]{ X}}
    \arrow["{\Ilunitt{X}}", from=1-1, to=1-2]
	\arrow["{ i(t)\per \id{X}}", from=1-2, to=1-3]
	\arrow["{\delta^r_{n,X}}", from=1-3, to=1-4]
\end{tikzcd}
\end{equation}
where ${\delta^r_{n,X}}$ is the arrow inductively defined as $  \delta^r_{0,X}\defeq \annl{X}$ and ${\delta^r_{1,X}}\defeq \lunitt{X}$ and $\delta^r_{n+1,X} \defeq \dr{1}{\objn{n}}{X};(\lunitt{X}\piu \delta^r_{n,X})$.
In Appendix \ref{app:Trig categories}, we show that $t$ is indeed natural (Lemma \ref{lemma: f_X naturale}) and that it allows to monoidally enrich $\Cat{C}$ over the category of models of $\mathbb{T}$, a.k.a. Eilenberg-Moore algebras (Proposition \ref{prop:enrichement}). Moreover, $\Br{t}_X$ will play a crucial role in the tape diagrams in Section \ref{sec:tapediagrams}.

\subsection{$\mathbb{T}$-Copy Discard Rig Categories}

With no surprise, we can combine fc-cd categories (Definition \ref{def: distributive gs-monoidal category}) with algebraic theories.

 \begin{definition}\label{def: T distributive gs-monoidal category}
 A $\mathbb{T}$-rig category  $i\colon \opcat{\lawveret} \to \Cat{C}$ is a \emph{$\mathbb{T}$-copy discard (shortly $\mathbb{T}$-cd) rig category} if $\Cat{C}$ is a fc-cd rig category.
 A morphism of $\mathbb{T}$-cd rig categories is a both a morphisms of $\mathbb{T}$ rig and fc-cd rig categories.
\end{definition}

Our main example of $\mathbb{T}$-copy discard rig category is provided by $i\colon \opcat{\lawveret} \to \Cat{Set}_{T_{\mathbb{T}}}$ in \eqref{eq:Lawdiag}. 
\begin{proposition}\label{prop:kleisliTrig}
Let $\mathbb{T}$ be an algebraic theory such that $T_{\mathbb{T}}$ is a symmetric monoidal monad. Then $\Sets_{T_{\mathbb{T}}}$ is a $\mathbb{T}$-copy discard rig category.
\end{proposition}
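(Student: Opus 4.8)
The statement asserts that whenever $T_{\mathbb{T}}$ is a symmetric monoidal monad on $(\Sets, \times, \uno)$, the Kleisli category $\Sets_{T_{\mathbb{T}}}$, equipped with the morphism $i\colon \opcat{\lawveret} \to \Sets_{T_{\mathbb{T}}}$ of \eqref{eq:Lawdiag}, is a $\mathbb{T}$-copy discard rig category. By Definition \ref{def: T distributive gs-monoidal category}, this requires two things: (1) that $i\colon \opcat{\lawveret} \to \Sets_{T_{\mathbb{T}}}$ exhibits $\Sets_{T_{\mathbb{T}}}$ as a $\mathbb{T}$-rig category (Definition \ref{def:Trig cat}), and (2) that $\Sets_{T_{\mathbb{T}}}$ is an fc-cd rig category (Definition \ref{def: distributive gs-monoidal category}). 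The plan is to assemble these from results already established in the excerpt.

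\textbf{Step 1: $\Sets_{T_{\mathbb{T}}}$ is fc-cd rig.} Since $(\Sets, \times, \uno)$ is a finite product category and $(\Sets, +, 0)$ is a finite coproduct category, $\Sets$ is a finite coproduct-finite product rig category (Definition \ref{def:fcfprig}); its distributors $\delta^l, \delta^r$ are the canonical isomorphisms, which are indeed isos in $\Sets$. By Lemma \ref{lemma:distributive}, $\Sets$ is then a fc-cd rig category. Now $T_{\mathbb{T}}$ is, by hypothesis, a symmetric monoidal monad over $(\Sets, \times, \uno)$, so Proposition \ref{prop:Kleislifccd} applies verbatim and yields that $\Sets_{T_{\mathbb{T}}}$ is a fc-cd rig category, with $\per_{T_{\mathbb{T}}}$ induced by the monoidal structure $m$, the comonoids $\copier{}^\flat, \discharger{}^\flat$ as in \eqref{eq:copierflat}, and the coproduct structure $\codiag{}^\flat, \cobang{}^\flat$ lifted from $\Sets$. (This is exactly the sentence following Proposition \ref{prop:Kleislifccd} in the text, specialised to $T = T_{\mathbb{T}}$.)

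\textbf{Step 2: $\Sets_{T_{\mathbb{T}}}$ is a $\mathbb{T}$-rig category.} Here I would simply observe that the morphism $i\colon \opcat{\lawveret} \to \Sets_{T_{\mathbb{T}}}$ constructed in \eqref{eq:Lawdiag} is already a morphism of finite coproduct categories, obtained by postcomposing the isomorphism $\opcat{\lawveret} \cong \Sets_{T_{\mathbb{T}}}^{\aleph_0}$ of \cite{hylandpoweruniversalalgebra} with the full embedding $\Sets_{T_{\mathbb{T}}}^{\aleph_0} \to \Sets_{T_{\mathbb{T}}}$, and that the triangle in \eqref{eq:Lawdiag} commutes, $d\,;\,i = c$, by construction of $c$ as the composite of the embedding $\aleph_0 \to \Sets$ with the free functor $\Sets \to \Sets_{T_{\mathbb{T}}}$. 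To invoke Definition \ref{def:Trig cat}, it remains to match the canonical morphism $c\colon \aleph_0 \to \Sets_{T_{\mathbb{T}}}$ of that definition (supplied by Lemma \ref{lemma:aleph}, using that $\Sets_{T_{\mathbb{T}}}$ is an fc rig category by Step 1) with the $c$ of \eqref{eq:Lawdiag}: both are determined by freeness of $\aleph_0$ from the requirement of sending the generating object to $\objn{1}$, which in $\Sets_{T_{\mathbb{T}}}$ is $\uno = \{\bullet\}$, i.e.\ the free algebra on one element; hence they coincide, since $\Sets_{T_{\mathbb{T}}}$ is a finite coproduct category and a morphism out of the free such category on one object is uniquely fixed by the image of that object. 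Combining Steps 1 and 2 with Definition \ref{def: T distributive gs-monoidal category} completes the proof.

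\textbf{Main obstacle.} The only genuinely delicate point is the identification in Step 2 of the two a priori different morphisms named $c$: the one produced abstractly by Lemma \ref{lemma:aleph} from the fc rig structure on $\Sets_{T_{\mathbb{T}}}$, and the one produced concretely in \eqref{eq:Lawdiag} via the free adjunction $\Sets \to \Sets_{T_{\mathbb{T}}}$. Both are identity-on-some-canonical-objects morphisms of finite coproduct categories out of $\aleph_0$; since $\aleph_0$ is freely generated by one object as a finite coproduct category, it suffices to check they agree on that object (both give $\uno$) and that $i$ then makes \eqref{eq:Lawdiag} commute, which follows because the isomorphism $\opcat{\lawveret}\cong \Sets_{T_{\mathbb{T}}}^{\aleph_0}$ of \cite{hylandpoweruniversalalgebra} is compatible with the respective embeddings of $\aleph_0$. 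Everything else is a direct appeal to Lemma \ref{lemma:distributive} and Proposition \ref{prop:Kleislifccd}.
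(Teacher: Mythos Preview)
Your proof is correct and follows essentially the same approach as the paper, which simply invokes the discussion around \eqref{eq:Lawdiag} for the existence of $i$ and Proposition~\ref{prop:Kleislifccd} (together with the fact, noted just after it, that $\Sets$ is fc-cd rig via Lemma~\ref{lemma:distributive}) for the fc-cd rig structure. Your Step~2 is more careful than the paper's proof in explicitly verifying that the two candidate morphisms $c\colon \aleph_0 \to \Sets_{T_{\mathbb{T}}}$ coincide by freeness of $\aleph_0$; the paper's proof leaves this implicit.
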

Since $\subdistr$ is a symmetric monoidal monad (Example \ref{ex:monoidal monads}) and $\mathbb{PCA}$ presents $\subdistr$, by Proposition \ref{prop:kleisliTrig}, it holds that the Kleisli category $\Cat{Set}_{\subdistr}$ equipped with the structure in \eqref{ex:products}, \eqref{ex:comonoids} and \eqref{ex:op}, is a $\mathbb{PCA}$-cd rig category. Beyond the examples provided by the above proposition, there are many others.

\begin{example}\label{ex:CM}
Consider for instance the algebraic theory $\mathbb{CM}$ of commutative monoids that presents the finitary multiset monad $M_f$. Any rig category where $\piu$ is a biproduct, i.e., both a product and a coproduct, is a $\mathbb{CM}$ rig category. The copy-discard structure is present not just in $\Sets_{M_f}$:  the category of vector spaces~\cite{johnson2021bimonoidal}, the one of finite dimension Hilbert spaces~\cite{Coecke2012a,harding2008orthomodularity} and the category of sets and relations~\cite{bonchi2023deconstructing,Bruni01somealgebraic} are all $\mathbb{CM}$-copy discard rig categories.  
\end{example}


\section{Tape Diagrams}\label{sec:tapediagrams}
In this section we introduce diagrammatic languages for $\mathbb{T}$-rig categories (Section \ref{sc:tape}) and $\mathbb{T}$-cd rig categories (Section \ref{sc:tapeTcd}). Our main results are  analogous to Theorem 2.3 in~\cite{joyal1991geometry} showing that string diagrams are arrows of a strict symmetric monoidal category freely generated by a monoidal signature. To begin, we need to recall what strict and freely generated means in the context of rig categories.

\subsection{Freely Generated Sesquistrict Rig Categories} 

In Section \ref{sec:CD}, we mentioned that the strictification theorem for monoidal categories allows us to safely forgets about the structural isomorphisms  $\assoc{X}{Y}{Z}$, $\lunit{X}$ and $\runit{X}$ and, thus, drastically simplify definitions and proofs. A strictification result also holds for rig categories (Theorem 5.4.6 in~\cite{johnson2021bimonoidal}): it states that every rig category is equivalent to a \emph{right} strict rig category, defined as follows.

\begin{definition}
A rig category is said to be \emph{right} (respectively \emph{left}) \emph{strict} when both its monoidal structures are 
strict and $\lambda^\bullet, \rho^\bullet$ and $\delta^r$ (respectively $\delta^l$) are all identity 
natural isomorphisms. A \emph{right strict rig functor} is a strict symmetric monoidal functor for both $\per$ and $\piu$ preserving $\delta^l$. 
\end{definition}
Note that only one of the two distributors is forced to be the identity within a strict rig category. The curious reader is referred to \cite[Section 4]{bonchi2023deconstructing} for a brief explanation of this asymmetry. In loc. cit., it is also explained that the above notion of strictness is somewhat unconvenient when studying  freely generated categories: consider a right strict rig category freely generated by a signature $\Gamma$ with sorts $\sort$. The objects of this category are terms generated by the grammar in Table~\ref{table:eq objects fsr} modulo the equations in the first three rows of the same table. These equivalence classes of terms do not come with a very handy form, unlike, for instance, the objects of a strict monoidal category, which are words. 

\begin{table}
	\begin{center}
		\begin{subtable}{0.52\textwidth}
            \scalebox{0.8}{
            \begin{tabular}{c}
				\toprule
				$X \; ::=\; \; A \; \mid \; \uno \; \mid \; \zero \; \mid \;  X \per X \; \mid \;  X \piu X$\\
				\midrule
				\begin{tabular}{ccc}
					$ (X \per Y) \per Z = X \per (Y \per Z)$ & $\uno \per X = X$ &  $X \per \uno = X $\\
					$(X \piu Y) \piu Z = X \piu (Y \piu Z)$ & $\zero \piu X = X$  & $X \piu \zero =X  $\\
					$(X \piu Y) \per Z = (X \per Z) \piu (Y \per Z)$ &  $\!\zero \per X = \zero$ & $X \per \zero=X $\\
				\end{tabular}\\
				$A \per (Y \piu Z) = (A \per Y) \piu (A \per Z)$\\
				\bottomrule
			\end{tabular}
            }
            \caption{}
            \label{table:eq objects fsr}
        \end{subtable}
		\begin{subtable}{0.45\textwidth}
            \scalebox{0.8}{
            \begin{tabular}{c}
				\toprule
				$n$-ary sums and products $\vphantom{\mid}$\\
				\midrule
				\makecell{
					\\[-1pt]
					$\Piu[i=1][0]{X_i} = \zero \; \Piu[i=1][1]{X_i}= X_1 \; \Piu[i=1][n+1]{X_i} = X_1 \piu (\Piu[i=1][n]{X_{i+1}} )$ \\[1em]
					$\Per[i=1][0]{X_i} = \uno \; \Per[i=1][1]{X_i}= X_1 \; \Per[i=1][n+1]{X_i} = X_1 \per (\Per[i=1][n]{X_{i+1}} )$ \\[3pt]
				} \\
				\bottomrule
			\end{tabular}
            }
            \caption{}
            \label{table:n-ary sums and prod}
        \end{subtable}
	\end{center}
	\caption{Equations for the objects of a free sesquistrict rig category}\label{tab:equationsonobject}
\end{table}

An alternative solution is proposed in \cite{bonchi2023deconstructing}: the focus is  on  freely generated rig categories that are \emph{sesquistrict}, i.e.\ right strict but only partially left strict: namely the left distributor  $\dl{X}{Y}{Z} \colon X \per (Y \piu Z) \to (X \per Y) \piu (X \per Z)$ is the identity only when $X$ is a basic sort $A\in \sort$. In terms of the equations to impose on objects, this amounts to the one in the fourth row in Table~\ref{table:eq objects fsr} for each $A\in \sort$.  By orienting from left to right \emph{all} the equations in Table~\ref{table:eq objects fsr}, one obtains a rewriting system that is confluent and terminating and, most importantly, the unique normal forms are exactly polynomials: a term $X$  is in \emph{polynomial} form if there exist $n$, $m_i$ and $A_{i,j}\in \sort$ for $i=1 \dots n$ and $j=1 \dots m_i$ such that $X=\Piu[i=1][n]{\Per[j=1][m_i]{A_{i,j}}}$ (for $n$-ary sums and products as in Table~\ref{table:n-ary sums and prod}).
We will always refer to terms in polynomial form as \emph{polynomials} and, for a polynomial like the aforementioned  $X$, we will call \emph{monomials} of $X$ the $n$ terms $\Per[j=1][m_i]{A_{i,j}}$. For instance the monomials of $(A \per B) \piu \uno$ are $A \per B$ and $1$. Note that, differently from the polynomials we are used to dealing with, here neither $\piu$ nor $\per$ is commutative so, for instance, $(A \per B) \piu \uno$ is different from both $\uno \piu (A \per B)$ and $(B \per A) \piu \uno$. Note that non-commutative polynomials are in one to one correspondence with \emph{words of words} over $\sort$, while monomials are words over $\sort$. 
\begin{notation}
Hereafter, we will denote by $A,B,C\dots$ the sorts in $\sort$, by $U,V,W \dots$ the words in $\sort^\star$ and by $P,Q,R,S \dots$ the words of words in $(\sort^\star)^\star$. Given two words $U,V\in \sort^\star$, we will write $UV$ for their concatenation and $1$ for the empty word. Given two words of words $P,Q\in (\sort^\star)^\star$, we will write $P\piu Q$ for their concatenation and $\zero$ for the empty word of words. Given a word of words $P$, we will write $\pi P$ for the corresponding term in polynomial form, for instance $\pi(A \piu BCD\piu 1 )$ is the term $A \piu ((B \per (C \per D)) \piu \uno)$. Throughout this paper  we  often omit $\pi$, thus we implicitly identify words of words with polynomials.
\end{notation}

Beyond concatenation ($\piu$), one can define a product operation $\per$ on $(\sort^\star)^\star$ by taking the unique normal form of $\pi(P) \per \pi(Q)$ for any $P,Q\in (\sort^\star)^\star$. More explicitly for
$P = \Piu[i]{U_i}$ and $Q = \Piu[j]{V_j}$, 
\begin{equation}\label{def:productPolynomials} P \per Q \defeq \Piu[i]{\Piu[j]{U_iV_j}}.
\end{equation}
Observe that, if both $P$ and $Q$ are monomials, namely, $P=U$ and $Q=V$ for some $U,V\in \sort^\star$, then $P\per Q = UV$. We can thus safely write $PQ$ in place of $P\per Q$ without the risk of any confusion.


%
%
%

\begin{definition}\label{def:sesquistrict rig category}
	A \emph{sesquistrict rig category} is a functor $H \colon \Cat S \to \Cat C$, where $\Cat S$ is a discrete category and $\Cat C$ is a right strict rig category, such that for all $A \in \Cat S$
	\[
	\dl{H(A)}{X}{Y} \colon H(A) \per (X \piu Y) \to (H(A) \per X) \piu (H(A) \per Y)
	\]
	is an identity morphism. We will also say, in this case, that $\Cat C$ is a $\Cat S$-sesquistrict rig category.
	
	Given $H \!\colon\! \Cat S \!\to\! \Cat C$ and $H' \!\colon\! \Cat S' \!\to\! \Cat C'$ two sesquistrict rig categories, a \emph{sesquistrict rig functor} from $H$ to $H'$ is a pair $(\alpha \!\colon\! \Cat S \!\to\! \Cat S', \beta \!\colon\! \Cat C \!\to\! \Cat C')$, with $\alpha$ a functor and $\beta$ a strict rig functor, such that $\alpha; H' = H; \beta$.
\end{definition}
\begin{remark}
In \cite{bonchi2023deconstructing}, it was shown that for any rig category $\Cat{C}$, one can construct its strictification $\overline{\Cat{C}}$ as in \cite{johnson2021bimonoidal} and then consider the obvious embedding from  $ob(\Cat{C})$, the discrete category of the objects of $\Cat{C}$, into $\overline{\Cat{C}}$. The embedding $ob(\Cat{C}) \to \overline{\Cat{C}}$ forms a sesquistrict category and it is equivalent (as a rig category) to the original $\Cat{C}$ \cite[Corollary 4.5]{bonchi2023deconstructing}. Hereafter, when dealing with a rig category $\Cat{C}$, we will often implicitly refer to the equivalent sesquistrict $ob(\Cat{C}) \to \overline{\Cat{C}}$.
\end{remark}

Given a set of sorts $\sort$, a \emph{rig signature} is a tuple $(\sort,\Gamma,\ar,\coar)$ where $\ar$ and $\coar$ assign to each $\gen \in \Gamma$ an arity and a coarity respectively, which are polynomials. A \emph{monoidal signature} is a rig signature where arity and coarity of any $s\in \Gamma$ are monomials. An \emph{interpretation} of a rig signature $(\sort,\Gamma,\ar,\coar)$ in a sesquistrict  rig category $H \colon \Cat M \to \Cat D$ is a pair of functions $(\alpha_{\sort} \colon \sort \to Ob(\Cat M), \alpha_\Gamma \colon \Gamma \to Ar(\Cat D))$ such that, for all $\gen \in \Gamma$, $\alpha_{\Gamma}(s)$ is an arrow having as domain and codomain $(\alpha_{\sort};H)^\sharp(\ar(s))$ and  $(\alpha_{\sort};H)^\sharp(\coar(s))$, where $-^\sharp$ stands for the expected inductive extension.

\begin{definition}\label{def:freesesqui}
Let $(\sort,\Gamma,\ar,\coar)$ (simply $\Gamma$ for short) be a rig signature. A sesquistrict  rig category $H \colon \Cat M \to \Cat D$ is said to be \emph{freely generated} by $\Gamma$ if there is an interpretation $(\alpha_S,\alpha_\Gamma)$ of $\Gamma$ in $H$ such that for every sesquistrict rig category $H' \colon \Cat M' \to \Cat D'$ and every interpretation $(\alpha_\sort' \colon \sort \to Ob(\Cat M'), \alpha_\Gamma' \colon \Gamma \to Ar(\Cat D'))$ there exists a unique sesquistrict rig functor $(\alpha \colon \Cat M \to \Cat M', \beta \colon \Cat D \to \Cat D')$ such that $\alpha_\sort ; \alpha = \alpha_\sort'$ and $\alpha_\Gamma ; \beta = \alpha_{\Gamma}'$. 
\end{definition}
%
\begin{remark}
Theorem 4.9 in \cite{bonchi2023deconstructing} guarantees that, whenever $\piu$ is forced to be a biproduct, every rig signature can be reduced to a monoidal one. 
When $\piu$ is not necessarily a biproduct, as it is the case of $\mathbb{T}$-rig categories, monoidal signatures are less expressive than rig ones. 

The diagrammatic language introduced in the next section is limited to monoidal signatures: on the one hand, this restriction is what  allows having intuitive 2 dimensional diagrams; on the other, the use of the natural transformations in \eqref{eq:natf} makes the language expressive enough to generalise  \cite{bonchi2023deconstructing}.
\end{remark}





\begin{table}[t]
    \scalebox{0.7}{
        \[
        \begin{array}{@{}c@{}}
        \toprule
        \begin{array}{c@{\qquad}c}
            \begin{array}{c@{\qquad}c@{\qquad}c@{\qquad}c@{\qquad}c}
                \id{\uno} \colon \uno \to \uno 
                &
                \id{A} \colon A \to A  
                &
                {id_\zero \colon \zero \to \zero} 
                & 
                {id_U \colon U \to U} 
                & 
                s \colon \ar(s) \to \coar(s)
                \\
                \symmt{A}{B} \colon A \per B \to B \per A
                &
                {\sigma_{U, V}^{\piu} \colon U \piu V \to V \piu U} 
                &
                {\codiag{U}\colon U \piu U \to U}
                &
                { \cobang{U} \colon \zero \to U}   
                &
                \inferrule{c \colon U \to V}{\tapeFunct{c}\colon U \to V} 
            \end{array}    
            &
            \inferrule{\ar{(f)}=n }{\textstyle{\Br{f}_U \colon U \to \bigoplus^n U}}
        \end{array}
        \\
        \begin{array}{c@{\qquad}c@{\qquad}c@{\qquad}c}
            \inferrule{c \colon U \to V \and d \colon V \to W}{c ; d \colon U \to W}
            &
            \inferrule{c \colon U_1 \to V_1 \and d \colon U_2 \to V_2}{c \per d \colon U_1 \per  U_2 \to V_1 \per  V_2}
            &
            \inferrule{\t \colon P \to Q \and \s \colon Q \to R}{\t ; \s \colon P \to R}
            &
            \inferrule{\t \colon P_1 \to Q_1 \and \s \colon P_2 \to Q_2}{\t \piu \s \colon P_1 \piu P_2 \to Q_1 \piu Q_2}       
        \end{array}
        \\
        \midrule
        \midrule
        \begin{array}{@{}l @{\quad}|@{\quad} r}
            \begin{array}{cc}
                (f;g);h=f;(g;h) & id_X;f=f=f;id_Y
                \\
                \multicolumn{2}{c}{(f_1\perG f_2) ; (g_1 \perG g_2) = (f_1;g_1) \perG (f_2;g_2)}
                \\
                id_{\unoG}\perG f = f = f \perG id_{\unoG} & (f \perG g)\, \perG h = f \perG \,(g \perG h)
                \\
                \sigma_{A, B}^{\perG}; \sigma_{B, A}^{\perG}= id_{A \perG B} & (\gen \perG id_Z) ; \sigma_{Y, Z}^{\perG} = \sigma_{X,Z}^{\perG} ; (id_Z \perG \gen)
            \end{array}
            &
            \begin{array}{cc}
                (\id{ U}\piu \codiag{ U}) ; \codiag{ U} = (\codiag{ U}\piu \id{ U}) ; \codiag{ U} & \codiag{U};\tapeFunct{c} =(\tapeFunct{c} \piu \tapeFunct{c}); \codiag{V} \quad \cobang{U};\tapeFunct{c} =\cobang{V} \\ 
                \multicolumn{2}{c}{
                (\cobang{ U}\piu \id{ U}) ; \codiag{ U}  = \id{ U} 
                \quad\;\;
                \codiag{U} ; \Br{f}_U = (\Br{f}_U \piu \Br{f}_U) ; \codiag{\Piu[][n]{U}} 
                \quad\;\;
                \cobang{U};\Br{f}_U = \cobang{\Piu[][n]U}  
                } \\
                \sigma_{ U, U}^\piu;\codiag{ U}=\codiag{ U} & \t ; \Br{f}_Q = \Br{f}_P ; \Piu[][n] \t  \\ 
                \multicolumn{2}{c}{\tape{\id{U}} = \id{U} \qquad \tape{c ; d} = \tape{c} ; \tape{d} \qquad t_U = t'_U, \text{ with } (t,t') \in E}
            \end{array}
        \end{array}
        \\
        \midrule
        \midrule
        \begin{array}{@{}l | l | l}

            \begin{array}{@{}l}
                \codiag{P} \colon P \per P \to P
                \\
                \midrule
                \codiag{\zero} \defeq \id{\zero} 
                \qquad\qquad
                \codiag{U \piu P'} \defeq (\id{U} \piu \symmp{U}{P'} \piu \id{P'}) ; (\codiag{U} \piu \codiag{P'})
                \phantom{\quad}
            \end{array}
        &
            \begin{array}{@{}l}
            \smash{\cobang{P} \colon \zero \to P}
            \\
            \midrule
            \cobang{\zero} \defeq \id{\zero} 
            \qquad\qquad
            \cobang{U \piu P'} \defeq \cobang{U} \piu \cobang{P'}
            \phantom{\qquad}
            \end{array}
        &
        \begin{array}{@{}l}
            \smash{\id{P} \colon P \to P}
            \\
            \midrule
            \id{\zero} \defeq \id{\zero} 
            \qquad
            \id{U \piu P'} \defeq \id{U} \piu \cobang{P'}
        \end{array}
            \\
            \midrule
            \begin{array}{@{}l@{\qquad\qquad}l}
                \multicolumn{2}{@{}l}{
                    \sigma^\piu_{P,Q} \colon P \piu Q \to Q \piu P
                }
                \\
                \midrule
                \sigma^\piu_{\zero,Q} \defeq \id{Q} 
                &
                \sigma^\piu_{U \piu P',Q} \defeq (\id{U} \piu \sigma^\piu_{P', Q}) ; (\sigma^\piu_{U,Q} \piu \id{P'})
                \\
                \sigma^\piu_{P, \zero} \defeq \id{P}
                &
                \sigma^\piu_{P, W \piu Q'} \defeq (\sigma^\piu_{P,W} \piu \id{Q'}) ; (\id{W} \piu \sigma^\piu_{P,Q'})
                \phantom{\quad}
                \end{array}
                &
            \multicolumn{2}{l}{
                    \begin{array}{@{}l@{\qquad\qquad}l}
                    \multicolumn{2}{@{}l}{
                        \sigma_{U,V} \colon U \per V \to V \per U
                    }
                        \\
                        \midrule
                        \sigma_{U, \uno} \defeq \id{U}
                        &
                        \sigma_{A \per U',W} \defeq (\id{A} \per \sigma_{U', W}) ; (\sigma_{A,W} \per \id{U'})
                        \\
                        \sigma{\uno, W} \defeq \id{W}
                        &
                        \sigma_{U, B \per W'} \defeq (\sigma_{U,B} \per \id{W'}) ; (\id{B} \per \sigma_{U,W'})
                        \phantom{\qquad\qquad\qquad}
                    \qquad
                \end{array}
            }
            \\
            \midrule
            \begin{array}{@{}l}
                \dl{P}{Q}{R} \colon P \per (Q\piu R)  \to (P \per Q) \piu (P\per R) \vphantom{\symmt{P}{Q}} \\
                \midrule
                \dl{\zero}{Q}{R} \defeq \id{\zero} \vphantom{\symmt{P}{\zero} \defeq \id{\zero}} \\
                \dl{U \piu P'}{Q}{R} \defeq (\id{U\per (Q \piu R)} \piu \dl{P'}{Q}{R});(\id{U\per Q} \piu \symmp{U\per R}{P'\per Q} \piu \id{P'\per R}) \vphantom{\symmt{P}{V \piu Q'} \defeq \dl{P}{V}{Q'} ; (\Piu[i]{\tapesymm{U_i}{V}} \piu \symmt{P}{Q'})} \\
            \end{array}
            &
            \begin{array}{@{}l}
                \symmt{P}{Q} \colon P\per Q \to Q \per P, \text{ with } P = \Piu[i]{U_i} \\
                \midrule
                \symmt{P}{\zero} \defeq \id{\zero} \\
                \symmt{P}{V \piu Q'} \defeq \dl{P}{V}{Q'} ; (\Piu[i]{\tapesymm{U_i}{V}} \piu \symmt{P}{Q'})
                \phantom{\quad}
            \end{array}
            &
            \begin{array}{@{}l}
                \Br{f}_P \colon P \to \bigoplus^n P \\
                \midrule
                \Br{f}_\zero \defeq \id{\zero} \\
                \Br{f}_{U \piu P'} \defeq (\Br{f}_U \piu \Br{f}_{P'}) ; \Idl{\piu^n \uno}{U}{P'} 
            \end{array}
        \end{array}
        \\
        \bottomrule
        \end{array}
        \]
    }
    \caption{Typing rules (top), axioms (middle) and syntactic sugar (bottom) for $\algt$-tapes.}
    \label{tab:tape typing and axioms}
    \label{table:def syn sugar}
\end{table}

\subsection{Tape Diagrams for $\algt$-Rig Categories}\label{sc:tape}
For an algebraic theory $\algt = (\Sigma, E)$, we want to identify a $\algt$-rig category freely generated from a monoidal signature $(\sort, \Gamma)$. Our work is an adaptation of the tape diagrams in~\cite{bonchi2023deconstructing}.

Consider terms generated by the following two-layer grammar
\begin{equation}\label{tapesGrammar}
    \begin{tabular}{rc ccccccccccccccccccccc}\setlength{\tabcolsep}{0.0pt}
        $c$  & ::= & $\id{A}$ & $\!\!\! \mid \!\!\!$ & $ \id{\uno} $ & $\!\!\! \mid \!\!\!$ & $ \gen $ & $\!\!\! \mid \!\!\!$ & $ \sigma_{A,B} $ & $\!\!\! \mid \!\!\!$ & $   c ; c   $ & $\!\!\! \mid \!\!\!$ & $  c \per c$ & \multicolumn{8}{c}{\;} \\
        $\t$ & ::= & $\id{U}$ & $\!\!\! \mid \!\!\!$ & $ \id{\zero} $ & $\!\!\! \mid \!\!\!$ & $ \tapeFunct{c} $ & $\!\!\! \mid \!\!\!$ & $ \sigma_{U,V}^{\piu} $ & $\!\!\! \mid \!\!\!$ & $   \t ; \t   $ & $\!\!\! \mid \!\!\!$ & $  \t \piu \t  $ & $\!\!\! \mid \!\!\!$  & $\cobang{U}$ & $\!\!\! \mid \!\!\!$ & $\codiag{U}$ & $\!\!\! \mid \!\!\!$ & $\Br{f}_U$
    \end{tabular}
\end{equation}  
where $A,B \in \sort$, $U,V \in \sort^\star$, $s \in \Gamma$ and $f \in \Sigma$. Terms are typed according to the rules at the top of Table~\ref{tab:tape typing and axioms}: each type is an arrow $P \to Q$ where $P,Q\in (\sort^\star)^\star$. As expected, we consider only those terms that are typeable. Constants in~\eqref{tapesGrammar} can be extended to arbitrary $P \in (\sort^\star)^\star$, according to the inductive definitions at the bottom of Table~\ref{table:def syn sugar}. 

Terms in the first layer of the grammar are called \emph{circuits}; those in the second layer are \emph{tapes}. 
Note that (a) circuits can occur inside tapes, (b) circuits have type $U\to V$, for $U,V\in \sort^\star$ and can be composed via $\per$, (c) tapes have type $P\to Q$ for $P,Q\in (\sort^\star)^\star$ and can be composed via $\piu$. 

The key insights of \cite{bonchi2023deconstructing} is that, one can define $\per$ also on tapes: for $\t_1 \colon P \to Q$, $\t_2 \colon R \to S$,  $\t_1 \per \t_2 \defeq \LW{P}{\t_2} ; \RW{S}{\t_1} $ where $\LW{P}{\cdot}$, $\RW{S}{\cdot}$ are the left and right whiskerings, defined as in Table~\ref{tab:producttape}. This stratification is what allows to draw terms in \eqref{tapesGrammar} in just two dimensions, while more direct representation of rig categories require three dimensions (see e.g. \cite{comfort2020sheet}).

The main novelty of tapes for \(\mathbb{T}\) lies in the \( \Br{f}_U \)'s, which allow us to represent the operations of the algebraic theory \(\algt\). By combining the \( \Br{f}_U \)'s with the monoid structure \((\codiag{U}, \cobang{U})\), we can represent every term of the theory. Specifically, for each $\sign$-term $t \colon n \to 1$ and object $U \in \sort^\star$ there is an associated term $\Br{t}_U \colon U \to \Piu[][n]{U}$ constructed inductively on the $\sign$-terms as follows:
\begin{center} 
$
\Br{x_i}_U \defeq \cobang{\Piu[][i-1]{U}} \piu \id{U} \piu \cobang{\Piu[][n-i]{U}}
\qquad\quad
\Br{f(t_1, \ldots, t_m)}_U \defeq \Br{f}_U ; (\Br{t_1}_U \piu \ldots \piu \Br{t_m}_U) ; \codiagg{\Piu[][n]{U}}{m}
$,
\end{center}
where $f\in \sign$ has arity $m$ and $\codiagg{P}{m} \colon \Piu[][m] P \to P$ is defined inductively as 
$\codiagg{P}{0} \defeq \cobang{P}$ and $\codiagg{P}{m+1} \defeq (\id{P} \piu \codiagg{P}{m}) ; \codiag{P}$.

Next, we impose the laws of $\algt$-rig categories on tapes: the axioms of symmetric monoidal categories for both layers; the axioms in $E$ of $\mathbb{T}$, the axioms of fc categories and those stating naturality of the $\Br{f}_U$'s for the second layer.
Formally, for each axiom $l=r$ in the middle left of Table~\ref{tab:tape typing and axioms}, we build a set $\basicR$ containing the pairs $(l,r)$  where  $\perG$ and $\unoG$ are replaced by $\piu$ and $\zero$ and the pairs $(\tape{l},\tape{r})$ where $\perG$ and $\unoG$ are replaced by $\per$ and $\uno$. Moreover, we add a pair $(l,r)$ for each axiom $l = r$ in the middle right of Table~\ref{tab:tape typing and axioms}.
Now we define $\congI$, the smallest congruence relation (w.r.t. $;$, $\piu$ and $\per$) containing $\basicR$, according to the following inductive rules. 
\begin{equation}\label{eq:congr}
        \scalebox{0.85}{$
        \centering
        \begin{array}{c}
        \begin{array}{c@{\qquad\qquad}c@{\qquad\qquad}c@{\qquad\qquad}c}
            \inferrule*[right=($\basicR$)]{\t_1 \mathbin{\basicR} \t_2}{\t_1 \mathrel{\congI} \t_2}
            &
            \inferrule*[right=($\textsc{R}$)]{-}{\t \mathrel{\congI} \t}
            &
            \inferrule*[right=($\textsc{S}$)]{\t_1 \mathrel{\congI} \t_2}{\t_2 \mathrel{\congI} \t_1}
            &    
            \inferrule*[right=($\textsc{T}$)]{\t_1 \mathrel{\congI} \t_2 \quad \t_2 \mathrel{\congI} \t_3}{\t_1 \mathrel{\congI} \t_3}
        \end{array}
        \\[10pt]
        \begin{array}{c@{\qquad}c@{\qquad}c}
            \inferrule*[right=($;$)]{\t_1 \mathrel{\congI} \t_2 \quad \s_1 \mathrel{\congI} \s_2}{\t_1;\s_1 \mathrel{\congI} \t_2;\s_2}
            &
            \inferrule*[right=($\piu$)]{\t_1 \mathrel{\congI} \t_2 \quad \s_1 \mathrel{\congI} \s_2}{\t_1\piu\s_1 \mathrel{\congI} \t_2 \piu \s_2}
            &
            \inferrule*[right=($\per $)]{\t_1 \mathrel{\congI} \t_2 \quad \s_1 \mathrel{\congI} \s_2}{\t_1\per \s_1 \mathrel{\congI} \t_2 \per \s_2}       
        \end{array}
        \end{array}
        $}
\end{equation}
Finally, we construct the category $\CatTapeT[\Gamma][\algt]$ whose objects are polynomials in $(\sort^\star)^\star$ and arrows are $\congI\text{-equivalence}$ classes of typed terms generated from~\eqref{tapesGrammar}. This construction gives rise to a $\algt$-rig category $\Br{\cdot}_1\colon \opcat{\lawveret} \to \CatTapeT[\Gamma][\algt]$. Most importantly, it is the freely generated one.

\begin{theorem}\label{thm:free signametape}
    $\CatTapeT[\Gamma][\algt]$ is the free sesquistrict $\algt$-rig generated by the monoidal signature $(\sort,\Gamma)$. 
\end{theorem}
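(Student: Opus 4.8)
The argument follows the blueprint of the freeness theorem for tape diagrams in~\cite{bonchi2023deconstructing}, the new ingredients being the generators $\Br{f}_U$ and the equations coming from $\algt=(\sign,E)$. Implicitly, the sesquistrict rig category under consideration is the inclusion $H\colon \sort\to\CatTapeT[\Gamma][\algt]$ sending a sort $A$ to the length-one polynomial $A$; \emph{free} means that it satisfies the universal property of Definition~\ref{def:freesesqui} among sesquistrict $\algt$-rig categories together with interpretations of the monoidal signature $(\sort,\Gamma)$. So there are two tasks: (I) show that $\CatTapeT[\Gamma][\algt]$ \emph{is} a sesquistrict $\algt$-rig category, and (II) establish the universal property.

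For (I), I would first check that $;$, $\piu$ and $\per$ — the last one defined on tapes by $\t_1\per\t_2\defeq\LW{P}{\t_2};\RW{S}{\t_1}$ — are well defined on $\congI$-equivalence classes and make $\CatTapeT[\Gamma][\algt]$ a category with two strict symmetric monoidal structures $(\piu,\zero)$ and $(\per,\uno)$ and a right-strict rig structure whose left distributor is the syntactic sugar of Table~\ref{table:def syn sugar} and whose annihilators and right distributor are identities; sesquistrictness, i.e.\ that $\dl{A}{X}{Y}$ is the identity for $A\in\sort$, is immediate from the object equations of Table~\ref{table:eq objects fsr}. The purely rig-theoretic coherence is inherited essentially verbatim from~\cite{bonchi2023deconstructing}. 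For the finite-coproduct part, I would prove by structural induction on tapes that $(\codiag{P},\cobang{P})$ is a natural coherent monoid for $\piu$: the base cases $\tapeFunct{c}$, $\symmp$ and $\Br{f}_U$ are precisely the naturality axioms listed in Table~\ref{tab:tape typing and axioms}, the inductive step lifts naturality along $;$ and $\piu$, and coherence with $\per$ reuses Lemma~\ref{prop:fcrig}. This makes $\CatTapeT[\Gamma][\algt]$ a finite coproduct rig category. Finally, I would show that $\Br{\cdot}_1\colon\opcat{\lawveret}\to\CatTapeT[\Gamma][\algt]$ is a morphism of finite coproduct categories fitting in the triangle of Definition~\ref{def:Trig cat} with $\aleph_0$ and the canonical $c$ of Lemma~\ref{lemma:aleph}: well-definedness on $\lawveret$ follows from the $E$-equations $t_U=t'_U$ imposed on tapes together with the inductive definition of $\Br{t}_U$; functoriality (substitution in $\opcat{\lawveret}$ versus composition of tapes) follows from the $\Br{f}_{}$-naturality axioms; and preservation of coproducts is built into the definition of $\Br{t}_U$ on a sum of variables.

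For (II), let $H'\colon\Cat S'\to\Cat D'$ be a sesquistrict $\algt$-rig category with $\algt$-structure map $j\colon\opcat{\lawveret}\to\Cat D'$, and let $(\alpha_\sort\colon\sort\to Ob(\Cat S'),\alpha_\Gamma\colon\Gamma\to Ar(\Cat D'))$ be an interpretation. I set $\alpha\defeq\alpha_\sort$ and define $\beta\colon\CatTapeT[\Gamma][\algt]\to\Cat D'$ on objects as the inductive extension $(\alpha_\sort;H')^\sharp$ and on representatives of arrows by recursion on the grammar~\eqref{tapesGrammar}: circuit generators $s$ to $\alpha_\Gamma(s)$, structural circuits and tapes to the corresponding structure of $\Cat D'$, $\tapeFunct{c}$ to $\beta(c)$, the monoid $(\codiag{U},\cobang{U})$ to the finite-coproduct structure of $\Cat D'$, and $\Br{f}_U$ to the natural transformation $\Br{f}_{\beta(U)}$ of~\eqref{eq:natf} built from $j(f)$. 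The checks are: (i) $\beta$ respects $\congI$, because every pair in $\basicR$ holds in any symmetric monoidal / finite coproduct rig category, the $E$-equations hold since $j$ factors through the quotient $\lawveret$, and the $\Br{f}_{}$-naturality equations hold by Lemma~\ref{lemma: f_X naturale}; (ii) $\beta$ is a strict rig functor, the only non-routine point being $\beta(\t_1\per\t_2)=\beta(\t_1)\per\beta(\t_2)$, which comes from the whiskering presentation of $\per$ on tapes together with the whiskering laws, valid in any rig category~\cite{bonchi2023deconstructing}; (iii) $(\alpha,\beta)$ is a morphism of $\algt$-rig categories, i.e.\ $\Br{\cdot}_1;\beta=j$, which holds on generators by construction of $\beta$ on the $\Br{f}_U$. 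Uniqueness: any sesquistrict $\algt$-rig functor extending the interpretation must agree with $(\alpha,\beta)$ on sorts, on $\Gamma$, on the image of $\opcat{\lawveret}$ (hence on every $\Br{f}_U$, since it commutes with the $\algt$-structure maps), and on the monoids, symmetries and distributors (being a finite coproduct rig functor); since these generate all tapes under $;$, $\piu$ and $\per$, the two functors coincide.

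The step I expect to be the main obstacle is the management of the new generators $\Br{f}_U$: checking that the inductively defined $\Br{t}_U$ is well defined with respect to $E$ and natural, that it is compatible with the whiskerings $\LW{}{}$ and $\RW{}{}$ used to define $\per$ on tapes, and — on the semantic side — that the natural transformation~\eqref{eq:natf} in a general sesquistrict $\algt$-rig category satisfies exactly the equations imposed syntactically on $\Br{f}_{}$ in Table~\ref{tab:tape typing and axioms}; this is where Lemma~\ref{lemma: f_X naturale} and the enrichment result of Proposition~\ref{prop:enrichement} carry the weight. By contrast, the rig coherence is essentially imported from~\cite{bonchi2023deconstructing}.
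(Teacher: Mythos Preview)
Your proposal is correct and follows essentially the same route as the paper: first verify that $\CatTapeT[\Gamma][\algt]$ is a sesquistrict fc rig category by importing the rig coherence and whiskering calculus from~\cite{bonchi2023deconstructing} (the only new whiskering law being $\RW{S}{\Br{f}_U}=\Br{f}_{US}$), then equip it with the fc-functor $\opcat{\lawveret}\to\CatTapeT[\Gamma][\algt]$ sending $f$ to $\Br{f}_\uno$, and finally establish freeness by defining the induced functor $\CBdsem{-}_\interpretation$ by recursion on the grammar and arguing uniqueness from preservation of the rig and $\algt$-structures. The paper's proof is terser---it neither explicitly checks that $\CBdsem{-}_\interpretation$ respects $\congI$ nor invokes Proposition~\ref{prop:enrichement}, relying instead on the observation that $\Br{f}_U=\Br{f}_\uno\per\id{U}$ so that $\beta(\Br{f}_U)=j(f)\per\id{\alpha^\sharp_\sort(U)}$ is forced---but the substance is the same.
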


\begin{table}[t]
    \begin{center}
    {
        \hfill {\tiny
  \[\begin{array}{c}
        \def\arraystretch{1.2}
        \begin{array}{rclrcl|rclrcl}
            \toprule
            \LW U {\id\zero} &\defeq& \id\zero& \LW U {\t_1 \piu \t_2} &\defeq& \LW U {\t_1} \piu \LW U {\t_2} &  \RW U {\id\zero} &\defeq& \id\zero & \RW U {\t_1 \piu \t_2} &\defeq& \RW U {\t_1} \piu \RW U {\t_2} \\
            \LW U {\tape{c}} &\defeq& \tape{\id U \per c} &    \LW U {\t_1 ; \t_2} &\defeq& \LW U {\t_1} ; \LW U {\t_2}&  \RW U {\tape{c}} &\defeq& \tape{c \per \id U} & \RW U {\t_1 ; \t_2} &\defeq& \RW U {\t_1} ; \RW U {\t_2} \\
            \LW U {\symmp{V}{W}} &\defeq& \symmp{UV}{UW}       &       \LW U {\Br{f}_V} &\defeq& \Br{f}_{UV} & \RW U {\symmp{V}{W}} &\defeq& \symmp{VU}{WU}& \RW U {\Br{f}_U} &\defeq& \Br{f}_{VU} \\
            \LW U {\codiag V} &\defeq& \codiag{UV} &  \LW U {\cobang V} &\defeq& \cobang{UV}& \RW U {\codiag V} &\defeq& \codiag{VU} & \RW U {\cobang V} &\defeq& \cobang{VU} \\
            \hline \hline
            \LW{\zero}{\t} &\defeq& \id{\zero} &  \LW{W\piu S'}{\t} &\defeq& \LW{W}{\t} \piu \LW{S'}{\t} & \RW{\zero}{\t} &\defeq& \id{\zero} &
            \RW{W \piu S'}{\t} &\defeq& \dl{P}{W}{S'} ; (\RW{W}{\t} \piu \RW{S'}{\t}) ; \Idl{Q}{W}{S'} \\
            \hline \hline
            \multicolumn{12}{c}{
                \t_1 \per \t_2 \defeq \LW{P}{\t_2} ; \RW{S}{\t_1}   \quad \text{ ( for }\t_1 \colon P \to Q, \t_2 \colon R \to S   \text{ )}
            }
            \\
            \bottomrule
        \end{array}
    \end{array}        
      \]
      }
      \hfill
      \caption{Inductive definition of left and right monomial whiskerings (top); inductive definition of polynomial whiskerings (center); definition of $\per$ (bottom).}\label{tab:producttape}
       }
    \end{center}
\end{table}

\begin{remark}\label{remark:Generalization}
Theorem~\ref{thm:free signametape} generalizes Theorem 5.11 in \cite{bonchi2023deconstructing} when $\algt$ is $\mathbb{CM}$, the theory of commutative monoids (Example \ref{ex:CM}).
\end{remark}

The arrows of $\CatTapeT[\Gamma][\algt]$ enjoy an intuitive 2 dimesional representation in terms of \emph{tape diagrams}. The grammar in~\eqref{tapesGrammar} becomes:
\begin{equation*}\label{tapesDiagGrammar}
    \setlength{\tabcolsep}{2pt}
    \begin{tabular}{rc c@{$\,\mid\,$}c@{$\,\mid\,$}c@{$\,\mid\,$}c@{$\,\mid\,$}c@{$\,\mid\,$}c@{$\,\mid\,$}c@{$\,\mid\,$}c@{$\,\mid\,$}c}
        $c$  & $\Coloneqq$ &  $\wire{A}$ & $ 
    \InputIfFileExists{empty.tikz}{}{\input{./tikz/empty.tikz}}
 $ & $ \Cgen{\gen}{A}{B}  $& $ \Csymm{A}{B} $ & $ 
    \InputIfFileExists{seq_compC.tikz}{}{\input{./tikz/seq_compC.tikz}}
   $ & $  
    \InputIfFileExists{par_compC.tikz}{}{\input{./tikz/par_compC.tikz}}
$ \\
        $\t$ & $\Coloneqq$ & $\Twire{U}$ & $ 
    \InputIfFileExists{empty.tikz}{}{\input{./tikz/empty.tikz}}
 $ & $ \Tcirc{c}{U}{V}  $ & $ \Tsymmp{U}{V} $ & $ 
    \InputIfFileExists{tapes/seq_comp.tikz}{}{\input{./tikz/tapes/seq_comp.tikz}}
  $ & $  
    \InputIfFileExists{tapes/par_comp.tikz}{}{\input{./tikz/tapes/par_comp.tikz}}
$ & $\Tunit{U}$  & $\Tmonoid{U}$ & $\Talgop{U}{f}$
    \end{tabular}
\end{equation*}  
The identity $\id\zero$ is rendered as the empty tape $
    \InputIfFileExists{empty.tikz}{}{\input{./tikz/empty.tikz}}
$, while $\id\uno$ is $
    \InputIfFileExists{tapes/empty.tikz}{}{\input{./tikz/tapes/empty.tikz}}
$: a tape filled with the empty circuit. 
For a monomial $U \!=\! A_1\dots A_n$, $\id U$ is depicted as a tape containing  $n$ wires labelled by $A_i$. For instance, $\id{AB}$ is rendered as $\TRwire{A}{B}$. When clear from the context, we will simply represent it as a single wire  $\Twire{U}$ with the appropriate label.
Similarly, for a polynomial $P = \Piu[i=1][n]{U_i}$, $\id{P}$ is obtained as a vertical composition of tapes, as illustrated below on the left.

\scalebox{0.82}{$ 
    \id{AB \piu \uno \piu C} = \!\!\!\begin{aligned}\begin{gathered} \TRwire{A}{B} \\[-1.8mm] \Twire{\uno} \\[-1.8mm] \Twire{C} \end{gathered}\end{aligned}
    \qquad\;\;
    \codiag{A\piu B \piu C} = \!\!
    \InputIfFileExists{tapes/examples/codiagApBpC.tikz}{}{\input{./tikz/tapes/examples/codiagApBpC.tikz}}

    \qquad\;\;
    \cobang{AB \piu B \piu C} = \!
    \InputIfFileExists{tapes/examples/cobangABpBpC.tikz}{}{\input{./tikz/tapes/examples/cobangABpBpC.tikz}}

    \qquad\;\;
    
    \InputIfFileExists{tapes/examples/algt.tikz}{}{\input{./tikz/tapes/examples/algt.tikz}}

$}

The codiagonal $\codiag{U} \colon  U \piu U \!\to\! U$ is represented as a merging of tapes, while the cobang $\cobang{U} \colon \zero \!\to\! U$ is a tape closed on its left boundary. 
Exploiting the definitions in Table~\ref{table:def syn sugar}, we can construct codiagonals and cobangs for arbitrary polynomials.
For example, $\codiag{A\piu B \piu C}$ and $\cobang{AB \piu B \piu C}$ are depicted as the second and third diagrams above. The last diagram is the tape for 
$\Br{(\star +_p x_1)+_q x_1}_A$ when the algebraic theory is fixed to be $\mathbb{PCA}$ (Example \ref{ex:algebraic-theory-convex}).

For an arbitrary $\t \colon P \to Q$ we write $\Tbox{\t}{P}{Q}$. For a $\sign$-term $t \colon n \to 1$ we write \scalebox{0.8}{$\TSterm{U}{t_U}$}.

The graphical representation embodies several axioms such as those of symmetric monoidal categories. Those axioms that are not implicit in the graphical representation are shown in Figure~\ref{fig:tapesax}. 

\begin{figure}[t]
    \mylabel{ax:symmpinv}{$\symmp$-inv}
    \mylabel{ax:symmpnat}{$\symmp$-nat}
    
    \mylabel{ax:codiagas}{$\codiag{}$-as}
    \mylabel{ax:codiagun}{$\codiag{}$-un}
    \mylabel{ax:codiagco}{$\codiag{}$-co}
    
    \mylabel{ax:codiagnat}{$\codiag{}$-nat}
    \mylabel{ax:cobangnat}{$\cobang{}$-nat}

    \mylabel{ax:fnat}{$f$-nat}

    \begin{center}
        \scalebox{0.9}{
            \begin{tabular}{r@{}c@{}l @{} r@{}c@{}l @{} r@{}c@{}l}
                        
    \InputIfFileExists{tapes/ax/symminv_left.tikz}{}{\input{./tikz/tapes/ax/symminv_left.tikz}}
 &$\stackrel{(\symmp\text{-inv})}{=}$& 
    \InputIfFileExists{tapes/ax/symminv_right.tikz}{}{\input{./tikz/tapes/ax/symminv_right.tikz}}
  
                    & 
                    
    \InputIfFileExists{tapes/ax/symmnat_left.tikz}{}{\input{./tikz/tapes/ax/symmnat_left.tikz}}
  &$\stackrel{(\symmp\text{-nat})}{=}$&  
    \InputIfFileExists{tapes/ax/symmnat_right.tikz}{}{\input{./tikz/tapes/ax/symmnat_right.tikz}}

                &
                \multicolumn{3}{c}{
                    \begin{tabular}{r@{}c@{}l}
                        
    \InputIfFileExists{cb/symm_inv_left.tikz}{}{\input{./tikz/cb/symm_inv_left.tikz}}
 & $\axeq{\sigma\text{-inv}}$ & 
    \InputIfFileExists{cb/symm_inv_right.tikz}{}{\input{./tikz/cb/symm_inv_right.tikz}}
 
                        \\
                        
    \InputIfFileExists{cb/symm_nat_left.tikz}{}{\input{./tikz/cb/symm_nat_left.tikz}}
 & $\axeq{\sigma\text{-nat}}$ & 
    \InputIfFileExists{cb/symm_nat_right.tikz}{}{\input{./tikz/cb/symm_nat_right.tikz}}

                    \end{tabular}
                }
                \\
                        
    \InputIfFileExists{tapes/whiskered_ax/monoid_assoc_left.tikz}{}{\input{./tikz/tapes/whiskered_ax/monoid_assoc_left.tikz}}
 &$\stackrel{(\codiag{}\text{-as})}{=}$& 
    \InputIfFileExists{tapes/whiskered_ax/monoid_assoc_right.tikz}{}{\input{./tikz/tapes/whiskered_ax/monoid_assoc_right.tikz}}
  
                        &
                        
    \InputIfFileExists{tapes/whiskered_ax/monoid_unit_left.tikz}{}{\input{./tikz/tapes/whiskered_ax/monoid_unit_left.tikz}}
 &$\stackrel{(\codiag{}\text{-un})}{=}$& \Twire{U}
                        &
                        
    \InputIfFileExists{tapes/whiskered_ax/monoid_comm_left.tikz}{}{\input{./tikz/tapes/whiskered_ax/monoid_comm_left.tikz}}
 &$\stackrel{(\codiag{}\text{-co})}{=}$& \Tmonoid{U}
                \\
                \multicolumn{9}{c}{
                    \begin{tabular}{r@{}c@{}l @{\qquad} r@{}c@{}l @{\qquad} r@{}c@{}l}
                        
    \InputIfFileExists{tapes/ax/monoidnat_left.tikz}{}{\input{./tikz/tapes/ax/monoidnat_left.tikz}}
 &$\stackrel{(\codiag{}\text{-nat})}{=}$& 
    \InputIfFileExists{tapes/ax/monoidnat_right.tikz}{}{\input{./tikz/tapes/ax/monoidnat_right.tikz}}

                        &
                        
    \InputIfFileExists{tapes/ax/unitnat_left.tikz}{}{\input{./tikz/tapes/ax/unitnat_left.tikz}}
  &$\stackrel{(\cobang{}\text{-nat})}{=}$& 
    \InputIfFileExists{tapes/ax/unitnat_right.tikz}{}{\input{./tikz/tapes/ax/unitnat_right.tikz}}

                        &
                        \scalebox{0.8}{
    \InputIfFileExists{tapes/ax/newax/nat/left.tikz}{}{\input{./tikz/tapes/ax/newax/nat/left.tikz}}
} & $\stackrel{(f\text{-nat})}{=} \; \; $ & \scalebox{0.8}{
    \InputIfFileExists{tapes/ax/newax/nat/right.tikz}{}{\input{./tikz/tapes/ax/newax/nat/right.tikz}}
}
                        \\
                        
    \InputIfFileExists{tapes/ax/newax/copyf_left.tikz}{}{\input{./tikz/tapes/ax/newax/copyf_left.tikz}}
 &$\stackrel{(\codiag{}\text{-nat})}{=}$& \scalebox{0.7}{
    \InputIfFileExists{tapes/ax/newax/copyf_right.tikz}{}{\input{./tikz/tapes/ax/newax/copyf_right.tikz}}
}
                        &
                        
    \InputIfFileExists{tapes/ax/newax/discf_left.tikz}{}{\input{./tikz/tapes/ax/newax/discf_left.tikz}}
 &$\stackrel{(\cobang{}\text{-nat})}{=}$& 
    \InputIfFileExists{tapes/ax/newax/discf_right.tikz}{}{\input{./tikz/tapes/ax/newax/discf_right.tikz}}
       
                        &
                        $\TSterm{U}{t}$ & $=$ & $\TSterm{U}{t'}$
                    \end{tabular}
                }
            \end{tabular}
        }
    \end{center}
    \caption{Axioms for $\algt$-tape diagrams. Here $(t,t') \in E$.}
    \label{fig:tapesax}
\end{figure}

\subsection{Tape Diagrams for $\algt$-cd Rig Categories}\label{sc:tapeTcd}


In this section, we extend $\algt$-tape diagrams to deal with $\algt$-cd rig categories. Obviously, this requires augmenting the grammar in~\eqref{tapesGrammar} with the copy $\copier{A}$ and discard $\discharger{A}$ operations, for every $A \in \sort$.
\begin{equation}
    \begin{tabular}{rc ccccccccccccccccccccc}\setlength{\tabcolsep}{0.0pt}
        $c$  & ::= & $\id{A}$ & $\!\!\! \mid \!\!\!$ & $ \id{\uno} $ & $\!\!\! \mid \!\!\!$ & $ \gen $ & $\!\!\! \mid \!\!\!$ & $ \sigma_{A,B} $ & $\!\!\! \mid \!\!\!$ & $   c ; c   $ & $\!\!\! \mid \!\!\!$ & $  c \per c$ & $\!\!\! \mid \!\!\!$  & $\discharger{A}$ & $\!\!\! \mid \!\!\!$ & $\copier{A}$ & \\
        $\t$ & ::= & $\id{U}$ & $\!\!\! \mid \!\!\!$ & $ \id{\zero} $ & $\!\!\! \mid \!\!\!$ & $ \tapeFunct{c} $ & $\!\!\! \mid \!\!\!$ & $ \sigma_{U,V}^{\piu} $ & $\!\!\! \mid \!\!\!$ & $   \t ; \t   $ & $\!\!\! \mid \!\!\!$ & $  \t \piu \t  $ & $\!\!\! \mid \!\!\!$  & $\cobang{U}$ & $\!\!\! \mid \!\!\!$ & $\codiag{U}$ & $\!\!\! \mid \!\!\!$ & $\Br{f}_U$
    \end{tabular}
\end{equation}  
Next, we construct the category $\CatTapeTCD[\Gamma][\algt]$ as in Section~\ref{sc:tape}, but extending $\basicR$ to include the axioms of cocommutative comonoids for $(\copier{A}, \discharger{A})$. Graphically, we require the following equations.
\begin{center}
    $
    \InputIfFileExists{cb/monoid_assoc_right.tikz}{}{\input{./tikz/cb/monoid_assoc_right.tikz}}
 = 
    \InputIfFileExists{cb/monoid_assoc_left.tikz}{}{\input{./tikz/cb/monoid_assoc_left.tikz}}
 
    \qquad
    
    \InputIfFileExists{cb/monoid_unit_left.tikz}{}{\input{./tikz/cb/monoid_unit_left.tikz}}
  = 
    \InputIfFileExists{cb/monoid_unit_right.tikz}{}{\input{./tikz/cb/monoid_unit_right.tikz}}
 
    \qquad
    
    \InputIfFileExists{cb/monoid_comm_left.tikz}{}{\input{./tikz/cb/monoid_comm_left.tikz}}
  = 
    \InputIfFileExists{cb/monoid_comm_right.tikz}{}{\input{./tikz/cb/monoid_comm_right.tikz}}
$.
\end{center}
%
%
Copier and discard can be defined  for every  $U \in \sort^\star$ via the coherence conditions in~\eqref{eq:comonoidcoherence}:
\[
    \begin{array}{rcl C{0.65cm} rcl}
        \copier{\uno} &\defeq& \id{\uno} && \discharger{\uno} &\defeq& \id{\uno}  \\
        \copier{A \per U'} &\defeq& (\copier{A} \per \copier{U'}) ; (\id{A} \per \sigma^\per_{A, U'} \per \id{U'}) && \discharger{A \piu U'} &\defeq& (\discharger{A} \per \discharger{U'})
    \end{array}
\]
%
However, to conclude that $\CatTapeTCD[\Gamma][\algt]$ is a $\algt$-cd rig category, we need to show that the copy-discard structure exists for every polynomial $P \in (\sort^\star)^\star$.
The conditions in~\eqref{equation: coherence1} provide an inductive definition:
\begin{equation}\label{eq:copierind}
    \begin{array}{rcl C{0.5cm} rcl}
        \copier{\zero} &\defeq& \id{\zero} && \discharger{\zero} &\defeq& \cobang{\uno}  \\
        \copier{U \piu P'} &\defeq& \Tcopier{U} \piu \cobang{UP'} \piu ((\cobang{P'U} \piu \copier{P'}) ; \Idl{P'}{U}{P'}) && \discharger{U \piu P'} &\defeq& (\Tdischarger{U} \piu \discharger{P'}) ; \codiag{\uno}
    \end{array}
\end{equation}
For instance, $\copier{A \piu B} \colon A \piu B \to (A \piu B) \per (A \piu B) = AA \piu AB \piu BA \piu BB$ and $\discharger{A \piu B} \colon A \piu B \to \uno$ are:
\begin{center}
    $\copier{A \piu B} = \!\!\!
    \begin{tikzpicture}
	\begin{pgfonlayer}{nodelayer}
		\node [style=label] (76) at (-2.25, 2.625) {$A$};
		\node [style=none] (77) at (-1.75, 1.875) {};
		\node [style=none] (78) at (-1.75, 3.375) {};
		\node [style=black] (80) at (-0.5, 2.625) {};
		\node [style=none] (82) at (0.225, 3.025) {};
		\node [style=none] (83) at (0.225, 2.225) {};
		\node [style=none] (84) at (-1.75, 2.625) {};
		\node [style=none] (87) at (0.75, 1.875) {};
		\node [style=none] (88) at (0.75, 3.375) {};
		\node [style=label] (146) at (1.25, 3.025) {$A$};
		\node [style=label] (147) at (1.25, 2.225) {$A$};
		\node [style=none] (164) at (0.75, 1.175) {};
		\node [style=none] (165) at (0.75, 0.575) {};
		\node [style=none] (168) at (-0.825, 1.175) {};
		\node [style=none] (169) at (-0.825, 0.575) {};
		\node [style=none] (181) at (0.75, 1.625) {};
		\node [style=none] (182) at (0, 1.625) {};
		\node [style=none] (183) at (0, 0.125) {};
		\node [style=none] (184) at (0.75, 0.125) {};
		\node [style=none] (189) at (0.75, 3.025) {};
		\node [style=none] (190) at (0.75, 2.225) {};
		\node [style=label] (191) at (-2.25, -2.625) {$B$};
		\node [style=none] (192) at (-1.75, -1.875) {};
		\node [style=none] (193) at (-1.75, -3.375) {};
		\node [style=none] (198) at (0.75, -1.875) {};
		\node [style=none] (199) at (0.75, -3.375) {};
		\node [style=label] (200) at (1.25, -3.025) {$B$};
		\node [style=label] (201) at (1.25, -2.225) {$B$};
		\node [style=none] (202) at (0.75, -1.175) {};
		\node [style=none] (203) at (0.75, -0.575) {};
		\node [style=none] (206) at (-0.825, -1.175) {};
		\node [style=none] (207) at (-0.825, -0.575) {};
		\node [style=none] (208) at (0.75, -1.625) {};
		\node [style=none] (209) at (0, -1.625) {};
		\node [style=none] (210) at (0, -0.125) {};
		\node [style=none] (211) at (0.75, -0.125) {};
		\node [style=black] (212) at (-0.5, -2.625) {};
		\node [style=none] (213) at (0.225, -2.225) {};
		\node [style=none] (214) at (0.225, -3.025) {};
		\node [style=none] (215) at (-1.75, -2.625) {};
		\node [style=none] (216) at (0.75, -2.225) {};
		\node [style=none] (217) at (0.75, -3.025) {};
		\node [style=label] (218) at (1.25, 1.275) {$A$};
		\node [style=label] (219) at (1.25, 0.475) {$B$};
		\node [style=label] (220) at (1.25, -0.475) {$B$};
		\node [style=label] (221) at (1.25, -1.275) {$A$};
	\end{pgfonlayer}
	\begin{pgfonlayer}{edgelayer}
		\draw [style=tape] (88.center)
			 to (87.center)
			 to (77.center)
			 to (78.center)
			 to cycle;
		\draw [style=tape] (184.center)
			 to (181.center)
			 to (182.center)
			 to [bend right=90, looseness=2.00] (183.center)
			 to cycle;
		\draw (84.center) to (80);
		\draw [bend left] (80) to (82.center);
		\draw [bend right=330] (83.center) to (80);
		\draw (168.center) to (164.center);
		\draw (165.center) to (169.center);
		\draw (82.center) to (189.center);
		\draw (190.center) to (83.center);
		\draw [style=tape] (199.center)
			 to (198.center)
			 to (192.center)
			 to (193.center)
			 to cycle;
		\draw [style=tape] (211.center)
			 to (208.center)
			 to (209.center)
			 to [bend left=90, looseness=2.00] (210.center)
			 to cycle;
		\draw (206.center) to (202.center);
		\draw (203.center) to (207.center);
		\draw (215.center) to (212);
		\draw [bend left] (212) to (213.center);
		\draw [bend right=330] (214.center) to (212);
		\draw (213.center) to (216.center);
		\draw (217.center) to (214.center);
	\end{pgfonlayer}
\end{tikzpicture}
}
 
    \qquad\qquad
    \discharger{A \piu B} = \!\!
    \begin{tikzpicture}
	\begin{pgfonlayer}{nodelayer}
		\node [style=none] (224) at (3.5, 0.75) {};
		\node [style=none] (225) at (3.5, -0.75) {};
		\node [style=none] (231) at (2.5, -0.75) {};
		\node [style=none] (232) at (2.5, 0.75) {};
		\node [style=label] (233) at (-1, 1.25) {$A$};
		\node [style=none] (234) at (-0.5, 2) {};
		\node [style=none] (235) at (-0.5, 0.5) {};
		\node [style=none] (236) at (0.75, 0.5) {};
		\node [style=none] (237) at (0.75, 2) {};
		\node [style=label] (238) at (-1, -1.25) {$B$};
		\node [style=none] (239) at (-0.5, -0.5) {};
		\node [style=none] (240) at (-0.5, -2) {};
		\node [style=none] (241) at (0.75, -2) {};
		\node [style=none] (242) at (0.75, -0.5) {};
		\node [style=black] (243) at (0.75, 1.25) {};
		\node [style=none] (244) at (-0.5, 1.25) {};
		\node [style=black] (245) at (0.75, -1.25) {};
		\node [style=none] (246) at (-0.5, -1.25) {};
	\end{pgfonlayer}
	\begin{pgfonlayer}{edgelayer}
		\draw [tape] (237.center)
			 to [bend left] (232.center)
			 to (224.center)
			 to (225.center)
			 to (231.center)
			 to [bend left] (241.center)
			 to (240.center)
			 to (239.center)
			 to (242.center)
			 to [bend right=90, looseness=2.00] (236.center)
			 to (235.center)
			 to (234.center)
			 to cycle;
		\draw (244.center) to (243);
		\draw (246.center) to (245);
	\end{pgfonlayer}
\end{tikzpicture}
}
$
\end{center}

\begin{theorem}\label{th:tapes-free-T-cd}
    $\CatTapeTCD[\Gamma][\algt]$ is the free sesquistrict $\algt$-cd rig generated by the monoidal signature $(\sort,\Gamma)$.
\end{theorem}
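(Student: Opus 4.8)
The plan is to mirror the proof of Theorem~\ref{thm:free signametape}, treating $\CatTapeTCD[\Gamma][\algt]$ as the extension of $\CatTapeT[\Gamma][\algt]$ by the generators $\copier{A},\discharger{A}$ (for $A\in\sort$) and by the cocommutative comonoid axioms added to $\basicR$. Two things must be shown: (i) $\Br{\cdot}_1\colon\opcat{\lawveret}\to\CatTapeTCD[\Gamma][\algt]$ is a sesquistrict $\algt$-cd rig category; (ii) it satisfies the evident $\algt$-cd analogue of the universal property in Definition~\ref{def:freesesqui}, i.e.\ the unique extension of an interpretation into a sesquistrict $\algt$-cd rig category is itself a morphism of $\algt$-cd rig categories.

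For (i), the whole rig / finite-coproduct / $\algt$-rig / sesquistrict part is obtained verbatim as in Theorem~\ref{thm:free signametape}: the new generators and axioms do not disturb those constructions, and the interpretation $\Br{\cdot}_1$ of $\opcat{\lawveret}$ is literally the same. What is new is that $(\per,\uno,\copier{},\discharger{})$ must be a cd structure and that monoids and comonoids must satisfy \eqref{equation: coherence1}. On monomials $U\in\sort^\star$ the comonoid laws of Table~\ref{tab:axiomsfc}.\eqref{eq:comonoid} and the coherence \eqref{eq:comonoidcoherence} follow by the usual induction for extending a comonoid along $\per$, using the comonoid axioms imposed on each $\copier{A},\discharger{A}$ and the defining equations for $\copier{A\per U'},\discharger{A\piu U'}$. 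On polynomials, $\copier{P},\discharger{P}$ are defined by \eqref{eq:copierind}, which are exactly the instances of \eqref{equation: coherence1} forced on us; one then checks, by induction on $P$, that these derived morphisms still obey Table~\ref{tab:axiomsfc}.\eqref{eq:comonoid} and \eqref{eq:comonoidcoherence}, the inductive step reducing to the monomial case together with the fc-rig identities of Lemma~\ref{prop:fcrig} and naturality of the distributors. This is the syntactic counterpart of Proposition~\ref{prop:map}, and it is the most laborious step, though it introduces no new idea beyond those already used for Theorem~\ref{thm:free signametape}.

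For (ii), let $H'\colon\Cat{M}'\to\Cat{D}'$ be a sesquistrict $\algt$-cd rig category with an interpretation $(\alpha'_\sort,\alpha'_\Gamma)$ of $(\sort,\Gamma)$. Forgetting the comonoids and applying Theorem~\ref{thm:free signametape} to the underlying $\algt$-rig categories yields a unique sesquistrict $\algt$-rig functor $(\alpha,\beta)$ extending the interpretation, with $\beta$ the inductive extension of $\alpha'_\sort$ on objects. Extend $\beta$ by $\beta(\copier{A})\defeq\copier{\beta(A)}$ and $\beta(\discharger{A})\defeq\discharger{\beta(A)}$. Well-definedness with respect to $\congI$ requires only the freshly added pairs of $\basicR$, namely the comonoid axioms, which hold in $\Cat{D}'$ because $\Cat{D}'$ is a cd category; so $\beta$ is a well-defined rig functor. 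To see that $\beta$ preserves comonoids on \emph{all} polynomials --- not just sorts --- note that $\beta(\copier{P})$ and $\beta(\discharger{P})$ are built by the recipes of \eqref{eq:comonoidcoherence} and \eqref{eq:copierind}, while $\copier{\beta(P)}$ and $\discharger{\beta(P)}$ satisfy the very same equations in $\Cat{D}'$ (these being, respectively, the cd-coherence axioms and the fc-cd coherence \eqref{equation: coherence1}, valid in $\Cat{D}'$ by hypothesis); a routine induction on $P$ then equates them. Hence $(\alpha,\beta)$ is a morphism of $\algt$-cd rig categories, and uniqueness is immediate: any such morphism extending the interpretation agrees with $(\alpha,\beta)$ on the $\algt$-rig part by Theorem~\ref{thm:free signametape} and must send $\copier{A},\discharger{A}$ to $\copier{\beta(A)},\discharger{\beta(A)}$ by preservation of comonoids, hence agrees everywhere.

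The main obstacle, exactly as in \cite{bonchi2023deconstructing}, is step (i): proving that the copier and discarder obtained by ``solving'' the coherence conditions \eqref{eq:comonoidcoherence}--\eqref{equation: coherence1} first on monomials and then on polynomials genuinely form coherent comonoids compatible with the finite-coproduct structure. This is a bookkeeping-heavy induction, but the corresponding abstract statement is already available in Proposition~\ref{prop:map}, so it suffices to transport those computations into the free category.
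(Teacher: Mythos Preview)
Your proposal is correct and follows essentially the same route as the paper: reuse Theorem~\ref{thm:free signametape} for the sesquistrict $\algt$-rig part, verify that the inductively defined $\copier{P},\discharger{P}$ of \eqref{eq:copierind} form coherent comonoids (the paper delegates this to \cite[Theorem~7.3]{bonchi2023deconstructing} rather than redoing the induction), and then extend the unique induced functor by $\copier{A}\mapsto\copier{\alpha_\sort(A)}$, $\discharger{A}\mapsto\discharger{\alpha_\sort(A)}$, checking well-definedness and comonoid preservation exactly as you describe.
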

By virtue of the above theorem, any intepretation $\interpretation$ of the monoidal signature $(\sort,\Gamma)$ into a $\algt$-cd rig category $\Cat{C}$, gives rise uniquely 
to a morphism $\CBdsem{-}_\interpretation \colon \CatTapeTCD[\Gamma][\algt] \to \Cat{C}$ which represents the \emph{compositional} semantics of tapes. Its inductive definition is reported in Figure \ref{eq:SEMANTICA}.

\begin{figure}[t]
    \renewcommand{\arraystretch}{1.5}
\!\!\!\begin{tabular}{l@{\;\;\;\;}l@{\;\;\;\;}l@{\;\;\;\;}l@{\;\;\;\;}l}
$\CBdsem{s}_{\interpretation} \defeq \alpha_{\Gamma}(s) $& $\CBdsem{\, \tapeFunct{c} \,}_{\interpretation}\defeq \CBdsem{c}_{\interpretation} $ & $\CBdsem{\codiag{U}}_{\interpretation}\defeq \codiag{\alpha^\sharp_\sort(U)} $&$ \CBdsem{\cobang{U}}_{\interpretation} \defeq \cobang{\alpha^\sharp_\sort(U)}$   &  $\CBdsem{\Br{f}_{U}}_{\interpretation} \defeq j(f) \per \id{\alpha^\sharp_\sort(U)}$\\
$\CBdsem{\id{A}}_{\interpretation}\defeq \id{\alpha_\sort(A)} $&$ \CBdsem{\id{1}}_{\interpretation}\defeq \id{1} $&$ \CBdsem{\symmt{A}{B}}_{\interpretation} \defeq \symmt{\alpha_\sort(A)}{\alpha_\sort(B)}  $&$ \CBdsem{c;d}_{\interpretation} \defeq \CBdsem{c}_{\interpretation}; \CBdsem{d}_{\interpretation}  $ & $ \CBdsem{c\per d}_{\interpretation} \defeq \CBdsem{c}_{\interpretation} \per \CBdsem{d}_{\interpretation}$\\
$\CBdsem{\id{U}}_{\interpretation}\defeq \id{\alpha^\sharp_\sort(U)} $&$ \CBdsem{\id{\zero}}_{\interpretation} \defeq \id{\zero}  $&$ \CBdsem{\symmp{U}{V}}_{\interpretation}\defeq \symmp{\alpha^\sharp_\sort(U)}{\alpha^\sharp_\sort(V)} $&$ \CBdsem{\s;\t}_{\interpretation} \defeq \CBdsem{\s}_{\interpretation} ; \CBdsem{\t}_{\interpretation} $&$ \CBdsem{\s \piu \t}_{\interpretation} \defeq \CBdsem{\s}_{\interpretation} \piu \CBdsem{\t}_{\interpretation} $ \\
 & & $\CBdsem{\copier{A}} \defeq \copier{\alpha_\sort(A)}$ & $\CBdsem{\discharger{A}} \defeq \discharger{\alpha_\sort(A)}$ & 
\end{tabular}
\caption{The  morphism of $\algt$-cd rig categories $\CBdsem{-}_\interpretation \colon \CatTapeTCD[\Gamma][\algt] \to \Cat{C}$ induced by  an interpretation $\interpretation = (\alpha_\sort, \alpha_\Gamma)$ of the monoidal signature $(\sort,\Gamma)$ in $\Cat{C}$. This is the compositional semantics of tapes.}
\label{eq:SEMANTICA}
\end{figure}

\begin{example}[Probabilistic Boolean circuits]\label{example: probabilistic control}
 Consider a monoidal signature with a single sort, $\sort = \{ A\}$, and  generators $\Gamma = \{ \Andgate , \Orgate , \Notgate , \Flip{0} , \Flip{1}, \CBcocopier \}$. All generators have coarity $A$, while their arities are determined by the the number of ports on the left: for instance $\Andgate$ has arity $AA$, while $\Flip{0}$ has arity $1$. Intuitively, the first five generators represent operations and constants of boolean algebras ($\wedge$, $\vee$, $\neg$, $0$ and $1$); the last generator takes in input two signals on the left and, if they are equal, it outputs the same signal on the right. 
 
To make this formal we define an interpretation $\interpretation$ of $(\sort, \Gamma)$ into \(\Sets_{\subdistr}\).
The sort \(A\) is interpreted as the set of booleans, \(\alpha_{\sort}(A) = 2\), and the generators as the following arrows of $\Sets_{\subdistr}$.
  \[
    \begin{array}{rclrclrcl}
      \alpha_{\Gamma}(\Andgate)\colon 2 \times 2& \to & 2 & \alpha_{\Gamma}(\Orgate)\colon 2 \times 2& \to & 2 & \alpha_{\Gamma}(\Notgate)\colon  2& \to & 2 \\
      (x,y)&\mapsto&\delta_{x\wedge y} & (x,y)&\mapsto&\delta_{x\vee y} & x & \mapsto & \delta_{\neg x}\\[4pt]
      \alpha_{\Gamma}(\Flip{0}) \colon 1 & \to & 2 & \alpha_{\Gamma}(\Flip{1}) \colon 1 & \to & 2 & \alpha_{\Gamma}(\CBcocopier) \colon 2 \times 2 & \to & 2 \\
      \bullet &\mapsto&\delta_{0} & \bullet &\mapsto&\delta_{1} & (x,x) &\mapsto&\delta_{x}
    \end{array}
  \]
Observe that $\CBcocopier$ denotes an arrow in  \(\Sets_{\subdistr}\) that is not total, while all the remaining generators are total. Moreover, \emph{all} generators denote \emph{deterministic} arrows  in \(\Sets_{\subdistr}\). 

To add probabilistic behaviours, one can consider tape diagrams for $\mathbb{PCA}$-cd categories, where $\mathbb{PCA}$ is the algebraic theory in \Cref{ex:algebraic-theory-convex}. Indeed since  \(\Sets_{\subdistr}\) is a $\mathbb{PCA}$-cd category (Proposition \ref{prop:kleisliTrig}), then by Theorem \ref{th:tapes-free-T-cd}, the intepretation $\interpretation$ induces a morphism $\CBdsem{-}_\interpretation \colon \CatTapeTCD[\Gamma][\mathbb{PCA}] \to \Sets_{\subdistr}$ defined as in Figure~\ref{eq:SEMANTICA}. 
For example, $\CBdsem{-}_\interpretation$ assigns to  the following tape on the left
\begin{equation}\label{eq:exprobabilistictapes}
        
    \InputIfFileExists{tapes/examples/ANDpOR.tikz}{}{\input{./tikz/tapes/examples/ANDpOR.tikz}}
 \qquad \qquad 
    \InputIfFileExists{tapes/examples/1p0.tikz}{}{\input{./tikz/tapes/examples/1p0.tikz}}

\end{equation}
the arrow $2 \times 2 \to 2$ of \(\Sets_{\subdistr}\) which maps any pair of booleans $(x,y)$ into $x\wedge y$ with probability $p$ and $x \vee y$ with probability $(1-p)$.
The tape on the right is mapped by $\CBdsem{-}_\interpretation$ into the arrow $1 \to 2$ assigning to $\bullet$ the distribution $1 \mapsto p$, $0 \mapsto (1-p)$. By denoting this tape with $ \Flip{p}[t]$, one can easily see that there is a straightforward encoding of the diagrammatic calculus in \cite{piedeleu2025boolean} into $\CatTapeTCD[\Gamma][\mathbb{PCA}]$. 

While a more detailed comparison with \cite{piedeleu2025boolean} is left for future work, we conclude by highlighting that the use of tapes enables a natural implementation of probabilistic control, as illustrated in the leftmost diagram of \eqref{eq:exprobabilistictapes}. In \cite{piedeleu2025boolean}, control is achieved through the probabilistic multiplexer, which is represented by the tape $\Ifgate[t][A] \defeq 
    \InputIfFileExists{tapes/examples/multiplex.tikz}{}{\input{./tikz/tapes/examples/multiplex.tikz}}
$.

Intuitively, when $ \Flip{p}[t]$, $\Tcirc{c}{\!\!\!\!}{\!\!\!\!}$ and $\Tcirc{d}{\!\!\!\!}{\!\!\!\!}$ are plugged, respectively, into the first, second and third wire of the multiplexer, the output will match that of $c$ with probability \( p \) and the one of $d$ with probability \( 1 - p \).
%
The full picture is given by the term $(\;\Flip{p}[t] \;\per \Tcirc{c}{\!\!\!\!}{\!\!\!\!} \per \Tcirc{d}{\!\!\!\!}{\!\!\!\!}) ; \scalebox{0.7}{\Ifgate[t]}$ which, by definition of $\Flip{p}[t]$ in~\eqref{eq:exprobabilistictapes} and $\per$ in~Table~\ref{tab:producttape}, is the tape on the left below. 
\begin{equation*}
    
    \InputIfFileExists{tapes/examples/multiplexT.tikz}{}{\input{./tikz/tapes/examples/multiplexT.tikz}}
 \qquad \qquad 
    \InputIfFileExists{tapes/examples/pchoice.tikz}{}{\input{./tikz/tapes/examples/pchoice.tikz}}

\end{equation*}
The two diagrams above exhibit similar behavior. However, there is an important difference when  \( d \) (or similarly \( c \)) is 
    \InputIfFileExists{tapes/examples/failure.tikz}{}{\input{./tikz/tapes/examples/failure.tikz}}
\!\!\!  denoting the null subdistribution, intuitively representing failure. The circuit on the left always fails (Lemma 4.4 in \cite{piedeleu2025boolean}); instead, the output of the circuit on the right 
matches the output of $c$ (or $d$) with probability $p$ (or $1-p$). 

Since multiplying ($\per$) by a null behaviour always results in a null behaviour, we believe that issues similar to the one of the multiplexer are unavoidable when having only the monoidal product $\per$. The additional use of $\piu$ provides, in our opinion, a valuable representation of control.

%
%

\end{example}

\section{Conclusions}

In this work, we have extended the notion of tape diagrams in \cite{bonchi2023deconstructing} to accommodate Kleisli categories of monoidal monads. To achieve this, we introduced fc-cd rig categories (Definition \ref{def: distributive gs-monoidal category}), which govern the interactions between (natural) $\piu$-monoids and $\per$-comonoids. We then demonstrated that these Kleisli categories naturally form fc-cd rig categories (Proposition \ref{prop:Kleislifccd}).
Furthermore, we observed that when a monoidal monad is equipped with an algebraic presentation $\mathbb{T}=(\Sigma, E)$, the operations in $\Sigma$ induce natural transformations within the Kleisli categories (Proposition \ref{prop:kleisliTrig}). These monoids, comonoids, and natural transformations enabled us to define a 2-dimensional diagrammatic language for the morphisms of $\algt$-cd rig categories freely generated by a monoidal signature $\Gamma$ (Theorem \ref{th:tapes-free-T-cd}). Collectively, our results ensure that whenever $\Gamma$ is interpreted within the Kleisli category of a monoidal monad, this interpretation can be uniquely extended to a rig functor from the category of diagrams $\CatTapeTCD[\Gamma][\algt]$. For example, by interpreting Boolean circuits in $\Sets_{\subdistr}$, we obtain a language that captures the expressiveness of \cite{piedeleu2025boolean} while naturally incorporating probabilistic control (Example \ref{example: probabilistic control}).

This work opens several promising research directions. (a) Extending tape diagrams with monoidal traces for $\piu$ could enable the modeling of iteration and, potentially, the development of program logics akin to \cite{bonchi2024diagrammatic}, but tailored for effectful programs. The necessary constraints on monads could be drawn from \cite{jacobs2010coalgebraic,haghverdi2000categorical}. (b) We aim to further explore the connections with probabilistic Boolean circuits in \cite{piedeleu2025boolean}. Notably, an alternative encoding of the calculus in \cite{piedeleu2025boolean} into $\mathbb{PCA}$-tapes maps each wire to the identity tape of $\objn{2}$. We anticipate that this may clarify links to the theory of \emph{effectus} \cite{introductioneffectus}. (c) Our current approach does not accommodate the monad of convex sets of distributions, as it is not monoidal. To incorporate probability and non-determinism, we may leverage the graded monad from \cite{DBLP:journals/pacmpl/LiellCockS25} and extend tape diagrams for graded monads similarly to the string diagrams in \cite{DBLP:journals/corr/abs-2501-18404}. (d) As discussed in Section \ref{ssec:rig}, $\mathbb{T}$-rig categories are enriched over models of $\mathbb{T}$. Investigating whether our tape diagrams provide a principled approach to enriching string diagrams --such as the one proposed in \cite{villoria2024enriching}-- represents another intriguing avenue for future research.

\bibliography{references}

\appendix

\section{Coherence Axioms}\label{app:coherence axioms}

In this Appendix we collect together various Figures, listing the coherence axioms required by the definition of the algebraic structures we consider in the article.

 \begin{figure}[H]
     \begin{equation} \label{ax:monoidaltriangle}
         \input{tikz-cd/monoidal_triangle.tikz} \tag{M1}
     \end{equation}
     \begin{equation}\label{ax:monoidalpentagone}
         \input{tikz-cd/monoidal_pentagon.tikz} \tag{M2}
     \end{equation}
     \caption{Coherence axioms of monoidal categories}
     \label{fig:moncatax}
 \end{figure}

\begin{figure}[H]
   \begin{minipage}[t]{0.48\textwidth}
       \begin{equation}\label{eq:symmax1}
           \input{tikz-cd/symmax1.tikz} \tag{S1}
       \end{equation}
   \end{minipage}
   \hfill
   \begin{minipage}[t]{0.48\textwidth}
        \begin{equation}\label{eq:symmax2}
            \input{tikz-cd/symmax2.tikz} \tag{S2}
        \end{equation}    
   \end{minipage}
    \begin{equation}\label{eq:symmax3}
        \input{tikz-cd/symmax3.tikz} \tag{S3}
    \end{equation}
    \caption{Coherence axioms of symmetric monoidal categories}
    \label{fig:symmmoncatax}
\end{figure}

\begin{figure}[H]
    \begin{equation}
        \input{tikz-cd/comonoid_assoc.tikz}\tag{Com1}
    \end{equation}
    \begin{minipage}[t]{0.50\textwidth}
        \begin{equation}\tag{Com2}
            \input{tikz-cd/comonoid_unit.tikz}
        \end{equation}
    \end{minipage}
    \hfill
    \begin{minipage}[t]{0.46\textwidth}
        \begin{equation}\tag{Com3}
            \input{tikz-cd/comonoid_comm.tikz}
        \end{equation}
    \end{minipage}
    \caption{Cocommutative comonoid axioms}
    \label{fig:comonoidax}
\end{figure}

\begin{figure}[H]
    \begin{equation}\label{eq:coherence diag}\tag{FP1}
        \begin{tikzcd}[column sep=4.5em,baseline=(current  bounding  box.center)]
            X \perG Y \ar[r,"\copier{X \perG Y}"] \ar[d,"\copier X \perG \copier{Y}"'] & (X \perG Y) \perG (X \perG Y) \\
            (X \perG Y) \perG (Y \perG Y) \ar[d,"\assoc X X {Y \perG Y}"'] \\
            X \perG (X \perG (Y \perG Y)) \ar[d,"\id X \perG \Iassoc X Y Y"'] & X \perG (Y \perG (X \perG Y)) \ar[uu,"\Iassoc X Y {X \perG Y}"'] \\
            X \perG ((X \perG Y) \perG Y) \ar[r,"\id X \perG (\symm{X}{Y}^{\perG} \perG \id Y)"] & X \perG ((Y \perG X) \perG Y) \ar[u,"\id X \perG \assoc Y X Y"'] 
        \end{tikzcd}
    \end{equation}
    \\
	\begin{minipage}[b]{0.33\textwidth}
	 	\begin{equation}\label{eq:coherence bang}\tag{FP2}
            \begin{tikzcd}[baseline=(current  bounding  box.center)]
                X \perG Y \ar[r,"\discharger{X \perG Y}"] \ar[d,"\discharger{X} \perG \discharger{Y}"'] & \unoG \\
                \unoG \perG \unoG \ar[ur,"\lunit \unoG"']
            \end{tikzcd}
	    \end{equation} 
	\end{minipage}
	\hfill
	\begin{minipage}[b]{0.26\textwidth}
		\begin{equation}\tag{FP3}
		\begin{tikzcd}
		I \ar[r,shift left=2,"\copier{I}"] \ar[r,shift right=2,"\Ilunit I"'] & I \perG I
		\end{tikzcd}
		\end{equation}
		\end{minipage}
	\hfill
	\begin{minipage}[b]{0.26\textwidth}
		\begin{equation}\label{eq:bang I = id I}\tag{FP4}
            \begin{tikzcd}
            I \ar[r,shift left=2,"\discharger{I}"] \ar[r,shift right=2,"\id I"'] & I
            \end{tikzcd}
		\end{equation}
	\end{minipage}
 \caption{Coherence axioms for cocommutative comonoid}
\label{fig:fpcoherence}
 \end{figure}

 \begin{figure}[H]
    \begin{equation}\tag{Mon1}
        \input{tikz-cd/monoid_assoc.tikz}
    \end{equation}
    \begin{minipage}[t]{0.50\textwidth}
        \begin{equation}\label{ax:Mon2}\tag{Mon2}
            \input{tikz-cd/monoid_unit.tikz}
        \end{equation}
    \end{minipage}
    \hfill
    \begin{minipage}[t]{0.46\textwidth}
        \begin{equation}\tag{Mon3}
            \input{tikz-cd/monoid_comm.tikz}
        \end{equation}
    \end{minipage}
    \caption{Commutative monoid axioms}
    \label{fig:monoidax}
\end{figure}

 \begin{figure}[H]		
    \begin{equation}\label{eq:coherence codiag}\tag{FC1}
        \begin{tikzcd}[column sep=4.5em,baseline=(current  bounding  box.center)]
        (X \perG Y) \perG (X \perG Y) \ar[dd,"\assoc X Y {X \perG Y}"'] \ar[r,"\codiag{X \perG Y}"] & X \perG Y \\
        & (X \perG X) \perG (Y \perG Y) \ar[u,"\codiag X \perG \codiag Y"']\\
            X \perG (Y \perG (X \perG Y)) \ar[d,"\id X \perG \Iassoc Y X Y"'] & X \perG (X \perG (Y \perG Y)) \ar[u,"\Iassoc X X {Y \perG Y}"']  \\
        X \perG ((Y \perG X) \perG Y) \ar[r,"\id X \perG ( \symm{Y}{X}^{\perG} \perG \id Y)"] & X \perG (( X \perG Y) \perG Y) \ar[u,"\id X \perG \assoc X Y Y"']
        \end{tikzcd}
    \end{equation}
    \\
\begin{minipage}[b]{0.33\textwidth}
	\begin{equation}\label{eq:coherence cobang}\tag{FC2}
        \begin{tikzcd}[baseline=(current  bounding  box.center)]
        \unoG \ar[r,"\cobang{X \perG Y}"] \ar[d,"\Ilunit \unoG"']  & X \perG Y \\
        \unoG \perG \unoG \ar[ur,"\cobang X \perG \cobang Y"']  
        \end{tikzcd}
	\end{equation}
\end{minipage}
\hfill
\begin{minipage}[b]{0.26\textwidth}
	\begin{equation}\tag{FC3}
        \begin{tikzcd}
        I \perG I \ar[r,shift left=2,"\codiag I"] \ar[r,shift right=2,"\lunit I"'] &  I
        \end{tikzcd}
	\end{equation}
\end{minipage}
\hfill
\begin{minipage}[b]{0.26\textwidth}
	\begin{equation}\label{eq:cobang I = id I}\tag{FC4}
        \begin{tikzcd}
        I \ar[r,shift left=2,"\cobang I"] \ar[r,shift right=2,"\id I"'] & I
        \end{tikzcd}
	\end{equation}
\end{minipage}
\caption{Coherence axioms for commutative monoids}
\label{fig:fccoherence}
\end{figure}


\begin{figure}[H]
    \begin{tabular}{c c}
    \begin{tikzcd}[ampersand replacement = \&]
      {T(X) \perG T(Y)} \arrow{r}{m_{X,Y}} \arrow{d}[swap]{\sigma_{T(X), T(Y)}} \& {T(X \perG Y)} \arrow{d}{T(\sigma_{X,Y})} \\
      {T(Y) \perG T(X)} \arrow{r}[swap]{m_{Y,X}} \& {T(Y \perG X)}
    \end{tikzcd} &
    \begin{tikzcd}[ampersand replacement = \&]
      {X \perG Y} \arrow{dr}[swap]{\eta_{X \perG Y}} \arrow{r}{\eta_{X} \perG \eta_{Y}} \& {T(X) \perG T(Y)} \arrow{d}{m_{X,Y}} \\
      \& {T(X \perG Y)}
    \end{tikzcd}\\ \text{($T$-Lax1)} & \text{($\eta$-mon.tr.)}\\ 
  
  \addlinespace[2em]
    \begin{tikzcd}[ampersand replacement = \&]
      {TT(X) \perG TT(Y)} \arrow{r}{\mu_{X} \perG \mu_{Y}} \arrow{d}[swap]{m_{T(X),T(Y)}} \& {T(X) \perG T(Y)} \arrow{dd}{m_{X,Y}} \\
      {T(T(X) \perG T(Y))} \arrow{d}[swap]{T(m_{X,Y})} \& \\
      {TT(X \perG Y)} \arrow{r}[swap]{\mu_{X \perG Y}} \& {T(X \perG Y)}
    \end{tikzcd} &
    \begin{tikzcd}[ampersand replacement = \&]
      {T(X) \perG (T(Y) \perG T(Z))} \arrow{d}[swap]{\id{T(X)} \perG m_{Y,Z}} \arrow{r}{\alpha_{T(X), T(Y), T(Z)}} \& {(T(X) \perG T(Y)) \perG T(Z)} \arrow{d}{m_{X,Y} \perG \id{T(Z)}} \\
      {T(X) \perG T(Y \perG Z)} \arrow{d}[swap]{m_{X, Y \perG Z}} \& {T(X \perG Y) \perG T(Z)} \arrow{d}{m_{X \perG Y, Z}} \\
      {T(X \perG (Y \perG Z))} \arrow{r}[swap]{T(\alpha_{X,Y,Z})} \& {T((X \perG Y) \perG Z)}
    \end{tikzcd}\\ \text{($\mu$-mon.tr.)} & \text{($T$-Lax2)}\\
    \addlinespace[2em]
    \begin{tikzcd}[ampersand replacement = \&]
      {T(X)} \arrow{d}[swap]{\lambda_{T(X)}} \arrow{r}{T(\lambda_{X})} \& {T(I \perG X)}\\
      {I \perG T(X)} \arrow{r}[swap]{\eta_{I} \perG \id{T(X)}} \& {T(I) \perG T(X)} \arrow{u}[swap]{m_{I,X}}
    \end{tikzcd} &
    \begin{tikzcd}[ampersand replacement = \&]
      {T(X)} \arrow{d}[swap]{\rho_{T(X)}} \arrow{r}{T(\rho_{X})} \& {T(X \perG I)}\\
      {T(X) \perG I} \arrow{r}[swap]{\id{T(X)} \perG \eta_{I}} \& {T(X) \perG T(I)} \arrow{u}[swap]{m_{X,I}}
    \end{tikzcd}\\ \text{($T$-Lax3)} & \text{($T$-Lax4)}
    \end{tabular}
    \caption{Axioms for symmetric monoidal monads.}\label{fig:monoidalmonads}
  \end{figure}

\begin{figure}[H]
    \begin{minipage}[t]{0.45\textwidth}
        \begin{equation}
            \label{eq:rigax1}\tag{R1}
            \scalebox{0.9}{$\input{tikz-cd/rigax1.tikz}$}
        \end{equation}    
    \end{minipage}
    \hfill
    \begin{minipage}[t]{0.45\textwidth}
        \begin{equation}
            \label{eq:rigax2}\tag{R2}
            \scalebox{0.9}{$\input{tikz-cd/rigax2.tikz}$}
        \end{equation}
    \end{minipage}
    
    \begin{equation}
        \label{eq:rigax3}\tag{R3}
        \scalebox{0.9}{$\input{tikz-cd/rigax3.tikz}$}
    \end{equation}
    \begin{equation}
        \label{eq:rigax4}\tag{R4}
        \scalebox{0.9}{$\input{tikz-cd/rigax4.tikz}$}
    \end{equation}
    \begin{equation}
        \label{eq:rigax5}\tag{R5}
        \scalebox{0.9}{$\input{tikz-cd/rigax5.tikz}$}
    \end{equation}

    \begin{minipage}[t]{0.25\textwidth}
        \begin{equation}
            \label{eq:rigax6}\tag{R6}
            \scalebox{0.9}{$\input{tikz-cd/rigax6.tikz}$}
        \end{equation}
    \end{minipage}
    \hfill
    \begin{minipage}[t]{0.45\textwidth}
        \begin{equation}
            \label{eq:rigax7}\tag{R7}
            \scalebox{0.9}{$\input{tikz-cd/rigax7.tikz}$}
        \end{equation}
    \end{minipage}
    \hfill
    \begin{minipage}[t]{0.25\textwidth}
        \begin{equation}
            \label{eq:rigax8}\tag{R8}
            \scalebox{0.9}{$\input{tikz-cd/rigax8.tikz}$}
        \end{equation}
    \end{minipage}
    \\
    \begin{minipage}[t]{0.48\textwidth}
        \begin{equation}
            \label{eq:rigax9}\tag{R9}
            \scalebox{0.9}{$\input{tikz-cd/rigax9.tikz}$}
        \end{equation}    
    \end{minipage}
    \hfill
    \begin{minipage}[t]{0.48\textwidth}
        \begin{equation}
            \label{eq:rigax10}\tag{R10}
            \scalebox{0.9}{$\input{tikz-cd/rigax10.tikz}$}
        \end{equation}
    \end{minipage}
    \\
    \begin{minipage}[t]{0.50\textwidth}
        \begin{equation}
            \label{eq:rigax11}\tag{R11}
            \scalebox{0.9}{$\input{tikz-cd/rigax11.tikz}$}
        \end{equation}    
    \end{minipage}
    \hfill
    \begin{minipage}[t]{0.46\textwidth}
        \begin{equation}
            \label{eq:rigax12}\tag{R12}
            \scalebox{0.9}{$\input{tikz-cd/rigax12.tikz}$}
        \end{equation}
    \end{minipage}
    \caption{Coherence Axioms of symmetric rig categories}
    \label{fig:rigax}
\end{figure}

\begin{figure}[H]
    \begin{equation}
        \label{eq:dl1}
        \input{tikz-cd/dl1.tikz}
    \end{equation}
    \begin{equation}
        \label{eq:dl2}
        \input{tikz-cd/dl2.tikz}
    \end{equation}
    \begin{equation}
        \label{eq:dl3}
        \input{tikz-cd/dl3.tikz}
    \end{equation}
    \begin{minipage}[t]{0.44\textwidth}
    \begin{equation}
        \label{eq:dl4}
        \input{tikz-cd/dl4.tikz}
    \end{equation}
    \end{minipage}
    \hfill
    \begin{minipage}[t]{0.54\textwidth}
    \begin{equation}
        \label{eq:dl5}
        \input{tikz-cd/dl5.tikz}
    \end{equation}
    \end{minipage}
    \\
    \begin{minipage}[t]{0.48\textwidth}
        \begin{equation}\label{eq:dl7}
            \input{tikz-cd/dl7.tikz}
        \end{equation}
        \end{minipage}
        \hfill
        \begin{minipage}[t]{0.50\textwidth}
        \begin{equation}
            \label{eq:dl8}
            \input{tikz-cd/dl8.tikz}
        \end{equation}
    \end{minipage}
    \begin{equation}
        \label{eq:dl6}
        \input{tikz-cd/dl6.tikz}
    \end{equation}
    \caption{Derived laws of symmetric rig categories}
    \label{fig:dlaw}
\end{figure}

\section{Monoidal Categories and String Diagrams}\label{sec:monoidal}
In this Appendix we recall the key aspects of the correspondence between monoidal categories and string diagrams \cite{joyal1991geometry,selinger2010survey}. Our starting point is regarding string diagrams as terms of a typed language. Given a set $\sort$ of basic \emph{sorts}, hereafter denoted by $A,B\dots$, types are elements of $\sort^\star$, i.e.\ words over $\sort$. Terms are defined by the following context free grammar
\begin{equation}\label{eq:syntaxsymmetricstrict}
\begin{array}{rcl}
f & ::=& \; \id{A} \; \mid \; \id{I} \; \mid \; \gen \; \mid \; \sigma_{A,B}^{\perG} \; \mid \;   f ; f   \; \mid \;  f \perG f \\
\end{array}
\end{equation}  
where $s$ belongs to a fixed set $\sign$ of \emph{generators} and $I$ is the empty word. Each $s\in \sign$ comes with two types: arity $\ar(s)$ and coarity $\coar(s)$. The tuple $(\sort, \sign, \ar, \coar)$, $\sign$ for short, forms a \emph{monoidal signature}. Amongst the terms generated by  \eqref{eq:syntaxsymmetricstrict}, we consider only those that can be typed according to the inference rules in Table \ref{fig:freestricmmoncatax}. String diagrams are such terms modulo the axioms in Table \ref{fig:freestricmmoncatax} where,  for any $X,Y\in \sort^\star$,  $\id{X}$ and $\sigma_{X,Y}^{\perG}$ can be easily built using $\id I$, $\id{A}$, $\sigma_{A,B}^{\perG}$, $\perG$ and $;$ (see e.g.~\cite{ZanasiThesis}).
\begin{table}[h]
    \centering
    \scalebox{0.8}{
    \begin{tabular}{c c}
        \begin{tabular}{c}
            \toprule
            Objects ($A\in \sort$) \\
            \midrule
            $X \; ::=\; \; A \; \mid \; \unoG \; \mid \;  X \perG X \vphantom{\sigma_{A,B}^{\perG}}$ \\
            \midrule
            \makecell{
                \\[-2pt] $(X\perG Y)\perG Z=X \perG (Y \perG Z)$ \\ 
                $X \perG \unoG = X $ \\ 
                $\unoG \perG X = X$ \\[2pt]
            } \\[13pt]
            \bottomrule
        \end{tabular}
        &
        \begin{tabular}{cc}
            \toprule
            \multicolumn{2}{c}{Arrows ($A\in \sort$, $s\in \sign$)} \\
            \midrule
            \multicolumn{2}{c}{$f \; ::=\; \; \id{A} \; \mid \; \id{\unoG} \; \mid \; \gen  \; \mid \;   f ; f   \; \mid \;  f \perG f \; \mid \; \sigma_{A,B}^{\perG}$} \\
            \midrule
            $(f;g);h=f;(g;h)$ & $id_X;f=f=f;id_Y$\\
            \multicolumn{2}{c}{$(f_1\perG f_2) ; (g_1 \perG g_2) = (f_1;g_1) \perG (f_2;g_2)$} \\
            $id_{\unoG}\perG f = f = f \perG id_{\unoG}$ & $(f \perG g)\, \perG h = f \perG \,(g \perG h)$ \\
            $\sigma_{A, B}^{\perG}; \sigma_{B, A}^{\perG}= id_{A \perG B}$ & $(\gen \perG id_Z) ; \sigma_{Y, Z}^{\perG} = \sigma_{X,Z}^{\perG} ; (id_Z \perG \gen)$ \\
            \bottomrule
        \end{tabular}
    \end{tabular}
    } 

    \vspace{2em} 

    \scalebox{0.75}{
    \begin{tabular}{c}
        \toprule
        Typing rules \\
        \midrule
        $
        {id_A \colon A \!\to\! A} \qquad  {id_\unoG \colon \unoG \!\to\! \unoG} \qquad {\sigma_{A, B}^{\perG} \colon A \perG B \!\to\! B \perG A} \qquad 
            \inferrule{\gen \colon \ar(s) \!\to\! \coar(s) \in \sign}{\gen \colon \ar(s) \!\to\! \coar(s)} \qquad
            \inferrule{f \colon X_1 \!\to\! Y_1 \and g \colon X_2 \!\to\! Y_2}{f \perG g \colon X_1 \perG X_2 \!\to\! Y_1 \perG Y_2}  \qquad
            \inferrule{f \colon X \!\to\! Y \and g \colon Y \!\to\! Z}{f ; g \colon X \!\to\! Z}
        $\\    
        \bottomrule  
    \end{tabular}
    } 

    \caption{Axioms for $\CatString$}
    \label{fig:freestricmmoncatax}
\end{table}

String diagrams enjoy an elegant graphical representation: a generator $\gen$ in $\sign$ with arity $X$ and coarity $Y$ is depicted as  a \emph{box} having \emph{labelled wires} on the left and on the right representing, respectively, the words $X$ and $Y$. For instance $\gen \colon AB \to C$ in $\sign$ is depicted as the leftmost diagram below. Moreover, $\id{A}$ is displayed as one wire,  $id_{\unoG} $ as the empty diagram and $\sigma_{A,B}^{\perG}$ as a crossing:
\begin{center}
    $
    \InputIfFileExists{generator.tikz}{}{\input{./tikz/generator.tikz}}
 \qquad \qquad 
    \begin{tikzpicture}
	\begin{pgfonlayer}{nodelayer}
		\node [style=label] (8) at (1.5, 0) {$A$};
		\node [style=label] (11) at (-1.5, 0) {$A$};
	\end{pgfonlayer}
	\begin{pgfonlayer}{edgelayer}
		\draw (11) to (8);
	\end{pgfonlayer}
\end{tikzpicture}
}
 \qquad  \qquad     
    \InputIfFileExists{empty.tikz}{}{\input{./tikz/empty.tikz}}
 \qquad  \qquad   
    \InputIfFileExists{symm.tikz}{}{\input{./tikz/symm.tikz}}
$
\end{center}
Finally, composition $f;g$ is represented by connecting the right wires 
of $f$ with the left wires of $g$ when their labels match, 
while the monoidal product $f \perG g$ is depicted by stacking the corresponding 
diagrams on top of each other: 
\begin{center}
    $
    \InputIfFileExists{seq_comp.tikz}{}{\input{./tikz/seq_comp.tikz}}
 \qquad \qquad \qquad  
    \InputIfFileExists{par_comp.tikz}{}{\input{./tikz/par_comp.tikz}}
$
\end{center}
The first three rows of axioms for arrows in Table~\ref{fig:freestricmmoncatax}
are implicit in the 
graphical representation while the axioms in the last row  are displayed as 
\begin{center}
    $
    \InputIfFileExists{stringdiag_ax1_left.tikz}{}{\input{./tikz/stringdiag_ax1_left.tikz}}
 = 
    \begin{tikzpicture}
	\begin{pgfonlayer}{nodelayer}
		\node [style=label] (8) at (1.5, 0.5) {$A$};
		\node [style=label] (11) at (-1.5, 0.5) {$A$};
		\node [style=label] (12) at (1.5, -0.5) {$B$};
		\node [style=label] (13) at (-1.5, -0.5) {$B$};
	\end{pgfonlayer}
	\begin{pgfonlayer}{edgelayer}
		\draw (11) to (8);
		\draw (13) to (12);
	\end{pgfonlayer}
\end{tikzpicture}
}
 \quad\qquad 
    \InputIfFileExists{stringdiag_ax2_left.tikz}{}{\input{./tikz/stringdiag_ax2_left.tikz}}
 = 
    \InputIfFileExists{stringdiag_ax2_right.tikz}{}{\input{./tikz/stringdiag_ax2_right.tikz}}
$
\end{center}

Hereafter, we call  $\CatString$ the category having as objects words in $\sort^\star$ and as arrows string diagrams. Theorem 2.3 in~\cite{joyal1991geometry} states that  $\Cat{C}_\sign$ is a \emph{symmetric strict monoidal category freely generated} by $\sign$.
\begin{definition} A \emph{symmetric monoidal category}  consists 
of a category $\Cat{C}$, a functor $\perG \colon \Cat{C} \times \Cat{C} \to \Cat{C}$,
an object $\unoG$  and natural isomorphisms \[ \alpha_{X, Y, Z} \colon (X \perG Y) \perG Z \to X \perG (Y \perG Z) \qquad \lambda_X \colon \unoG \perG X \to X \qquad \rho_X \colon X \perG \unoG \to X \qquad \sigma_{X, Y}^{\perG} \colon X \perG Y \to Y \perG X \]
satisfying some coherence axioms (in Figures~\ref{fig:moncatax} and~\ref{fig:symmmoncatax}).
A monoidal category is said to be \emph{strict} when $\alpha$, $\lambda$ and $\rho$ are all identity natural isomorphisms. A \emph{strict symmetric monoidal functor} is a functor $F \colon \Cat{C} \to \Cat{D}$  preserving $\perG$, $\unoG$ and $\sigma^{\perG}$. We write $\SMC$ for the category of ssm categories and functors. 

\end{definition}
\begin{remark}\label{rmk:symstrict}
In \emph{strict}  symmetric monoidal (ssm) categories  the symmetry $\sigma$ is not forced to be the identity, since this would raise some problems: for instance, $(f_1;g_1) \perG (f_2;g_2) = (f_1;g_2) \perG (f_2;g_1)$ for all $f_1,f_2\colon A \to B$ and $g_1,g_2\colon B \to C$. 
This fact will make the issue of strictness for rig categories rather subtle.
\end{remark}

To illustrate in which sense $\Cat{C}_\sign$ is freely generated, it is convenient to introduce \emph{interpretations} in a fashion similar to~\cite{selinger2010survey}: an interpretation $\interpretation$ of $\sign$ into an ssm category $\Cat{D}$ consists of two functions $\alpha_{\sort} \colon \sort \to Ob(\Cat{D})$ and $\alpha_{\sign}\colon \sign \to Ar(\Cat{D})$ such that, for all $s\in \sign$, $\alpha_{\sign}(s)$ is an arrow having as domain $\alpha_{\sort}^\sharp(\ar(s))$ and codomain $\alpha_{\sort}^\sharp(\coar(s))$, for $\alpha_{\sort}^\sharp\colon \sort^\star \to Ob(\Cat{D})$ the inductive extension of  $\alpha_{\sort}$.  $\CatString$ is freely generated by $\sign$ in the sense that, for all symmetric strict monoidal categories $\Cat{D}$ and  all interpretations $\interpretation$ of $\sign$ in $\Cat{D}$, there exists a unique ssm-functor $\dsem{-}_{\interpretation}\colon \Cat{C}_\sign \to \Cat{D}$ extending $\interpretation$ (i.e. $\dsem{s}_\interpretation=\alpha_{\sign}(s)$ for all $s\in \sign$).

One can easily extend the notion of interpretation of $\sign$ into a symmetric monoidal category $\Cat D$ that is not necessarily strict. In this case we set $\alpha^\sharp_\sort \colon \sort^\star \to Ob(\Cat D)$ to be the \emph{right bracketing} of the inductive extension of $\alpha_\sort$. For instance, $\alpha^\sharp_\sort(ABC) = \alpha_\sort(A) \perG (\alpha_\sort(B) \perG \alpha_\sort(C))$.


%
%
%
%
%
%
\subsection{Finite Product and Finite Coproduct Categories}\label{ssec:fbcat}
%

Within a symmetric monoidal category it is possible to define certain algebraic structures that can characterise $\perG$ as certain (co)limits: Fox's theorem \cite{fox1976coalgebras} states that if every object of a symmetric monoidal category is equipped with a natural and commutative comonoid structure, then $\perG$ is the categorical product $\times$.

\begin{remark}\label{rem:productcomonoids}
	\begin{enumerate}[leftmargin=*]
\item For any two objects $X_1$, $X_2$ of a  fp-category $\Cat C$ (see Definition \ref{def:fp}), $X_1 \per X_2$ is the categorical product $X_1 \times X_2$: the projections $\pi_1\colon X_1 \per X_2 \to X_1$ and $\pi_2\colon X_1 \ X_2 \to X_2$ are  
\[\begin{tikzcd}[column sep=3em]
X_1 \per X_2 \ar[r,"\id{X_1} \per \, \discharger{X_2}"] &  X_1 \per \unoG \ar[r,"\runit{X_1}"]  & X_1
	\end{tikzcd}
\quad \text{and} \quad
\begin{tikzcd}[column sep=3em]
X_1 \per X_2 \ar[r," \discharger{X_1} \per \id{X_2} "] &  \unoG \per X_2 \ar[r,"\lunit{X_2}"]  & X_2.
	\end{tikzcd}	
\]
The unit $\unoG$ is the terminal object and $\discharger{X}$ the unique morphism of type $X \to \unoG$. 
For $f_1\colon Y\to X_1$, $f_2 \colon Y \to X_2$, their pairing $\pairing{f_1,f_2}\colon Y \to X_1 \per X_2$ is given by 
\[\begin{tikzcd}
Y  \ar[r,"\copier{Y}"] & Y \per Y \ar[r,"f_1 \per f_2"] &  X_1 \per X_2.
	\end{tikzcd}\]
	
\item In a fc category, $X_1 \piu X_2$ is instead a categorical coproduct, with injections $\iota_i \colon X_i \to X_1 \piu X_2$ and copairing $\copairing{f_1,f_2} \colon X_1 \piu X_2 \to Y$, for $f_1 \colon X_1 \to Y$ and $f_2 \colon X_2 \to Y$, in the dual way. Also dually $\unoG$ is the initial object 0 and $\cobang X$ the unique morphism of type $0 \to X$.
\end{enumerate}
\end{remark}

Like for (symmetric) monoidal categories, we are interested in freely generated fp (fc) categories. We will illustrate first the case for fc categories.

Like for symmetries, it is enough to add as generators  $\cobang{A}$, $\codiag{A}$ for all $A \in \sort$ and define $\cobang{X}$, $\codiag{X}$ for all $X \in \sort^*$ inductively as follows:

\begin{equation}\label{eq:codiagind}\begin{array}{rcl|rcl}
\cobang{\unoG} &\defeq& \id{\unoG} & \codiag{\unoG} &\defeq& \id{\unoG} \\
\cobang{A \perG W} &\defeq& \cobang{A} \perG \cobang{W} & \codiag{A \perG W}&\defeq& (\id{A} \perG \sigma_{A,W} \perG \id{W}) ; (\codiag{A}\perG \codiag{W})\\
\end{array}
\end{equation}
With these definitions the coherence axioms in Figures~\ref{fig:fccoherence} are automatically satisfied. In the strict setting also the monoid and comonoid axioms get simplified, as illustrated in the second row of Table~\ref{fig:freestrictfbcat}. Like for symmetries, to obtain naturality of  $\cobang{X}$, $\codiag{X}$ for arbitrary arrows, it is enough to impose naturality just with respect to the generators in $\Sigma$. 

\begin{table}
\begin{center}
\begin{tabular}{c}
$\copier{ A}; (\id{ A}\perG \copier{ A}) = \copier{ A};(\copier{ A}\perG \id{ A}) \qquad \copier{ A} ; (\discharger{ A}\perG \id{ A}) = \id{ A} \qquad \copier{ A};\sigma_{ A, A}=\copier{ A}$ \\
$(\id{ A}\perG \codiag{ A}) ; \codiag{ A} = (\codiag{ A}\perG \id{ A}) ; \codiag{ A} \qquad (\cobang{ A}\perG \id{ A}) ; \codiag{ A}  = \id{ A} \qquad \sigma_{ A, A};\codiag{ A}=\codiag{ A}$ \\


$\gen; \discharger{Y}=\discharger{X} \qquad \gen; \copier{Y}=\copier{X}; (\gen \perG \gen)  $\\
$\cobang{X};\gen =\cobang{Y}\qquad
\codiag{X};\gen =(\gen \perG \gen); \codiag{Y}$ \\
\end{tabular}
\end{center}
\caption{Additional axioms for freely generated strict fp and fc categories.}\label{fig:freestrictfbcat}
\end{table}

In a nutshell, the free strict fc category generated by $\sign$ is defined as follows. Objects are elements of $\sort^\star$; 
   arrows are the $\sign$-terms inductively generated as 
   \begin{equation}
\begin{array}{rcl}
f & ::=& \; \id{A} \; \mid \; \id{I} \; \mid \; \gen  \; \mid \;   f ; f   \; \mid \;  f \perG f \; \mid \; \sigma_{A,B}^{\perG} \; \mid \; \cobang{A}\; \mid \; \codiag{A}\\
\end{array}
\end{equation}  
modulo the axioms in   Tables~\ref{fig:freestricmmoncatax} and those in the second and fourth rows of Figure~\ref{fig:freestrictfbcat}. 

The free strict fp category generated by $\Sigma$ is obtained similarly, considering adding as generators $\copier{A}$ and $\discharger{A}$ for all $A\in \sort$ and defining $\copier{X}$, $\discharger{X}$ for all $X \in \sort^*$ inductively as follows:
\begin{equation}\label{eq:diagind}\begin{array}{rcl|rcl}
\discharger{\unoG} &\defeq& \id{\unoG} & \copier{\unoG} &\defeq& \id{\unoG}  \\
\discharger{A \perG W} &\defeq& \discharger{A} \perG \discharger{W} & \copier{A \perG W}&\defeq& (\copier{A}\perG \copier{W}) ; (\id{A} \perG \sigma_{A,W} \perG \id{W})\\
\end{array}\end{equation}
modulo the axioms in   Tables~\ref{fig:freestricmmoncatax} and those in the first and third rows of Figure~\ref{fig:freestrictfbcat}. 

Also the arrows of free strict fc (fp) categories enjoy an elegant graphical representation in terms of string diagrams.  However we are not going to illustrate it now since this would be redundant with tape diagrams.

\newcommand{\lstre}{l}
\newcommand{\rstre}{r}
\newcommand{\dstre}{s}

\section{Appendix to Section \ref{sec:rigcategories}}

\begin{proof}[Proof of Proposition \ref{prop: Kleisli of distributive monoidal}]
Since $T$ is a symmetric monoidal monad over  $(\Cat{C},\per, \uno)$, the monoidal product $\per$ induces a monoidal product $\per_T$ and this provides $\Cat{C}_T$ with the structure of a symmetric monoidal category (see \cite{guitart1980tenseurs}).

Moreover, finite coproducts in $\Cat{C}$ are preserved by  the canonical functor $\mathcal{K}\colon \Cat{C}\to \Cat{C}_T$ (which preserves colimits because it is left adjoint to the forgetful functor $\Cat{C}_T\to \Cat{C}$), and then $\Cat{C}_T$ is a fc category. 

To see that $\Cat{C}_T$ is a rig category, it is enough to observe that $\mathcal{K}\colon \Cat{C}\to \Cat{C}_T$  preserves  coherence axioms of the rig structure of $\Cat{C}$ and the isomorphisms $\Idl{X}{Y}{Z},\Idr{X}{Y}{Z},\annl{X}$ and $\annr{X}$. Their naturality is proven in Lemma \ref{lemma:nat dist Kleisli 3}.
 \end{proof}
 
 Before providing Lemma \ref{lemma:nat dist Kleisli 3}, we need to prove a preliminary result. To do that, denote with $k\colon T(X)\piu T(Y)\to T(X\piu Y)$ the canonical natural arrow $[T(\linj{X}),T(\rinj{Y})]$, and recall that it is given by
\[
\begin{tikzcd}
	{T(X)\piu T(Y)} && {T(X\piu Y)} \\
	& {T(X\piu Y)\piu T(X\piu Y)}
	\arrow["k", from=1-1, to=1-3]
	\arrow["{T(\linj{X})\piu T(\rinj{Y})}"', from=1-1, to=2-2]
	\arrow["{\codiag{T(X\piu Y)}}"', from=2-2, to=1-3]
\end{tikzcd}\]
where 
\begin{equation*}
\linj{X}\defeq\Irunitp{X};(\id{X}\piu \cobang{Y})\qquad \rinj{Y}\defeq\Ilunitp{Y};(\cobang{X}\piu \id{Y})
\end{equation*}
Moreover, we denote with $\lstre\colon X\per T(Y)\to T(X\per Y)$ and $\rstre\colon T(X)\per Y\to T(X\per Y)$ respectively the left and right strength of $T$. Denote also with $\dstre\colon T(X)\per T(Y)\to T(X\per Y)$ one of the two equal composition $r;T(l);\mu_{X\per Y}=l;T(r);\mu_{X\per Y}$. The naturality of the strengths and $\mu$ imply that $\dstre$ is natural: for every $f\colon X\to X'$ and $g\colon Y\to Y'$ the following diagram commutes
\[
\begin{tikzcd}
	{T(X)\per T(Y)} & {T(X')\per T(Y')} \\
	{T(X\per Y)} & {T(X'\per Y')}
	\arrow["{T(f)\per T(g)}", from=1-1, to=1-2]
	\arrow["\dstre"', from=1-1, to=2-1]
	\arrow["\dstre", from=1-2, to=2-2]
	\arrow["{T(f\per g)}"', from=2-1, to=2-2]
\end{tikzcd}\]
  In addition, the following diagram commutes 
  \begin{equation}\label{diagram: Tcodiag}
  \begin{tikzcd}
  	{T(X)\piu T(X)} & {T(X)} \\
  	& {T(X\piu X)}
  	\arrow["{\codiag{T(X)}}", from=1-1, to=1-2]
  	\arrow["k"', from=1-1, to=2-2]
  	\arrow["{T(\codiag{X})}"', from=2-2, to=1-2]
  \end{tikzcd}\end{equation}
thoruogh the following computation:
\begin{align}
k;T(\codiag{X}) &= T(\Irunitp{X};(\id{X}\piu \cobang{Y}))\piu T(\Ilunitp{Y};(\cobang{X}\piu \id{Y}));\codiag{T(X\piu X)};T(\codiag{X})\notag\\
&= T(\Irunitp{X};(\id{X}\piu \cobang{Y}))\piu T(\Ilunitp{Y};(\cobang{X}\piu \id{Y}));(T(\codiag{X})\piu T(\codiag{X}));\codiag{T(X)} \tag{Nat. $\codiag{}$}\\
&=\codiag{T(X)} \tag{\ref{ax:Mon2}}
\end{align}

 \begin{lemma}\label{lemma: nat dist Kleisli 2}
    If $\Cat{C}$ is a fc rig category, then the following diagram commutes:
    \[
    \begin{tikzcd}
        {(T(X)\piu T(Y))\per T(Z)} & {(T(X)\per T(Z))\piu(T(Y)\per T(Z))} \\
        {T(X\piu Y)\per T(Z)} & {T(X\per Z)\piu T(Y\per Z)} \\
        {T((X\piu Y)\per Z)} & {T((X\per Z)\piu (Y\per Z))}
        \arrow["{\dr{T(X)}{T(Y)}{T(Z)}}", from=1-1, to=1-2]
        \arrow["{k\per \id{T(Z)}}"', from=1-1, to=2-1]
        \arrow["{\dstre\piu \dstre}", from=1-2, to=2-2]
        \arrow["\dstre"', from=2-1, to=3-1]
        \arrow["k", from=2-2, to=3-2]
        \arrow["{T\dr{X}{Y}{Z}}"', from=3-1, to=3-2]
    \end{tikzcd}\]
 \end{lemma}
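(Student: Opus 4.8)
The plan is to reduce the whole square to the single rig-coherence identity
\[
(\linj{X}\per\id{Z}) \dcomp \dr{X}{Y}{Z} = \linj{X\per Z} \tag{$\star$}
\]
together with its right-handed analogue $(\rinj{Y}\per\id{Z})\dcomp\dr{X}{Y}{Z} = \rinj{Y\per Z}$, after which everything else is a bookkeeping exercise in naturality rewrites. I will freely use: that $k = [\,T(\linj{X}),T(\rinj{Y})\,] = (T(\linj{X})\piu T(\rinj{Y}))\dcomp\codiag{T(X\piu Y)}$ by definition; that $\dstre = \rstre\dcomp T(\lstre)\dcomp\mu$ is a natural transformation; that $\codiag{}$ and $\cobang{}$ are natural transformations in any finite coproduct category (Definition~\ref{def:fp}), so that any $g\colon W\to W'$ may be slid past a codiagonal, $(g\piu g)\dcomp\codiag{W'} = \codiag{W}\dcomp g$; and the fibrewise equalities of Lemma~\ref{prop:fcrig}.

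First I would prove $(\star)$. Writing $\linj{X} = \Irunitp{X}\dcomp(\id{X}\piu\cobang{Y})$, naturality of $\dr{}{}{}$ in its middle slot along $\cobang{Y}\colon\zero\to Y$ rewrites $\big((\id{X}\piu\cobang{Y})\per\id{Z}\big)\dcomp\dr{X}{Y}{Z}$ as $\dr{X}{\zero}{Z}\dcomp\big(\id{X\per Z}\piu(\cobang{Y}\per\id{Z})\big)$; Lemma~\ref{prop:fcrig} lets me factor $\cobang{Y}\per\id{Z}$ through $\cobang{Y\per Z}$ and the annihilator $\annl{Z}$; then rig axiom~\eqref{eq:dl8} collapses $\dr{X}{\zero}{Z}\dcomp(\id{X\per Z}\piu\annl{Z})$ to $(\runitp{X}\per\id{Z})\dcomp\Irunitp{X\per Z}$; and composing on the left with $\Irunitp{X}\per\id{Z}$ and using $\Irunitp{X}\dcomp\runitp{X}=\id{X}$ gives exactly $\Irunitp{X\per Z}\dcomp(\id{X\per Z}\piu\cobang{Y\per Z}) = \linj{X\per Z}$. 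The right-handed identity is the mirror image, using axiom~\eqref{eq:rigax11} in place of~\eqref{eq:dl8}. (Alternatively, one could invoke the fact that every finite coproduct rig category is distributive with canonical distributors, cf.\ Remark~\ref{remark: distributive arrows in distriutive categories}.)

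Next I would rewrite the left--bottom leg $(k\per\id{T(Z)})\dcomp\dstre_{X\piu Y,Z}\dcomp T\dr{X}{Y}{Z}$. Expanding $k$ and using functoriality of $\per$, then naturality of $\dr{}{}{}$ along $T(\linj{X})\piu T(\rinj{Y})$, then Lemma~\ref{prop:fcrig} to rewrite $\codiag{T(X\piu Y)}\per\id{T(Z)}$ as $\dr{}{}{}\dcomp\codiag{T(X\piu Y)\per T(Z)}$, and finally sliding the resulting codiagonal to the right twice (past $\dstre_{X\piu Y,Z}$ and past $T\dr{X}{Y}{Z}$), I obtain
\[
(k\per\id{T(Z)})\dcomp\dstre_{X\piu Y,Z}\dcomp T\dr{X}{Y}{Z} = \dr{T(X)}{T(Y)}{T(Z)}\dcomp(\psi^l\piu\psi^r)\dcomp\codiag{T((X\per Z)\piu(Y\per Z))}
\]
with $\psi^l := (T(\linj{X})\per\id{T(Z)})\dcomp\dstre_{X\piu Y,Z}\dcomp T\dr{X}{Y}{Z}$ and $\psi^r$ its right analogue. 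Naturality of $\dstre$ along $\linj{X}\colon X\to X\piu Y$ turns $\psi^l$ into $\dstre_{X,Z}\dcomp T\big((\linj{X}\per\id{Z})\dcomp\dr{X}{Y}{Z}\big)$, which by $(\star)$ is $\dstre_{X,Z}\dcomp T(\linj{X\per Z})$, and similarly $\psi^r = \dstre_{Y,Z}\dcomp T(\rinj{Y\per Z})$.

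Finally, for the right--bottom leg, expanding $k_{X\per Z,Y\per Z} = (T(\linj{X\per Z})\piu T(\rinj{Y\per Z}))\dcomp\codiag{}$ and using functoriality of $\piu$ rewrites $\dr{T(X)}{T(Y)}{T(Z)}\dcomp(\dstre_{X,Z}\piu\dstre_{Y,Z})\dcomp k_{X\per Z,Y\per Z}$ as $\dr{T(X)}{T(Y)}{T(Z)}\dcomp\big((\dstre_{X,Z}\dcomp T(\linj{X\per Z}))\piu(\dstre_{Y,Z}\dcomp T(\rinj{Y\per Z}))\big)\dcomp\codiag{T((X\per Z)\piu(Y\per Z))}$, which is literally the expression just obtained for the left leg. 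Hence the square commutes. The main obstacle is not conceptual but organisational: keeping track of the many instances of $\dr{}{}{}$, $\dstre$ and $\codiag{}$ and applying naturality in the right slots, together with establishing $(\star)$, which is the one place where the rig axioms involving $\zero$ are actually used.
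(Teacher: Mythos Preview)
Your proof is correct and follows essentially the same approach as the paper: both reduce the square to the rig-coherence identity $(\star)$, proved via naturality of $\delta^r$, Lemma~\ref{prop:fcrig}, and axiom~\eqref{eq:dl8} (resp.~\eqref{eq:rigax11}), together with naturality of $\dstre$, $\delta^r$ and $\codiag{}$. Your organisation is in fact slightly more streamlined: the paper proceeds through a chain of equivalent intermediate diagrams and invokes the auxiliary identity $k\dcomp T(\codiag{X})=\codiag{T(X)}$ (the diagram~\eqref{diagram: Tcodiag} stated just before the lemma), whereas you compute both legs directly and compare, sliding $\codiag{}$ past $\dstre$ and $T\delta^r$ by naturality alone, which bypasses the need for that auxiliary fact.
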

 \begin{proof}
	First observe that
	\begin{align}
	k\per \id{T(Z)}&=(T(\linj{X})\piu T(\rinj{Y}))\per \id{T(Z)} \notag \\
	&= (T(\Irunitp{X};(\id{X}\piu \cobang{Y}))\piu T(\Ilunitp{Y};(\cobang{X}\piu \id{Y})))\per \id{T(Z)};{\codiag{T(X\piu Y)}\per \id{T(Z)}} \notag\\
	&= T(\Irunitp{X};(\id{X}\piu \cobang{Y}))\piu T(\Ilunitp{Y};(\cobang{X}\piu \id{Y}))\per \id{T(Z)};\dr{T(X\piu Y)}{T(X\piu Y)}{T(Z)};\codiag{T(X\piu Y)\per T(Z)} \tag{Lemma \ref{prop:fcrig}}\\
	&=\dr{T(X)}{T(Y)}{T(Z)};(T(\Irunitp{X};(\id{X}\piu\cobang{Y}))\per \id{T(Z)})\piu (T(\Ilunitp{Y};(\cobang{X}\per \id{Y}))\per \id{T(Z)}) ;\codiag{T(X\piu Y)\per T(Z)} \tag{Nat. $\delta^r$}\\
	&=\dr{T(X)}{T(Y)}{T(Z)};(T(\linj{X})\per \id{T(Z)})\piu(T(\rinj{Y})\per \id{T(Z)});\codiag{T(X\piu Y)\per T(Z)}\notag
	\end{align}
	Hence, the statement is equivalent to the commutativity of the following diagram:
	\[
	\begin{tikzcd}
		{(T(X)\per T(Z))\piu(T(Y)\per T(Z))} & {T(X\per Z)\piu T(Y\per Z)} \\
		{(T(X\piu Y)\per T(Z))\piu (T(X\piu Y)\per T(Z))} \\
		{T(X\piu Y)\per T(Z)} \\
		{T((X\piu Y)\per Z)} & {T((X\per Z)\piu (Y\per Z))}
		\arrow["{\dstre\piu \dstre}", from=1-1, to=1-2]
		\arrow["{(T(\linj{X})\per \id{T(Z)})\piu(T(\rinj{Y})\per \id{T(Z)})}"', from=1-1, to=2-1]
		\arrow["k", from=1-2, to=4-2]
		\arrow["{\codiag{T(X\piu Y)\per T(Z)}}"', from=2-1, to=3-1]
		\arrow["\dstre"', from=3-1, to=4-1]
		\arrow["{T\dr{X}{Y}{Z}}"', from=4-1, to=4-2]
	\end{tikzcd}\]
	Now using the definition of $k$ and (\ref{diagram: Tcodiag}) for the right vertical arrow, and using the naturality of $\codiag{}$ and that of $\dstre$ for the left vertical composition, we obtain the equivalent diagram
	\[
	\begin{tikzcd}[column sep=large]
		{(T(X)\per T(Z))\piu(T(Y)\per T(Z))} & {T(X\per Z)\piu T(Y\per Z)} \\
		{T(X\per Z)\piu T(Y\per Z)} & {T((X\per Z)\piu (Y\per Z))\piu T((X\per Z)\piu (Y\per Z))} \\
		{T((X\piu Y)\per Z)\piu T((X\piu Y)\per Z)} & {T(((X\per Z)\piu (Y\per Z))\piu ((X\per Z)\piu (Y\per Z)))} \\
		{T((X\piu Y)\per Z)} & {T((X\per Z)\piu (Y\per Z))} 
		\arrow["{\dstre\piu \dstre}", from=1-1, to=1-2]
		\arrow["{\dstre\piu \dstre}"', from=1-1, to=2-1]
		\arrow["{T(\linj{X\per Z})\piu T(\rinj{Y\per Z})}", from=1-2, to=2-2]
		\arrow["{T(\linj{X}\per \id{Z})\piu T(\rinj{Y}\per \id{Z})}"', from=2-1, to=3-1]
		\arrow["k", from=2-2, to=3-2]
		\arrow["{\codiag{T((X\piu Y)\per Z)}}"', from=3-1, to=4-1]
		\arrow["{T(\codiag{(X\per Z)\piu (Y\per Z)})}", from=3-2, to=4-2]
		\arrow["{T\dr{X}{Y}{Z}}"', from=4-1, to=4-2]
	\end{tikzcd}\]
	Using (\ref{diagram: Tcodiag}) again, we obtain the commutative diagram
	\[
	\begin{tikzcd}
		{T((X\piu Y)\per Z)\piu T((X\piu Y)\per Z)} & {T((X\per Z)\piu (Y\per Z))\piu T((X\per Z)\piu (Y\per Z))} \\
		{T((X\piu Y)\per Z)\piu(X\piu Y)\per Z))} & {T(((X\per Z)\piu (Y\per Z))\piu ((X\per Z)\piu (Y\per Z)))} \\
		{T((X\piu Y)\per Z)} & {T((X\per Z)\piu (Y\per Z))}
		\arrow["{T(\dr{X}{Y}{Z})\piu T(\dr{X}{Y}{Z})}", from=1-1, to=1-2]
		\arrow["k"', from=1-1, to=2-1]
		\arrow["k", from=1-2, to=2-2]
		\arrow["{T(\dr{X}{Y}{Z}\piu \dr{X}{Y}{Z})}", from=2-1, to=2-2]
		\arrow["{T\codiag{(X\piu Y)\per Z}}"', from=2-1, to=3-1]
		\arrow["{T(\codiag{(X\per Z)\piu (Y\per Z)})}", from=2-2, to=3-2]
		\arrow["{T\dr{X}{Y}{Z}}"', from=3-1, to=3-2]
	\end{tikzcd}\]
	Hence, it only remains to prove that the following triangle commutes:
	\[
	\begin{tikzcd}
		{T(X\per Z)\piu T(Y\per Z)} \\
		{T((X\piu Y)\per Z)\piu T((X\piu Y)\per Z)} & {T((X\per Z)\piu (Y\per Z))\piu T((X\per Z)\piu (Y\per Z))}
		\arrow["{T(\linj{X}\per \id{Z})\piu T(\rinj{Y}\per \id{Z})}"', from=1-1, to=2-1]
		\arrow["{T(\linj{X\per Z})\piu T(\rinj{Y\per Z})}", from=1-1, to=2-2]
		\arrow["{T(\dr{X}{Y}{Z})\piu T(\dr{X}{Y}{Z})}"', from=2-1, to=2-2]
	\end{tikzcd}\]
	which follows from the commutative diagram below obtained through naturality of $\delta^r$ and axioms \ref{eq:dl8} and Lemma \ref{prop:fcrig}
	\[
	\begin{tikzcd}[column sep=large]
		{X\per Z} \\
		{(X\piu 0)\per Z} & {(X\per Z)\piu (0\per Z)} & {(X\per Z)\piu 0} \\
		{(X\piu Y)\per Z} && {(X\per Z)\piu (Y\per Z)}
		\arrow["{\Irunitp{X}\per \id{Z}}"', from=1-1, to=2-1]
		\arrow["{\Irunitp{X\per Z}}",bend left=10 pt ,from=1-1, to=2-3]
		\arrow["{\dr{X}{0}{Y}}"', from=2-1, to=2-2]
		\arrow["{(\id{X}\piu \cobang{Y})\per \id{Z}}"', from=2-1, to=3-1]
		\arrow["{\id{X\per Z}\piu\annl{Z}}", from=2-2, to=2-3]
		\arrow["{(\id{X}\per \id{Z})\piu(\cobang{Y}\per \id{Z})}"{description}, from=2-2, to=3-3]
		\arrow["{\id{X\per Z}\piu \cobang{Y\per Z}}", from=2-3, to=3-3]
		\arrow["{\dr{X}{Y}{Z}}"', from=3-1, to=3-3]
	\end{tikzcd}\] \end{proof}
 \begin{lemma}\label{lemma:nat dist Kleisli 3}
    For a commutative monad $T$ on a fc rig category, the image through $\mathcal{K}\colon C\to C_T$ of the distributor arrows $\delta^r,\delta^s$ is natural in $C_T$.
 \end{lemma}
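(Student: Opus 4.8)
The plan is to reduce each naturality square in $\Cat{C}_T$ to a diagram in $\Cat{C}$ and then close that diagram using the preliminary Lemma~\ref{lemma: nat dist Kleisli 2}, together with the naturality of $\dr{-}{-}{-}$, of $\dstre$, and of $k$ in $\Cat{C}$ and the monad laws. Recall that a morphism $X\to Y$ of $\Cat{C}_T$ is a $\Cat{C}$-morphism $\underline f\colon X\to T(Y)$; Kleisli composition of $\underline f\colon A\to T(B)$ with $\underline g\colon B\to T(C)$ is $\underline f;T(\underline g);\mu_C$; the monoidal product of $f\colon X_1\to X_2$ and $g\colon Y_1\to Y_2$ is the $\Cat{C}$-morphism $(\underline f\per\underline g);\dstre$; the coproduct of $f\colon X\to X'$ and $g\colon Y\to Y'$ is $(\underline f\piu\underline g);k$; and $\mathcal{K}(\dr{X}{Y}{Z})$ is represented by $\dr{X}{Y}{Z};\eta$.

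First I would unfold the naturality square for $\mathcal{K}(\dr{-}{-}{-})$ with respect to $f\colon X\to X'$, $g\colon Y\to Y'$, $h\colon Z\to Z'$ in $\Cat{C}_T$. Expanding the Kleisli composition $((f\piu g)\per_T h);\mathcal{K}(\dr{X'}{Y'}{Z'})$ and $\mathcal{K}(\dr{X}{Y}{Z});((f\per_T h)\piu(g\per_T h))$ into $\Cat{C}$-data and repeatedly using $T(\eta);\mu=\id$ and naturality of $\eta$, the square becomes the assertion that
\[
\bigl(((\underline f\piu\underline g);k)\per\underline h\bigr);\dstre;T(\dr{X'}{Y'}{Z'})
\;=\;
\dr{X}{Y}{Z};\bigl(((\underline f\per\underline h);\dstre)\piu((\underline g\per\underline h);\dstre)\bigr);k
\]
commutes in $\Cat{C}$. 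I would then rewrite the left-hand side as $((\underline f\piu\underline g)\per\underline h);(k\per\id_{T(Z')});\dstre;T(\dr{X'}{Y'}{Z'})$ and apply Lemma~\ref{lemma: nat dist Kleisli 2} (instantiated at $X',Y',Z'$) to the last three arrows, turning them into $\dr{T(X')}{T(Y')}{T(Z')};(\dstre\piu\dstre);k$. Naturality of $\dr{-}{-}{-}$ in $\Cat{C}$, applied to $\underline f,\underline g,\underline h$, then carries $(\underline f\piu\underline g)\per\underline h$ across $\dr{}{}{}$, producing exactly $\dr{X}{Y}{Z};((\underline f\per\underline h)\piu(\underline g\per\underline h));(\dstre\piu\dstre);k$, which is the right-hand side after absorbing $\dstre\piu\dstre$ into the coproduct. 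For $\mathcal{K}(\dl{-}{-}{-})$ I would avoid re-proving a left-handed version of Lemma~\ref{lemma: nat dist Kleisli 2} and instead invoke rig axiom~\eqref{eq:rigax1}, which expresses $\dl{}{}{}$ through $\dr{}{}{}$ and the symmetries $\symmt{}$: since the symmetries of $\Cat{C}_T$ are natural (being part of its symmetric monoidal structure) and $\mathcal{K}(\dr{-}{-}{-})$ is now known to be natural in $\Cat{C}_T$, and since $\piu$ is a bifunctor, $\mathcal{K}(\dl{-}{-}{-})$ is natural as well. The annihilators $\annl{}$ and $\annr{}$ are dealt with by the same strategy using a degenerate instance of Lemma~\ref{lemma: nat dist Kleisli 2} (one summand set to $\zero$), which is strictly simpler.

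The main obstacle is organisational rather than conceptual: one must track carefully which components of $\dstre$ and $k$ appear when $((f\piu g)\per_T h);\mathcal{K}(\dr{}{}{})$ is expanded into $\Cat{C}$, so that the instance of Lemma~\ref{lemma: nat dist Kleisli 2} lines up on the nose. Once the left-hand side has been massaged into the shape $((\underline f\piu\underline g)\per\underline h);(k\per\id);\dstre;T(\dr{}{}{})$, the remaining argument is a short chase and requires no new categorical input beyond the preliminary lemma.
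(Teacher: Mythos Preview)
Your proposal is correct and follows essentially the same approach as the paper: both reduce the Kleisli naturality square to a diagram in $\Cat{C}$ and close it using Lemma~\ref{lemma: nat dist Kleisli 2} together with the naturality of $\dr{-}{-}{-}$ in $\Cat{C}$. The paper treats only the $Z$-component explicitly and says the other components and $\dl{-}{-}{-}$ are similar, whereas you handle all three components simultaneously and derive naturality of $\dl{-}{-}{-}$ from that of $\dr{-}{-}{-}$ via rig axiom~\eqref{eq:rigax1} and naturality of the symmetries; these are pleasant streamlinings but not a different strategy.
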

 \begin{proof}
    We prove the naturality of $\mathcal{K}(\dr{X}{Y}{Z})$
in the $Z$-component, the other cases follow similarly also for $\dl{X}{Y}{Z}$. Consider an arrow $f'\colon Z\to Z'$ in $C_T$, that corresponds to an arrow $f\colon Z\to T(Z')$ in $C$. The statement in $C_T$ of the naturality:
\[
\begin{tikzcd}
	{(X\piu Y)\per Z} & {(X\per Z)\piu (Y\per Z)} \\
	{(X\piu Y)\per Z'} & {(X\per Z')\piu (Y\per Z')}
	\arrow["{\mathcal{K}(\dr{X}{Y}{Z})}", from=1-1, to=1-2]
	\arrow["{(\id{X}\piu\id{Y})\per f}"', from=1-1, to=2-1]
	\arrow["{(\id{X}\per f)\piu (\id{Y}\per f)}", from=1-2, to=2-2]
	\arrow["{\mathcal{K}(\dr{X}{Y}{Z'})}"', from=2-1, to=2-2]
\end{tikzcd}\]

is equivalent to the commutativity of the following outer square diagram in $C$, which follows by naturality of the distributor and $\eta$, and thanks to Lemma \ref{lemma: nat dist Kleisli 2}.

\[
\begin{tikzcd}
	{(X\piu Y)\per Z} & {(X\per Z)\piu (Y\per Z)} & {T((X\per Z)\piu (Y\per Z))} \\
	{(T(X)\piu T(Y))\per T(Z)} & {(T(X)\per T(Z))\piu(T(Y)\per T(Z))} & {T((T(X)\per T(Z))\piu(T(Y)\per T(Z)))} \\
	{T(X\piu Y)\per T(Z)} & {T(X\per Z)\piu T(Y\per Z)} & {T(T(X\per Z)\piu T(Y\per Z))} \\
	{T((X\piu Y)\per Z)} & {T((X\per Z)\piu (Y\per Z))} & {TT((X\per Z)\piu (Y\per Z))} \\
	& {TT((X\per Z)\piu (Y\per Z))} & {T((X\per Z)\piu (Y\per Z))}
	\arrow["{\dr{X}{Y}{Z}}", from=1-1, to=1-2]
	\arrow["{(\eta_X\piu\eta_Y)\per f}"', from=1-1, to=2-1]
	\arrow["\eta", from=1-2, to=1-3]
	\arrow["{(\eta_X\per \id{Z})\piu(\eta_Y\per f)}", from=1-2, to=2-2]
	\arrow["{T((\eta_X\per \id{Z})\piu(\eta_Y\per f))}", from=1-3, to=2-3]
	\arrow["{\dr{T(X)}{T(Y)}{T(Z)}}", from=2-1, to=2-2]
	\arrow["{k\per \id{T(Z)}}"', from=2-1, to=3-1]
	\arrow["\eta", from=2-2, to=2-3]
	\arrow["{\dstre\piu \dstre}", from=2-2, to=3-2]
	\arrow["{T(\dstre\piu \dstre)}", from=2-3, to=3-3]
	\arrow["\dstre"', from=3-1, to=4-1]
	\arrow["\eta", from=3-2, to=3-3]
	\arrow["k", from=3-2, to=4-2]
	\arrow["{T(k)}", from=3-3, to=4-3]
	\arrow["{T\dr{X}{Y}{Z}}"', from=4-1, to=4-2]
	\arrow["\eta", from=4-2, to=4-3]
	\arrow["\eta"', from=4-2, to=5-2]
	\arrow[equals, from=4-2, to=5-3]
	\arrow["\mu", from=4-3, to=5-3]
	\arrow["\mu"', from=5-2, to=5-3]
\end{tikzcd}\]
 \end{proof}
 \newcommand{\twist}{t}

\section{Appendix to Section \ref{sec:fccdrig}}

\begin{proof}[Proof of Proposition \ref{prop:map}]
The first point follows immediately by naturality of $\codiag{X}$ and $\cobang{X}$.

        For the second point, observe that since $0$ is initial, then $\cobang{X}$ is functional and total. To prove functionality of  $\codiag{X}$, which is $\codiag{X};\copier{X}= \copier{X\piu X};(\codiag{X}\per \codiag{X})$, observe that
        \[(\Idl{X}{X}{X}\oplus \Idl{X}{X}{X}); (\Idr{X}{X}{X\oplus X});(\codiag{X}\per \codiag{X})=(\codiag{X\per X}\piu \codiag{X\per X});\codiag{X\per X}\]
        through the following computation:
      \begin{align}
        (\Idl{X}{X}{X}\oplus \Idl{X}{X}{X}); (\Idr{X}{X}{X\oplus X});(\codiag{X}\per \codiag{X})&=(\Idl{X}{X}{X}\oplus \Idl{X}{X}{X}); (\Idr{X}{X}{X\oplus X});(\codiag{X}\per\id{X});(\id{X} \per\codiag{X})\notag\\
        &=(\Idl{X}{X}{X}\oplus \Idl{X}{X}{X});\codiag{X\per (X\piu X)};(\id{X} \per\codiag{X}) \tag{Prop. \ref{prop:fcrig}}\\
        &=\Idl{X}{X}{X}\oplus \Idl{X}{X}{X}; (\id{X} \per\codiag{X})\piu(\id{X} \per\codiag{X});\codiag{X\per X} \tag{Nat. $\codiag{}$}\\
        &=(\codiag{X\per X}\piu \codiag{X\per X});\codiag{X\per X} \tag{Prop. \ref{prop:fcrig}}
      \end{align}
         
          Hence, using the coherence axiom \eqref{equation: coherence1}, it follows that $\copier{X\piu X};(\codiag{X}\per \codiag{X})$ is equal to 
          \[(\Irunitp{X}\piu\Ilunitp{Y} );((\copier{X} \oplus \cobang{X\per X}) \oplus( \cobang{X \per X} \oplus \copier{X}));(\codiag{X\per X}\piu \codiag{X\per X});\codiag{X\per X}\]
          Now, since $(\copier{X} \oplus \cobang{X\per X});\codiag{X\per X}=\lunitp{X};\copier{X}$ and $(  \cobang{X\per X}\oplus\copier{X});\codiag{X\per X}=\runitp{X};\copier{X}$, and $\codiag{X\per X}$ is natural, we obtain
          \[\copier{X\piu X};(\codiag{X}\per \codiag{X})=(\Irunitp{X}\piu\Ilunitp{X}  );(\runitp{X}\piu\lunitp{X} );\codiag{X};\copier{X}=\codiag{X};\copier{X}\]

        Totality of $\codiag{X}$ is $\codiag{X};\discharger{X}= \discharger{X\piu X}$ and coherence axiom \eqref{equation: coherence1} implies that $\discharger{X\piu X}=(\discharger{X}\piu \discharger{X});\codiag{\unoG}$. Hence, totality follows from naturality of $\codiag{\unoG}$.
        
  For the third point observe that since $\codiag{X}$ and $\cobang{X}$ are functional and total $(\mathrm{Map}(\Cat{C}), \piu, \zero)$ has commutative monoids, that are natural and coherent. It remains to prove that the arrows $\dr{X}{Y}{Z},\dl{X}{Y}{Z}$ and $\annr{X},\annl{X}$ are maps. The most difficult part to prove is the functionality of $\dr{X}{Y}{Z}$. For $\annl{X}$, functionality and totality correspond to \[\annl{X};\copier{0}=\copier{0\per X};(\annl{X}\per \annl{X})\qquad \annl{X};\discharger{0}=\discharger{0\per X}\] 
which follow from the fact that $\annl{X}$ is an isomorphism and $0$ is initial. The same holds for $\annr{X}$. Totality of $\dr{X}{Y}{Z}$ corresponds to 
\[\dr{X}{Y}{Z};\discharger{(X\per Z)\piu (Y\per Z)}=\discharger{(X\piu Y)\per Z}\]
which is proved as follows.
Naturality of the distributor and coherence axiom for $\discharger{-\piu -}$ imply the commutativity of the following diagrams
\[
\begin{tikzcd}
	{(X\piu Y)\per Z} & {(X\per Z)\piu (Y\per Z)} &&& {(X\per Z)\piu (Y\per Z)} && 1 \\
	{(1\piu 1)\per 1} & {(1\per 1)\piu (1\per 1)} &&& {1\piu 1}
	\arrow["{\dr{X}{Y}{Z}}", from=1-1, to=1-2]
	\arrow["{(\discharger{X}\piu\discharger{Y})\per \discharger{Z}}"', from=1-1, to=2-1]
	\arrow["{(\discharger{X}\per\discharger{Z})\piu (\discharger{Y}\per\discharger{Z})}", from=1-2, to=2-2]
	\arrow["{\discharger{(X\per Z)\piu(Y\per Z)}}", from=1-5, to=1-7]
	\arrow["{\discharger{X\per Z}\piu\discharger{Y\per Z}}"', from=1-5, to=2-5]
	\arrow["{\dr{1}{1}{1}}"', from=2-1, to=2-2]
	\arrow["{\codiag{1}}"', from=2-5, to=1-7]
\end{tikzcd}\]
Now, since $(\discharger{X}\per\discharger{Z})\piu (\discharger{Y}\per\discharger{Z});\runitt{1}\piu \runitt{1}=\discharger{X\per Z}\piu\discharger{Y\per Z}$ we obtain that 
\begin{align}
\dr{X}{Y}{Z};\discharger{(X\per Z)\piu(Y\per Z)}&=\dr{X}{Y}{Z};(\discharger{X\per Z}\piu\discharger{Y\per Z});\codiag{1}\notag\\
&=\dr{X}{Y}{Z};(\discharger{X}\per\discharger{Z})\piu (\discharger{Y}\per\discharger{Z});(\runitt{1}\piu \runitt{1});\codiag{1}\notag\\
&=\dr{X}{Y}{Z};(\discharger{X}\per\discharger{Z})\piu (\discharger{Y}\per\discharger{Z});\codiag{1\per 1};\runitt{1} \tag{Nat. $\codiag{}$}\\
&=(\discharger{X}\piu\discharger{Y})\per \discharger{Z};\dr{1}{1}{1};\codiag{1\per 1};\runitt{1} \tag{Nat. $\dr{X}{Y}{Z}$}\\
&=(\discharger{X}\piu\discharger{Y})\per \discharger{Z};(\codiag{1}\per \id{1});\runitt{1} \tag{Prop. \ref{prop:fcrig}}\\
&=(\discharger{X\piu Y}\per \discharger{Z});\runitt{1} \tag{Coh. \eqref{equation: coherence1}}\\
&=\discharger{(X\piu Y)\per Z} \notag
\end{align}
It remains to prove that $\dr{X}{Y}{Z}$ is functional, which is 
\[\dr{X}{Y}{Z};\copier{(X\per Z)\piu (Y\per Z)}=\copier{(X\piu Y)\per Z};(\dr{X}{Y}{Z}\per \dr{X}{Y}{Z})\]
Thanks to \eqref{eq:coherence codiag}, $\copier{(X\piu Y)\per Z}= (\copier{(X\piu Y)}\per\copier{Z});\twist$, where $\twist$ is the arrow that switches the two inner components of a product of two products, it is obtained as the composition of the arrows in \eqref{eq:coherence codiag}.
	Hence, the statement is equivalent to the commutativity of the following diagram:
  \[
\begin{tikzcd}
	{(X\piu Y) \per Z} & {((X\piu Y)\per (X\piu Y))\per (Z\per Z)} \\
	& {((X+Y)\per Z)\per ((X\piu Y)\per Z)} \\
	{(X\per Z)+(Y\per Z)} & {((X\per Z)+(Y\per Z))\per ((X\per Z)\piu(Y\per Z))}
	\arrow["{\copier{X\piu Y} \per \copier{Z}}", from=1-1, to=1-2]
	\arrow["{\dr{X}{Y}{Z}}"', from=1-1, to=3-1]
	\arrow["\twist", from=1-2, to=2-2]
	\arrow["{\dr{X}{Y}{Z}\per \dr{X}{Y}{Z}}", from=2-2, to=3-2]
	\arrow["{\copier{(X\per Z)\piu (Y\per Z)}}"', from=3-1, to=3-2]
\end{tikzcd}\]

Coherence axiom \ref{equation: coherence1} implies that $\copier{(X\per Y)\piu (Y\per Z)}$ is equal to

$$(\Irunitp{X\per Z}\piu\Ilunitp{Y\per Z} ); ((\copier{X\per Z} \oplus \cobang{(X\per Z)\per(Y\per Z)}) \oplus( \cobang{(Y\per Z)\per(X\per Z)} \oplus \copier{Y\per Z}));(\Idl{X\per Z}{X\per Z}{Y\per Z}\oplus \Idl{Y\per Z}{X\per Z}{Y\per Z}); (\Idr{X\per Z}{Y\per Z}{(X\per Z)\oplus (Y\per Z)})$$
and that 
\[\copier{X\oplus Y}=(\Irunitp{Y}\piu\Ilunitp{X} ); ((\copier{X} \oplus \cobang{X\per Y}) \oplus( \cobang{Y \per X} \oplus \copier{Y}));(\Idl{X}{X}{Y}\oplus \Idl{Y}{X}{Y}); (\Idr{X}{Y}{X\oplus Y})\]
which implies the commutativity of the following diagram:

\[
\adjustbox{scale=0.9,center}{\begin{tikzcd}
	{(X\piu Y) \per Z} &&& {((X\piu Y)\per (X\piu Y))\per (Z\per Z)} \\
	& {((X\piu 0)\piu(0\piu Y))\per Z} && {((X\piu Y)\per (X\piu Y))\per Z} \\
	&&& {((X\per(X\piu Y))\piu(Y\per(X\piu Y)))\per Z} \\
	&&& {(((X\per X)\piu (X\per Y))\piu ((Y\per X)\piu (Y\per Y)))\per Z}
	\arrow["{\copier{X\piu Y} \per \copier{Z}}", from=1-1, to=1-4]
	\arrow["{(\Irunitp{X}\piu\Ilunitp{Y})\per \id{Z}}"', from=1-1, to=2-2]
	\arrow["{((\copier{X} \oplus \cobang{X\per Y}) \oplus( \cobang{Y \per X} \oplus \copier{Y}))\per \id{Z}}"', from=2-2, to=4-4]
	\arrow["{\id{}\per \copier{Z}}"', from=2-4, to=1-4]
	\arrow["{(\Idr{X}{Y}{X\oplus Y})\per \id{Z}}"', from=3-4, to=2-4]
	\arrow["{(\Idl{X}{X}{Y}\oplus \Idl{Y}{X}{Y})\per \id{Z}}"', from=4-4, to=3-4]
\end{tikzcd}}\]
Now observe that the following diagram commutes (from now on we suppress the symbol $\per$ for the sake of readability):
\let\oldper\per
\renewcommand{\per}{}
\[
\adjustbox{scale=0.8,center}{\begin{tikzcd}
	{(X\piu Y) \per Z} & {((X\piu 0)+(0\piu Y))\per Z} & {(((X\per X)\piu (X\per Y))\piu ((Y\per X)\piu (Y\per Y)))\per Z} \\
	{(X\per Z) \piu (Y\per Z)} && {(((X\per X)\piu (X\per Y))\per Z)\piu (((Y\per X)\piu (Y\per Y))\per Z)} \\
	{((X\per Z)\piu 0) \piu ((Y\per Z)+ 0)} && {(((X\per X)\per Z)\piu((X\per Y)\per Z))\piu (((Y\per X)\per Z)\piu((Y\per Y)\per Z))}
	\arrow["{(\Irunitp{X}\piu\Ilunitp{Y})\per \id{Z}}", from=1-1, to=1-2]
	\arrow["{\dr{X}{Y}{Z}}"', from=1-1, to=2-1]
	\arrow["{((\copier{X} \oplus \cobang{X\per Y}) \oplus( \cobang{Y \per X} \oplus \copier{Y}))\per \id{Z}}", from=1-2, to=1-3]
	\arrow["{\dr{(X\per X)\piu (X\per Y)}{(Y\per X)\piu (Y\per Y)}{Z}}", from=1-3, to=2-3]
	\arrow["{((\Irunitp{X};(\copier{X}\piu \cobang{X\per Y}))\per \id{Z})\piu ((\Ilunitp{X};(\cobang{Y\per X}\piu\copier{Y}))\per \id{Z}) }"{description}, from=2-1, to=2-3]
	\arrow["{(\Irunitp{X\per Z}\piu\Ilunitp{Y\per Z})}"', from=2-1, to=3-1]
	\arrow["{\dr{X\per X}{X\per Y}{Z}\piu\dr{Y\per X}{Y\per Y}{Z}}", from=2-3, to=3-3]
	\arrow["{((\copier{X}\per \id{Z})\piu \cobang{(X\per Y)\per Z})\piu (\cobang{(Y\per X)\per Z}\piu(\copier{Y}\per \id{Z}))}"', from=3-1, to=3-3]
\end{tikzcd}}\]
Indeed, the top square commutes by naturality of the distributor. For the bottom square, it can be factorized as follows:
\[
\adjustbox{scale=0.8,center}{\begin{tikzcd}
	{(X\per Z) \piu (Y\per Z)} & {((X\piu 0)\per Z)\piu ((0\piu Y)\per Z)} & {(((X\per X)\piu (X\per Y))\per Z)\piu (((Y\per X)\piu (Y\per Y))\per Z)} \\
	\\
	{((X\per Z)\piu 0) \piu (0\piu (Y\per Z))} & {((X\per Z)\piu (0\per Z))\piu ((0\per Z)\piu (Y\per Z))} & {(((X\per X)\per Z)\piu((X\per Y)\per Z))\piu (((Y\per X)\per Z)\piu((Y\per Y)\per Z))}
	\arrow["{(\Irunitp{X}\per \id{Z})\piu (\Ilunitp{Y}\per \id{Z})}", from=1-1, to=1-2]
	\arrow["{(\Irunitp{X\per Z}\piu\Ilunitp{Y\per Z})}"', from=1-1, to=3-1]
	\arrow["{((\copier{X}\piu \cobang{X\per Y})\per \id{Z})\piu((\cobang{Y\per X}\piu \copier{Y})\per \id{Z})}", from=1-2, to=1-3]
	\arrow["{\dr{X}{0}{Z}\piu\dr{0}{Y}{Z}}"{description}, from=1-2, to=3-2]
	\arrow["{(\dr{X\per X}{X\per Y}{Z}\piu\dr{Y\per X}{Y\per Y}{Z})}", from=1-3, to=3-3]
	\arrow["{(\id{X\per Z}\piu \Iannl{Z})\piu (\Iannl{Z}\piu \id{Y\per Z} )}"', from=3-1, to=3-2]
	\arrow["{((\copier{X}\per \id{Z})\piu (\cobang{X\per Y}\per\id{Z}))\piu ((\cobang{Y\per X}\per \id{Z})\piu (\copier{Y}\per\id{Z}))}"', from=3-2, to=3-3]
\end{tikzcd}}\]
Looking at the two components of $\piu$, the right squares commute by naturality of the distributor, while the left ones commute by \ref{eq:dl8}.

Now, since the following diagram commutes through axioms \ref{eq:rigax1} and \ref{eq:rigax4}
\[
\adjustbox{scale=0.8,center}{\begin{tikzcd}
	{((X\piu Y)\per (X\piu Y))\per (Z\per Z)} & {((X\piu Y)\per Z)\per ((X\piu Y)\per Z)} \\
	{((X\per(X\piu Y))\piu(Y\per(X\piu Y)))\per (Z\per Z)} \\
	{(((X\per X)\piu (X\per Y))\piu ((Y\per X)\piu (Y\per Y)))\per (Z\per Z)} & {{((X\per Z)+(Y\per Z))\per ((X\per Z)+(Y\per Z))}} \\
	{{(((X\per X)\piu (X\per Y))\per (Z\per Z))\piu (((Y\per X)\piu (Y\per Y))\per (Z\per Z))}} & {((X\per Z)\per  ((X\per Z)\piu (Y\per Z)))\piu ((Y\per Z)\per  ((X\per Z)\piu (Y\per Z)))} \\
	{{(((X\per X)\per (Z\per Z))\piu((X\per Y)\per (Z\per Z)))\piu (((Y\per X)\per (Z\per Z))\piu((Y\per Y)\per (Z\per Z)))}} & {(((X\per Z)\per (X\per Z))\piu ((X\per Z)\per (Y\per Z)))\piu (((Y\per Z)\per (X\per Z))\piu ((Y\per Z)\per (Y\per Z)))}
	\arrow["\twist", from=1-1, to=1-2]
	\arrow["{\dr{X}{Y}{Z}\per \dr{X}{Y}{Z}}", from=1-2, to=3-2]
	\arrow["{(\Idr{X}{Y}{X\oplus Y})\per \id{Z\per Z}}", from=2-1, to=1-1]
	\arrow["{(\Idl{X}{X}{Y}\oplus \Idl{Y}{X}{Y})\per \id{Z\per Z}}", from=3-1, to=2-1]
	\arrow["{\dr{(X\per X)\piu (X\per Y)}{(Y\per X)\piu (Y\per Y)}{ZZ}}"', from=3-1, to=4-1]
	\arrow["{\dr{X\per X}{X\per Y}{ZZ}\piu\dr{Y\per X}{Y\per Y}{ZZ}}"', from=4-1, to=5-1]
	\arrow["{\Idr{X\per Z}{Y\per Z}{(X\per Z)\piu(Y\per Z)}}"', from=4-2, to=3-2]
	\arrow["{(\twist\piu \twist)\piu(\twist\piu \twist)}"', from=5-1, to=5-2]
	\arrow["{\Idl{X\per Z}{X\per Z}{Y\per Z}\piu \Idl{YZ}{XZ}{YZ}}"', from=5-2, to=4-2]
\end{tikzcd}}\]

The naturality of the distributor implies that the claim is equivalent to the commutativity of the following diagram

\[
\begin{tikzcd}
	{((X\per Z)\piu 0) \piu ((Y\per Z)+ 0)} & {(((X\per X)\per Z)\piu((X\per Y)\per Z))\piu (((Y\per X)\per Z)\piu((Y\per Y)\per Z))} \\
	& {{(((X\per X)\per (Z\per Z))\piu((X\per Y)\per (Z\per Z)))\piu (((Y\per X)\per (Z\per Z))\piu((Y\per Y)\per (Z\per Z)))}} \\
	& {(((X\per Z)\per (X\per Z))\piu ((X\per Z)\per (Y\per Z)))\piu (((Y\per Z)\per (X\per Z))\piu ((Y\per Z)\per (Y\per Z)))}
	\arrow["{((\copier{X}\per \id{Z})\piu \cobang{(X\per Y)\per Z})\piu (\cobang{(Y\per X)\per Z}\piu(\copier{Y}\per \id{Z}))}", from=1-1, to=1-2]
	\arrow["{ ((\copier{X\per Z} \oplus \cobang{(X\per Z)\per(Y\per Z)}) \oplus( \cobang{(Y\per Z)\per(X\per Z)} \oplus \copier{Y\per Z}))}"',bend right=5pt, shift right=5, shorten >=64pt, from=1-1, to=3-2]
	\arrow["{((\id{X\per X}\per \copier{Z})\piu (\id{X\per Y}\per \copier{Z}))\piu((\id{Y\per X}\per \copier{Z})\piu(\id{Y\per Y}\per \copier{Z}) )}"{description}, from=1-2, to=2-2]
	\arrow["{(s\piu s)\piu(s\piu s)}"', from=2-2, to=3-2]
\end{tikzcd}\]
which follows by naturality of $\cobang{}$ and the definition of $\copier{X\per Z}$.
\let\per\oldper
    \end{proof}

%
%
%
%
\begin{proof}[Proof of Lemma \ref{lemma:distributive}]
     Let $\Cat{C}$ be a fc-fp rig category. Since $(\Cat{C},\per ,\uno)$ is a fp category, then it is also cd, with $\copier{X}$ given by the diagonal arrow $\langle \id{X},\id{X}\rangle$ and $\discharger{X}$ given by the unique arrow from $X$ to the terminal object $1$. It only remain to proove coherence conditions in\eqref{equation: coherence1}.
For $\copier{X\oplus Y}=(\Irunitp{X}\piu\Ilunitp{Y} ); ((\copier{X} \oplus \cobang{X\per Y}) \oplus( \cobang{Y \per X} \oplus \copier{Y}));(\Idl{X}{X}{Y}\oplus \Idl{Y}{X}{Y}); (\Idr{X}{Y}{X\oplus Y})$
We first provide the following equality:
\begin{align}
\Irunitp{X};(\copier{X} \oplus \cobang{X\per Y});(\Idl{X}{X}{Y})=&\Irunitp{X};(\copier{X}\piu \id{0});(\id{X}\piu\Iannr{X});(\id{X\per X}\piu (\id{X}\per \cobang{Y}));\Idl{X}{X}{Y} \tag{Lemma \ref{prop:fcrig}}\\
=&\Irunitp{X};(\copier{X}\piu \id{0});(\id{X}\piu\Iannr{X});\Idl{X}{X}{0};(\id{X}\per(\id{X}\piu\cobang{Y})) \tag{Nat. $\delta^r$}\\
=& \copier{X};\Irunitp{X\per X};(\id{X}\piu\Iannr{X});\Idl{X}{X}{0};(\id{X}\per(\id{X}\piu\cobang{Y})) \tag{Nat. $\Irunitp{X}$}\\
=& \copier{X};(\id{X}\per \runitp{X});(\id{X}\per(\id{X}\piu\cobang{Y}))\tag{Ax. \ref{eq:dl7}}
\end{align}
 
 And similarly we can prove that $\Ilunitp{Y};(\cobang{Y \per X} \oplus \copier{Y});\Idl{Y}{X}{Y}=\copier{Y};(\id{Y}\per \Ilunitp{Y});(\id{Y}\per(\cobang{X}\piu \id{Y}))$.
 
 Hence, the statement is equivalent to the equality:
 \begin{equation}\label{eq:lemma 19 proof}
 \copier{X\piu Y};\dr{X}{Y}{X\piu Y}= (\copier{X}\piu\copier{Y});(\id{X}\per \runitp{X})\piu(\id{Y}\per \Ilunitp{Y});(\id{X}\per(\id{X}\piu\cobang{Y}))\piu(\id{Y}\per(\cobang{X}\piu \id{Y}))
 \end{equation}
 To prove this, we observe that $\id{X\piu Y}= (\Irunitp{X}\piu \Ilunitp{Y});(\id{X}\piu \cobang{Y})\piu (\cobang{X}\piu \id{Y});\codiag{X\piu Y}$ and prove the above equality precomposing it with $\id{X\piu Y}$. For the left term we obtain that $(\Irunitp{X}\piu \Ilunitp{Y});(\id{X}\piu \cobang{Y})\piu (\cobang{X}\piu \id{Y});\codiag{X\piu Y};\copier{X\piu Y};\dr{X}{Y}{X\piu Y}$ is equal to 
 \begin{align}
 (\Irunitp{X}\piu \Ilunitp{Y});(\id{X}\piu \cobang{Y})\piu (\cobang{X}\piu \id{Y});(\copier{X\piu Y}\piu \copier{X\piu Y});(\dr{X}{Y}{X\piu Y}\piu \dr{X}{Y}{X\piu Y});\codiag{(X\per (X\piu Y))\piu (Y\per (X\piu Y))} \notag
 \end{align}
 by naturality of $\codiag{}$. Hence, working on the separate components of $\piu$ we obtain 
 \begin{align*}
 &\Irunitp{X};(\id{X}\piu \cobang{Y});\copier{X\piu Y};\dr{X}{Y}{X\piu Y} \\ 
=&\copier{X};(\Irunitp{X}\piu \Irunitp{X});((\id{X}\piu \cobang{Y})\piu (\id{X}\piu \cobang{Y}));\dr{X}{Y}{X\piu Y} \tag{Nat. $\copier{}$}\\
=&\copier{X};(\Irunitp{X}\piu \Irunitp{X});\dr{X}{0}{X\piu 0};(\id{X}\per (\id{X}\piu \cobang{Y}))\piu (\cobang{Y}\per (\id{x}\piu\cobang{Y})) \tag{Nat. $\delta^r$}\\
=&\copier{X};(\id{X}\per \Irunitp{X});\Irunitp{X\per (X\piu 0)};(\id{X\per (X\piu 0)}\piu \Iannl{X\piu 0});(\id{X}\per (\id{X}\piu \cobang{Y}))\piu (\cobang{Y}\per (\id{x}\piu\cobang{Y})) \tag{Ax. \ref{eq:dl8}}\\
=&\copier{X};(\id{X}\per \Irunitp{X});\Irunitp{X\per (X\piu 0)};(\id{X}\per (\id{X}\piu \cobang{Y})\piu \cobang{Y\per (X\piu Y)}) \tag{Nat. $\cobang{}$}
 \end{align*}
Similarly one obtains $\Ilunitp{Y}; (\cobang{X}\piu \id{Y});\copier{X\piu Y};\dr{X}{Y}{X\piu Y}=\copier{Y};(\id{Y}\per \Ilunitp{Y});\Ilunitp{Y\per (0\piu Y)};(\cobang{X\per(X\piu Y)}\piu (\id{Y}\per (\cobang{X}\piu \id{Y})))$.

For the right term, after precomposing with $(\Irunitp{X}\piu \Ilunitp{Y});(\id{X}\piu \cobang{Y})\piu (\cobang{X}\piu \id{Y});\codiag{X\piu Y}$ and applying the naturality of $\codiag{}$, we consider separately the components of $\piu$ and obtain
\begin{align}
\Irunitp{X};(\id{X}\piu \cobang{Y});\copier{X}\piu\copier{Y};(\id{X}\per \runitp{X})\piu(\id{Y}\per \Ilunitp{Y});(\id{X}\per(\id{X}\piu\cobang{Y}))\piu(\id{Y}\per(\cobang{X}\piu \id{Y}))&=\tag{Nat. $\cobang{}$}\\
\Irunitp{X};(\copier{X}\piu \id{0});(\id{X\per X}\piu \cobang{Y\per Y});(\id{X}\per \Irunitp{X})\piu(\id{Y}\per \Ilunitp{Y});(\id{X}\per(\id{X}\piu\cobang{Y}))\piu(\id{Y}\per(\cobang{X}\piu \id{Y}))&= \tag{Nat. $\Irunitp{}$}\\
\copier{X};\Irunitp{(X\per X)};(\id{X\per X}\piu \cobang{Y\per Y});(\id{X}\per \Irunitp{X})\piu(\id{Y}\per \Ilunitp{Y});(\id{X}\per(\id{X}\piu\cobang{Y}))\piu(\id{Y}\per(\cobang{X}\piu \id{Y}))&=\tag{Nat. $\Irunitp{}$}\\
\copier{X};(\id{X}\per \Irunitp{X});\Irunitp{X\per (X\piu 0)};(\id{X\per (X\piu 0)}\piu \cobang{Y\per(0\piu Y)});(\id{X}\per(\id{X}\piu\cobang{Y}))\piu(\id{Y}\per(\cobang{X}\piu \id{Y}))&= \tag{Nat. $\Irunitp{}$}\\
\copier{X};(\id{X}\per \Irunitp{X});\Irunitp{X\per (X\piu 0)};(\id{X}\per (\id{X}\piu \cobang{Y})\piu \cobang{Y\per (X\piu Y)}) \tag{Nat. $\cobang{}$}
\end{align}
Similarly one obtains that $\Ilunitp{Y}; (\cobang{X}\piu \id{Y});\copier{X}\piu\copier{Y};(\id{X}\per \Irunitp{X})\piu(\id{Y}\per \Ilunitp{Y});(\id{X}\per(\id{X}\piu\cobang{Y}))\piu(\id{Y}\per(\cobang{X}\piu \id{Y}))$ is equal to $\copier{Y};(\id{Y}\per \Ilunitp{Y});\Ilunitp{Y\per (0\piu Y)};(\cobang{X\per(X\piu Y)}\piu (\id{Y}\per (\cobang{X}\piu \id{Y})))$. Hence the equality (\ref{eq:lemma 19 proof}) is proved.

The coherence condition in \eqref{equation: coherence1} for $\discharger{X\piu Y}$ follows immediately from the fact that $\discharger{}$ is natural.

    \end{proof}

	\begin{proof}[Proof of Proposition \ref{prop:Kleislifccd}]
	Since $\Cat{C}$ is a cd category, then by Propositions \ref{prop: Kleisli gs e cogs}, $\Cat{C}_T$ is a cd category.
	Moreover, because $\Cat{C}$ is a fc rig category, then Proposition \ref{prop: Kleisli of distributive monoidal} guarantees that  $\Cat{C}_T$ is also a fc rig category. Finally, since $\mathcal{K}\colon \Cat{C}\to \Cat{C}_T$ maps $\copier{}$ and $\discharger{}$ into $\copier{}^\flat$ and $\discharger{}^\flat$ as defined in \eqref{eq:copierflat},  and then it 
	preserves the coherence conditions in \eqref{equation: coherence1}. Hence,  $\Cat{C}_T$ is a fc-cd rig category.
	\end{proof}

\section{Appendix to Section \ref{sec:Trig}}\label{app:Trig}\label{app:Trig categories}
%
\begin{proof}[Proof of Lemma \ref{lemma:aleph}]
Consider $\Cat{C}$ and forgets about the monoidal structure given by $\per$. By Mac Lane's strictification theorem for monoidal categories \cite[XI.3 Theorem 1]{mac_lane_categories_1978}, there exists a category $\Cat{C_S}$ which is monoidally equivalent to $(\Cat{C},\piu, \zero)$. Recall that objects of $\Cat{C_S}$ are lists of objects of $\Cat{C}$, in symbols  $[X_1,X_2, \dots, X_n]$ such that $X_i \in Ob(\Cat{C})$, and the monoidal product $\piu$ is defined as list concatenation. The monoidal functor witnessing the equivalence
$s\colon \Cat{C_S} \to \Cat{C}$ is defined on object as 
\[[X_1,X_2, \dots, X_n] \mapsto \PiuL[i=1][n]{X_i}\text{.}\]
Note that, in particular
\begin{equation}\label{eq:inproof1}
[1,1, \dots, 1] \text{ of length } n\mapsto \PiuL[i=1][n]{1}\text{.}
\end{equation}
Recall that $\aleph_0$ is the strict finite coproduct category freely generated by a single object $\uno$. 
Thus, there exists a unique strict finite coproduct morphism $t \colon \aleph_0 \to \Cat{C_S}$ mapping $\uno$ into $[1]$.
Since $t$ is a strict monoidal functor it maps 
\begin{equation}\label{eq:inproof2}
n \in \mathbb{N}  \mapsto [1,1\dots, 1] \text{ of length }n \text{.}
\end{equation}
Now, it is enough to take $c\defeq t;s \colon  \aleph_0 \to \Cat{C}$, and observe that by \eqref{eq:inproof1} and \eqref{eq:inproof2}, each $n \in \mathbb{N}$ is mapped into $\PiuL[i=1][n]{\uno}$. It preserve finite coproducts, since $t$ preserves them and $s$ is an equivalence of monoidal categories.
\end{proof}

\begin{lemma}\label{lemma: f_X naturale}
    Let $i:\opcat{\lawveret}\to\Cat{C}$ be a $(\Sigma, E)$-rig category. For every object $X$ of $\Cat{C}$ and every $f\colon n\to 1$ in $\Sigma$, there is an arrow $\Br{f}_X\colon X\to\PiuL[i=1][n]{X}$ natural in $X$.
    \begin{proof}
        Take $\Br{f}_X$ to be the following composition:
\[
\begin{tikzcd}
	X & {1\per X} & { (\PiuL[i=1][n]{1})\per  X} & {\PiuL[i=1][n]{ X}}
    \arrow["{\Ilunitt{X}}", from=1-1, to=1-2]
	\arrow["{ i(f)\per \id{X}}", from=1-2, to=1-3]
	\arrow["{\delta^r_{n,X}}", from=1-3, to=1-4]
\end{tikzcd}\]
and consider an arrow $h:X\to Y$, then $h;\Br{f}_Y= \Br{f}_X;\PiuL[i=1][n]{h}$ since each of the following square diagrams commute by naturality of $\lunitt{X},\Ilunitt{X},\annl{X}$ and $\dr{X}{Y}{Z}$:
\[
\begin{tikzcd}[row sep=large]
	X & {1\per X} & { (\PiuL[i=1][n]{1})\per  X} & {\PiuL[i=1][n]{ X}} \\
	Y & {1\per Y} & { (\PiuL[i=1][n]{1})\per  Y} & {\PiuL[i=1][n]{ Y}}
	\arrow["{\Ilunitt{X}}", from=1-1, to=1-2]
	\arrow["h"', from=1-1, to=2-1]
	\arrow["{ i(f)\per \id{X}}", from=1-2, to=1-3]
	\arrow["{\id{1}\per h}"{description}, from=1-2, to=2-2]
	\arrow["{\delta^r_{n,X}}", from=1-3, to=1-4]
	\arrow["{\id{n} \per h}"{description}, from=1-3, to=2-3]
	\arrow["{\PiuL[i=1][n]{h}}"{description}, from=1-4, to=2-4]
	\arrow["{\Ilunitt{Y}}"', from=2-1, to=2-2]
	\arrow["{ i(f)\per \id{Y}}"', from=2-2, to=2-3]
	\arrow["{\delta^r_{n,Y}}"', from=2-3, to=2-4]
\end{tikzcd}\]
    \end{proof}
\end{lemma}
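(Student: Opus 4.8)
The statement to prove is Lemma~\ref{lemma: f_X naturale}: for a $(\Sigma,E)$-rig category $i\colon\opcat{\lawveret}\to\Cat{C}$, each $f\colon n\to 1$ in $\Sigma$ gives a natural transformation $\Br{f}_X\colon X\to\PiuL[i=1][n]{X}$, with $\Br{f}_X$ defined by the composite in~\eqref{eq:natf} (i.e.\ $\Ilunitt{X}$ followed by $i(f)\per\id{X}$ followed by $\delta^r_{n,X}$). The plan is the entirely diagrammatic one already sketched in the excerpt: fix $h\colon X\to Y$ and build the ladder diagram with the three composable squares whose horizontal legs are the three pieces of the definition of $\Br{f}_{(-)}$ and whose vertical legs are $h$, $\id{1}\per h$, $\id{\objn{n}}\per h$, and $\PiuL[i=1][n]{h}$. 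Commutativity of the outer rectangle is exactly $h;\Br{f}_Y=\Br{f}_X;\PiuL[i=1][n]{h}$, so it suffices to check the three inner squares commute.

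First I would dispatch the left square: it commutes because $\Ilunitt{(-)}$ (the inverse of the left unitor $\lunitt{(-)}\colon\uno\per(-)\to(-)$) is a natural isomorphism in any monoidal category, so $h;\Ilunitt{Y}=\Ilunitt{X};(\id{\uno}\per h)$. Second, the middle square: its horizontal arrows are $i(f)\per\id{X}$ and $i(f)\per\id{Y}$, and commutativity is just functoriality of $\per$ applied to the pair $(i(f)\colon\uno\to\objn{n},\ h\colon X\to Y)$ — both composites equal $i(f)\per h$. Third, the right square is the only one requiring a small induction: its horizontal legs are $\delta^r_{n,X}$ and $\delta^r_{n,Y}$, defined inductively in~\eqref{eq:natf} from $\annl{(-)}$, $\lunitt{(-)}$ and the right distributor $\dr{1}{\objn{m}}{(-)}$. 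I would prove naturality of $\delta^r_{n,(-)}$ in the object argument by induction on $n$, using naturality of $\annl{(-)}$ (base case $n=0$), naturality of $\lunitt{(-)}$ (case $n=1$), and naturality of $\dr{1}{\objn{m}}{(-)}$ together with functoriality of $\piu$ for the inductive step $n+1$; note $\id{\objn{n+1}}\per h = (\id{\uno}\per h)\piu(\id{\objn{n}}\per h)$ after transporting along the distributor, which is precisely what makes the inductive step go through.

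Assembling the three squares gives the outer rectangle, and this completes the argument. I do not expect any genuine obstacle here: every ingredient is naturality of a structural isomorphism of the rig category (left unitor, left annihilator, right distributor) plus bifunctoriality of the two tensors, and the only mild bookkeeping is the induction on $n$ establishing naturality of $\delta^r_{n,(-)}$, which the paper already isolates as the "interchange" fact used repeatedly (e.g.\ in Proposition~\ref{prop: Kleisli of distributive monoidal} and Lemma~\ref{lemma:nat dist Kleisli 3}). If one wished, the same ladder can be drawn compactly as a single \texttt{tikzcd} diagram with two rows and four columns, exactly as displayed at the end of the excerpt, and the prose reduces to "each square commutes by naturality of $\Ilunitt{X}$, functoriality of $\per$, and naturality of $\delta^r_{n,X}$ respectively," with the last of these unfolded by the short induction above.
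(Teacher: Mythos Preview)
Your proposal is correct and follows essentially the same approach as the paper: the paper displays the identical ladder diagram and justifies commutativity of the three squares by naturality of $\Ilunitt{X}$, $\lunitt{X}$, $\annl{X}$ and $\dr{X}{Y}{Z}$, which is precisely your argument (with your version spelling out the induction on $n$ for the right square and explicitly noting functoriality of $\per$ for the middle square, details the paper leaves implicit).
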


Hereafter, we write $\delta^r_{nY,X}:(\PiuL[i=1][n]{Y})\otimes X \to \PiuL[i=1][n]{(Y\per X)}$
for the arrow inductively defined as follows:
\begin{equation}
\delta^r_{0Y,X}\colon=\annl{X} \qquad \delta^r_{1Y,X}\colon=\id{Y\per X}\qquad \delta^r_{(n+1)Y,X}\colon=\dr{Y}{nY}{X};(\id{Y\per X}\piu \delta^r_{nY,X})\text{.}
\end{equation}
where $mY$ stands for the object $ \PiuL[i=1][m]{Y}$. A simple proof by induction shows that $\delta^r_{nY,X}$ is natural in $X,Y$. The arrow $\delta^l_{X,nY} \colon X\otimes \PiuL[i=1][n]{Y} \to \PiuL[i=1][n]{(X\per Y)}$ is defined similarly.
\begin{lemma}\label{lemma: f_P x idR= f_PR}
    For every $X,Y\in \Cat{C}$ it holds that
    \[(\Br{f}_X\per \id{Y});\delta^r_{nX,Y}=\Br{f}_{X\per Y}\qquad (\id{Y}\per \Br{f}_X);\delta^l_{Y,nX}= \Br{f}_{Y\per X}\]
    \begin{proof}
        The left equation follows from the commutativity of the following diagram:
        \[
        \begin{tikzcd}
            {X\per Y} & {(1\per X)\per Y} & {((\PiuL[i=1][n]{1})\per X)\per Y} && {(\PiuL[i=1][n]{X})\per Y} \\
            & {1\per (X\per Y)} & {(\PiuL[i=1][n]{1})\per (X\per Y)} && {\PiuL[i=1][n]{(X\per Y)}}
            \arrow["{\Ilunitt{X}\per \id{Y}}", from=1-1, to=1-2]
            \arrow["{\Ilunitt{X\per Y}}"', from=1-1, to=2-2]
            \arrow["{(i(f)\per \id{X})\per \id{Y}}", from=1-2, to=1-3]
            \arrow["{\delta^r_{n,X}\per \id{Y}}", from=1-3, to=1-5]
            \arrow["{\delta^r_{nX,Y}}", from=1-5, to=2-5]
            \arrow["{\Iassoct{1}{X}{Y}}"{description}, from=2-2, to=1-2]
            \arrow["{i(f)\per (X\per Y)}"', from=2-2, to=2-3]
            \arrow["{\Iassoct{\objn{n}}{X}{Y}}"{description}, from=2-3, to=1-3]
            \arrow["{\delta^r_{n,X\per Y}}"', from=2-3, to=2-5]
        \end{tikzcd}\]
        where the left triangle and the middle square commute by associator properties. The right square is proved to commute by induction. The case $n=0$, namely $\Iassoct{0}{X}{Y};(\delta^r_{0,X}\per \id{Y});\delta^r_{0X,Y}=\delta^r_{0,X\per Y}$ follows from \ref{eq:rigax9} + \ref{eq:rigax10}. The case $n=1$, namely $\Iassoct{1}{X}{Y};(\lunitt{X}\per \id{Y})=\lunitt{X\per Y}$ follows from coherence axioms of symmetric monoidal categories.
         The case $n+1$ follows from the fact that 
        \[\Iassoct{\objn{n+1}}{X}{Y};(\delta^r_{n+1,X}\per \id{Y});\delta^r_{(n+1)X,Y}=\delta^r_{(n+1)X,Y}\]
        is equivalent to the commutativity of the following diagram:
        \[
        \begin{tikzcd}[column sep=large,row sep=large]
            \bullet & \bullet & \bullet && \bullet \\
            \bullet && \bullet && \bullet
            \arrow["{(\dr{1}{\objn{n}}{X}\per \id{Y})}", from=1-1, to=1-2]
            \arrow["{\dr{1\per X}{\objn{n}\per X}{Y}}", from=1-2, to=1-3]
            \arrow["{(\lunitt{X}\per \id{Y})\piu (\delta^r_{n,p}\per \id{Y})}", from=1-3, to=1-5]
            \arrow["{\id{X\per Y}\piu \delta^r_{nX,Y}}", from=1-5, to=2-5]
            \arrow["{\Iassoct{\objn{n+1}}{X}{Y}}", from=2-1, to=1-1]
            \arrow["{\dr{\objn{n+1}}{X}{Y}}"', from=2-1, to=2-3]
            \arrow["{\Iassoct{1}{X}{Y}\piu\Iassoct{\objn{n}}{X}{Y}}"{description}, from=2-3, to=1-3]
            \arrow["{\lunitt{X\per Y}\piu\delta^r_{n,X\per Y}}"', from=2-3, to=2-5]
        \end{tikzcd}\]
        with the left square communting by \ref{eq:rigax11} , and the right square commuting by \ref{ax:monoidaltriangle} + \ref{eq:symmax2} and induction hypotesis. The right equation is proved similarly.
    \end{proof}
\end{lemma}

{\bf Enrichment.}
Hereafter, we will $\codiag{X}^n \colon \PiuL[i=1][n]{X} \to X $ for the arrows inductively defined as follows:
\begin{equation}
\begin{array}{cccc}
    \codiag{X}^0 \defeq  \cobang{X} \;&\; \codiag{X}^1 \defeq \id{X}\;&\;\codiag{X}^{n+1} \defeq  (\id{X}\piu \codiag{X}^n);\codiag{X}
    \end{array}
\end{equation}
A simple inductive argument confirms that $\codiag{X}^n$ is natural in $X$.

\begin{lemma}\label{lemma: delta n per identita}
    For every $X,Y\in \Cat{C}$ it holds that 
    \[\codiag{X}^n\per \id{Y}=\delta^r_{nX,Y};\codiag{X\per Y}^n\qquad \id{Y}\per \codiag{X}^n= \delta^l_{Y,nX};\codiag{Y\per X}^n \]
    \begin{proof}
        The proof is obtained by indiction. The case $n=0$, that corresponds to $\bang{X}\per \id{Y}=\annl{Y};\bang{X\per Y}$ follows from Proposition \ref{prop:fcrig}. The case $n=1$ is trivial. The case $n+1$ follows is obtained as follows:
        \begin{align}
            \codiag{X}^{n+1}\per \id{Y} &= ((\id{X}\piu \codiag{X}^n);\codiag{X})\per \id{Y} \tag{def. $\codiag{X}^{n+1}$}\\&= (\id{X}\piu \codiag{X}^n)\per \id{Y};\codiag{X}\per \id{Y} \tag{Funct. of $\per$}\\
            &=(\id{X}\piu \codiag{X}^n)\per \id{Y};\dr{X}{Y}{Y};\codiag{X\per Y} \tag{Prop. \ref{prop:fcrig}}\\
            &=\dr{X}{nX}{Y};(\id{X\per Y}\piu (\codiag{X}^n\per \id{Y}));\codiag{X\per Y} \tag{Nat. of $\dr{X}{Y}{Y}$}\\
            &=\dr{X}{nX}{Y};(\id{X\per Y}\piu(\delta^r_{nX,Y};\codiag{X\per Y}^n));\codiag{X\per Y}. \tag{Ind. hyp.}\\
            &=\dr{X}{nX}{Y};(\id{X\per Y}\piu\delta^r_{nX,Y});(\id{X\per Y}\piu\codiag{X\per Y}^n);\codiag{X\per Y} \tag{Funct. of $\piu$}\\
            &=\delta^r_{(n+1)X,Y};\codiag{X\per Y}^{n+1}. \tag{Def. of $\delta^r_{(n+1)X,Y}$ and $\codiag{X}^{n+1}$}
        \end{align}
        The right equation is proved similarly.
    \end{proof}
\end{lemma}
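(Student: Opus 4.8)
The plan is to prove both identities by a single induction on $n$, since $\codiag{(-)}^{n}$ and $\delta^r_{n(-),-}$ (resp.\ $\delta^l_{-,n(-)}$) are built by the same two-step recursion. I will carry out the argument only for the first equality $\codiag{X}^n\per \id{Y}=\delta^r_{nX,Y};\codiag{X\per Y}^n$; the second is obtained by the mirror-image computation, replacing each use of the right-hand clauses of Lemma~\ref{prop:fcrig} and of the recursion for $\delta^r$ by their left-hand analogues. Throughout I use the convention that $nX$ abbreviates $\PiuL[i=1][n]{X}$.

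For $n=0$ the claim is $\cobang{X}\per\id{Y}=\annl{Y};\cobang{X\per Y}$, which follows from the identity $\cobang{X\per Y}=\Iannl{Y};(\cobang{X}\per\id{Y})$ of Lemma~\ref{prop:fcrig} upon composing on the left with $\annl{Y}$. For $n=1$ both sides collapse to $\id{X\per Y}$ by the defining clauses $\codiag{X}^1=\id{X}$, $\delta^r_{1X,Y}=\id{X\per Y}$, $\codiag{X\per Y}^1=\id{X\per Y}$, so the claim is trivial.

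For the inductive step I would expand $\codiag{X}^{n+1}=(\id{X}\piu\codiag{X}^n);\codiag{X}$, use functoriality of $\per$ to factor out $\id{Y}$, and then rewrite $\codiag{X}\per\id{Y}$ as $\dr{X}{Y}{Y};\codiag{X\per Y}$ via Lemma~\ref{prop:fcrig}. The central step is to invoke naturality of the right distributor: $\dr{-}{-}{Y}$ is natural jointly in its first two arguments, so it slides past $(\id{X}\piu\codiag{X}^n)\per\id{Y}$, yielding $\dr{X}{nX}{Y};(\id{X\per Y}\piu(\codiag{X}^n\per\id{Y}));\codiag{X\per Y}$. Substituting the induction hypothesis for $\codiag{X}^n\per\id{Y}$, splitting the $\piu$-composite by functoriality of $\piu$, and then recognizing $\dr{X}{nX}{Y};(\id{X\per Y}\piu\delta^r_{nX,Y})$ as $\delta^r_{(n+1)X,Y}$ and $(\id{X\per Y}\piu\codiag{X\per Y}^n);\codiag{X\per Y}$ as $\codiag{X\per Y}^{n+1}$ completes the induction.

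I do not anticipate a real obstacle: this is entirely a matter of bookkeeping with naturality and functoriality once the base cases are dispatched by Lemma~\ref{prop:fcrig}. The one point that needs attention is applying naturality of $\dr{}{}{}$ in the correct components — the morphism one must commute with is $(\id{X}\piu\codiag{X}^n)\per\id{Y}$, a map $X\piu nX \to X\piu X$ whiskered by $Y$, rather than $\id{X}\piu(\codiag{X}^n\per\id{Y})$ — so one should be careful about how $(-)\per\id{Y}$ interacts with $\piu$ when quoting the naturality square.
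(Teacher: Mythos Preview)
Your proposal is correct and follows essentially the same approach as the paper's proof: induction on $n$, with the $n=0$ base case handled by Lemma~\ref{prop:fcrig}, the $n=1$ case trivial, and the inductive step carried out by expanding $\codiag{X}^{n+1}$, rewriting $\codiag{X}\per\id{Y}$ via Lemma~\ref{prop:fcrig}, sliding the distributor via naturality, applying the induction hypothesis, and then recognizing the recursive definitions of $\delta^r_{(n+1)X,Y}$ and $\codiag{X\per Y}^{n+1}$. Your closing remark about care with the naturality square is precisely the one non-obvious bookkeeping point in the argument.
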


By means of $\Br{t}_X$ and  $\codiag{Y}^n$, we define $\enriched{t}(h_1,\dots, h_n)\colon X\to Y$ for all arrows $h_1,\dots,h_n \colon X\to Y$ as
\begin{equation}\label{eq: definizione f(h_1,...,h_n)}
\enriched{t}(h_1,\dots, h_n) \defeq \Br{t}_X;\PiuL[i=1][n]{h_i};\codiag{Y}^n \colon X\to Y
\end{equation}
Note that for each $f$ in the signature $\Sigma$ with arity $n$, \eqref{eq: definizione f(h_1,...,h_n)} provide a function $\enriched{f}\colon\Cat{C}[X,Y]^n \to \Cat{C}[X,Y]$. Indeed, it is easy to see that each homset $\Cat{C}[X,Y]$ is a model for $\mathbb{T}$. Actually $\Cat{C}$ is monoidally enriched over the category of models of $\mathbb{T}$:
\begin{proposition}\label{prop:enrichement}
    Let $h_1,\dots,h_n \colon X\to Y$ and $g$ of the appropriate type. It holds that
        \begin{alignat*}{2}
            1. \quad & \enriched{t}(h_1,\dots, h_n) ; g = \enriched{t}((h_1 ; g),\dots,(h_n ; g)) \quad & \quad 
            3. \quad & \enriched{t}(h_1\,\dots, h_n) \per g = \enriched{t}((h_1 \per g),\dots,(h_n \per g)) \notag \\
            2. \quad &   g ; \enriched{t}(h_1\,\dots, h_n) = \enriched{t}((g; h_1 ),\dots,(g; h_n )) \quad & \quad 
            4. \quad & g \per \enriched{t}(h_1\,\dots, h_n) = \enriched{t}((g\per h_1 ),\dots,(g\per h_n )) \notag \\
        \end{alignat*}
\end{proposition}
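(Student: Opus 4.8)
The four identities are the monoidal/categorical enrichment laws relating the operation $\enriched{t}$ to composition and the two tensor products. I would prove all four by the same recipe: unfold the definition~\eqref{eq: definizione f(h_1,...,h_n)} of $\enriched{t}(h_1,\dots,h_n)$ as $\Br{t}_X ; \PiuL[i=1][n]{h_i} ; \codiag{Y}^n$, then transport the post/pre-composed (or tensored) arrow $g$ across the three factors using, respectively, naturality of $\Br{t}_{(-)}$ (Lemma~\ref{lemma: f_X naturale}), functoriality of $\PiuL{}$, and naturality of $\codiag{(-)}^n$, ending back at the defining form of the right-hand side.

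\emph{Items 1 and 2 (composition).} For item~2, given $g\colon Z\to X$, I would compute $g ; \enriched{t}(h_1,\dots,h_n) = g ; \Br{t}_X ; \PiuL[i=1][n]{h_i} ; \codiag{Y}^n$; by naturality of $\Br{t}_{(-)}$ in Lemma~\ref{lemma: f_X naturale}, $g ; \Br{t}_X = \Br{t}_Z ; \PiuL[i=1][n]{g}$, and then functoriality of $\PiuL{}$ gives $\PiuL[i=1][n]{g} ; \PiuL[i=1][n]{h_i} = \PiuL[i=1][n]{(g ; h_i)}$, which recomposes to $\enriched{t}((g ; h_1),\dots,(g ; h_n))$. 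For item~1, given $g\colon Y\to W$, I would move $g$ past $\codiag{Y}^n$ using naturality of $\codiag{(-)}^n$ (established just above the statement of Proposition~\ref{prop:enrichement}): $\codiag{Y}^n ; g = \PiuL[i=1][n]{g} ; \codiag{W}^n$, and again use functoriality of $\PiuL{}$ to absorb $g$ into each $h_i$.

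\emph{Items 3 and 4 (tensor).} These are the laws needing the rig structure. For item~4, with $g\colon Z\to W$, the key is that $\id{W}\per \Br{t}_X$ and $\id{W}\per \codiag{X}^n$ can be commuted past the relevant distributors: precisely, I would use Lemma~\ref{lemma: f_P x idR= f_PR}, which gives $(\id{W}\per \Br{t}_X) ; \delta^l_{W,nX} = \Br{t}_{W\per X}$, together with functoriality of $\per$ (to write $g\per \enriched{t}(\dots) = (g\per \id{X}) ; (\id{W}\per \enriched{t}(\dots))$ or, more symmetrically, to slide $\id{Z}\per(-)$ through the composite), naturality of $\delta^l$, functoriality of $\PiuL{}$ for the middle factor, and Lemma~\ref{lemma: delta n per identita}, which gives $\id{W}\per\codiag{X}^n = \delta^l_{W,nX} ; \codiag{W\per X}^n$. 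The two $\delta^l$'s introduced in this way are inverse to each other and cancel, leaving $\Br{t}_{\text{---}} ; \PiuL[i=1][n]{(g\per h_i)} ; \codiag{\text{---}}^n = \enriched{t}((g\per h_1),\dots,(g\per h_n))$. Item~3 is the mirror image, using instead $(\Br{t}_X\per \id{W}) ; \delta^r_{nX,W} = \Br{t}_{X\per W}$ and $\codiag{X}^n\per\id{W} = \delta^r_{nX,W} ; \codiag{X\per W}^n$ from the same two lemmas.

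\textbf{Main obstacle.} The routine part (items 1 and 2) is a two-line diagram chase. The real work is in items~3 and~4: one must be careful that the tensor with $g$ threads through $\Br{t}$, through $\PiuL[i=1][n]{h_i}$, and through $\codiag{}^n$ \emph{using the same distributor} $\delta^l$ (resp.\ $\delta^r$) at each stage, so that the distributors genuinely cancel and no stray coherence isomorphism is left over. This amounts to assembling Lemmas~\ref{lemma: f_P x idR= f_PR} and~\ref{lemma: delta n per identita} with the naturality of $\delta^l,\delta^r$ in the correct order; the coherence bookkeeping (associators, and the fact that $g\per(-)$ is functorial) is where an argument can go wrong if done carelessly. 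Once the cancellation pattern is set up, the verification is again just a diagram chase.
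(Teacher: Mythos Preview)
Your proposal is correct and follows essentially the same approach as the paper: items~1 and~2 via naturality of $\codiag{}^n$ and of $\Br{t}_{(-)}$ respectively, and items~3 and~4 via Lemmas~\ref{lemma: f_P x idR= f_PR} and~\ref{lemma: delta n per identita} together with naturality of the distributor. The only cosmetic difference is in the bookkeeping for item~3/4: the paper introduces a single $\delta^r$ from Lemma~\ref{lemma: delta n per identita}, pushes it past $\PiuL[i]{h_i}\per g$ by naturality, and then absorbs it into $\Br{t}$ via Lemma~\ref{lemma: f_P x idR= f_PR}, whereas you phrase it as introducing a $\delta$ at each end and cancelling them in the middle---these are two ways of writing the same diagram chase.
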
    
        \begin{proof}
            Equation 1 follows from the naturality of $\codiag{Y}^n$, while equation 2 follows from the naturality of $t_X$ (see Lemma \ref{lemma: f_X naturale}). For equation 3, consider $g \colon Z\to U$, then
            \begin{align}
                \enriched{t}(h_1\,\dots, h_n) \per g &= (\Br{t}_X;\PiuL[i=1][n]{h_i});\codiag{Y}^n\per g \tag{\ref{eq: definizione f(h_1,...,h_n)}}\\
                &=(\Br{t}_X \per \id{Z});(\PiuL[i=1][n]{h_i})\per g;\codiag{Y}^n \per \id{U} \notag\\
                &=(\Br{t}_X \per \id{Z});(\PiuL[i=1][n]{h_i})\per g;\delta^r_{nY,U};\codiag{Y\per U}^n \tag{Lemma \ref{lemma: delta n per identita}}\\
                &=(\Br{t}_X \per \id{Z});\delta^r_{nX,Z};(\PiuL[i=1][n]{h_i}\per g);\codiag{Y\per U}^n \tag{$\delta^r_{nY,U}$ nat.}\\
                &=\Br{t}_{X\per Z};(\PiuL[i=1][n]{h_i}\per g);\codiag{Y\per U}^n \tag{Lemma \ref{lemma: f_P x idR= f_PR}}\\
                &=\enriched{t}((h_1 \per g),\dots,(h_n \per g)). \notag
            \end{align}
            Equation 4 is proved similarly.
        \end{proof}
\begin{proof}[Proof of Proposition \ref{prop:kleisliTrig}]
By the discussion above \eqref{eq:Lawdiag}, there exists a  morphism of finite coproduct categories $i\colon \opcat{\lawveret} \to T_{\mathbb{T}}$.
By Proposition \ref{prop:Kleislifccd},  $\Sets_{T_{\mathbb{T}}}$ is a finite coproduct-copy discard rig category.
\end{proof}

 \section{Appendix to Section~\ref{sec:tapediagrams}}\label{app:tape}

\begin{table}[t]
    \centering
    \tiny{\begin{tabular}{cc cc}
        \toprule
        \multicolumn{2}{c}{$\LW S {\id{P}} = \id{SP}$} & $\RW S {\id{P}} = \id{PS}$ & (\newtag{W1}{eq:whisk:id})\\[0.3em]
        \multicolumn{2}{c}{$\LW{S}{\t ; \s} = \LW{S}{\t} ; \LW{S}{\s}$} & $\RW{S}{\t ; \s} = \RW{S}{\t} ; \RW{S}{\s}$ & (\newtag{W2}{eq:whisk:funct})\\[0.3em]
        \multicolumn{2}{c}{$\LW{\uno}{\t} = \t$} & $\RW{\uno}{\t} = \t$ & (\newtag{W3}{eq:whisk:uno}) \\[0.3em]
        \multicolumn{2}{c}{$\LW{\zero}{\t} = \id{\zero}$} & $\RW{\zero}{\t} = \id{\zero}$ & (\newtag{W4}{eq:whisk:zero}) \\[0.3em]
        \multicolumn{2}{c}{$\LW{S}{\t_1 \piu \t_2} = \dl{S}{P_1}{P_2} ; (\LW{S}{\t_1} \piu \LW{S}{\t_2}) ; \Idl{S}{Q_1}{Q_2}$}  & $\RW{S}{\t_1 \piu \t_2} = \RW{S}{\t_1} \piu \RW{S}{\t_2}$ & (\newtag{W5}{eq:whisk:funct piu}) \\[0.3em]
        \multicolumn{2}{c}{$\LW{S \piu T}{\t} = \LW{S}{\t} \piu \LW{T}{\t}$}  & $\RW{S \piu T}{\t} = \dl{P}{S}{T} ; ( \RW{S}{\t} \piu \RW{T}{\t} ) ; \Idl{Q}{S}{T}$ & (\newtag{W6}{eq:whisk:sum}) \\[0.3em]
        \multicolumn{3}{c}{$\LW{P_1}{\t_2} ; \RW{Q_2}{\t_1} = \RW{P_2}{\t_1} ; \LW{Q_1}{\t_2}$} & (\newtag{W7}{eq:tape:LexchangeR}) \\[0.3em]
        $\RW S {\codiag{U}} = \codiag{US}$ & (\newtag{W8}{eq:whisk:diag}) &  $\RW S {\cobang{U}} = \cobang{US}$ & (\newtag{W9}{eq:whisk:bang}) \\[0.3em]
        $\RW S {\symmp{P}{Q}} = \symmp{PS}{QS}$ & (\newtag{W10}{eq:whisk:symmp}) & $\symmt{PQ}{S} = \LW{P}{ \symmt{Q}{S}} ; \RW{Q}{\symmt{P}{S}}$ & (\newtag{W11}{eq:symmper}) \\[0.3em]
        $\RW{S}{\t} ; \symmt{Q}{S} = \symmt{P}{S} ; \LW{S}{\t}$ & (\newtag{W12}{eq:LRnatsym}) & $\LW{S}{\RW{T}{\t}} = \RW{T}{\LW{S}{\t}}$ & (\newtag{W13}{eq:tape:LR}) \\[0.3em]
        $\LW{ST}{\t} = \LW{S}{\LW{T}{\t}}$ & (\newtag{W14}{eq:tape:LL}) & $\RW{TS}{\t} = \RW{S}{\RW{T}{\t}}$ & (\newtag{W15}{eq:tape:RR}) \\[0.3em]
        $\RW S {\dl{P}{Q}{R}} = \dl{P}{QS}{RS}$ & (\newtag{W16}{eq:whisk:dl}) & $\LW S {\dl{P}{Q}{R}} = \dl{SP}{Q}{R} ; \Idl{S}{PQ}{PR}$ & (\newtag{W17}{eq:whisk:Ldl}) \\
        \multicolumn{3}{c}{$\RW S {\Br{f}_U} =\Br{f}_{US}$} & (\newtag{W18}{eq:whisk:fU}) \\
        \bottomrule
    \end{tabular}}
    \caption{The algebra of whiskerings}  
    \label{table:whisk}
\end{table}

\begin{lemma}\label{lem:algwhisk}
    The laws in Table~\ref{table:whisk} hold.
\end{lemma}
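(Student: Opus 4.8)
\textbf{Proof plan for Lemma~\ref{lem:algwhisk} (the algebra of whiskerings).}

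The plan is to prove each identity in Table~\ref{table:whisk} by structural induction, following the order in which the whiskerings are defined in Table~\ref{tab:producttape}: first establish all the monomial-whiskering laws (the cases where the whiskering index is a word $U,V\in\sort^\star$), and only afterwards lift everything to polynomial whiskerings via the clauses $\LW{\zero}{\t}\defeq\id\zero$, $\LW{W\piu S'}{\t}\defeq\LW{W}{\t}\piu\LW{S'}{\t}$, $\RW{\zero}{\t}\defeq\id\zero$, $\RW{W\piu S'}{\t}\defeq\dl{P}{W}{S'};(\RW{W}{\t}\piu\RW{S'}{\t});\Idl{Q}{W}{S'}$. Concretely I would proceed as follows. (1) First prove \eqref{eq:whisk:uno}, \eqref{eq:whisk:id}, \eqref{eq:whisk:funct}, \eqref{eq:whisk:diag}, \eqref{eq:whisk:bang}, \eqref{eq:whisk:symmp}, \eqref{eq:whisk:fU} for monomial indices: these are all immediate inductions on the term $\t$, using the defining clauses of $\LW{U}{-}$ and $\RW{U}{-}$ and, for \eqref{eq:whisk:id}, the inductive definitions of $\codiag{}$, $\cobang{}$, $\symmp{}{}$ etc.\ together with the axioms of Table~\ref{tab:tape typing and axioms}. (2) Then prove \eqref{eq:whisk:LL}, \eqref{eq:whisk:RR}, \eqref{eq:tape:LR} for monomial indices, again by induction on $\t$ --- these only need the symmetry and naturality axioms of the two monoidal layers and associativity of concatenation in $\sort^\star$. (3) Prove \eqref{eq:symmper} and \eqref{eq:LRnatsym} for monomial indices: here one inducts on the word $S$ appearing as the whiskering index, using the inductive definition of $\symmt{P}{Q}$ in Table~\ref{table:def syn sugar} together with the laws just established; \eqref{eq:LRnatsym} is essentially naturality of $\symmt{}{}$ written in whiskered form. (4) With the monomial case in hand, prove the polynomial-index versions of \eqref{eq:whisk:id}, \eqref{eq:whisk:funct}, \eqref{eq:whisk:zero}, \eqref{eq:whisk:sum}, \eqref{eq:whisk:dl}, \eqref{eq:whisk:Ldl} by induction on the polynomial $S$, unfolding the defining clause for $\RW{W\piu S'}{-}$ and using coherence of the left distributors $\dl{}{}{}$ (Figure~\ref{fig:dlaw}) to cancel the $\dl{};\dots;\Idl{}$ sandwiches. (5) Prove \eqref{eq:whisk:funct piu} and \eqref{eq:tape:LexchangeR}: the exchange law \eqref{eq:tape:LexchangeR} is the crucial one that makes $\per$ on tapes well-defined, and it is proved by a double induction, first on $\t_1$ then on $\t_2$, reducing to the monomial exchange law $\LW{U}{\t_2};\RW{V}{\tape{c}}=\RW{V}{\tape{c}};\LW{U'}{\t_2}$ which in turn follows from functoriality of $\per$ on circuits and the already-proven naturality statements.

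The key structural point, which I would isolate as a sub-lemma used repeatedly, is that $\LW{U}{-}$ and $\RW{U}{-}$ are \emph{strict monoidal functors} $\CatTapeT[\Gamma][\algt]\to\CatTapeT[\Gamma][\algt]$ for the $\piu$-structure (this is exactly \eqref{eq:whisk:id} + \eqref{eq:whisk:funct} + \eqref{eq:whisk:uno} + monoidality), and that the polynomial whiskering $\RW{S}{-}$ is the composite of the monomial whiskerings $\RW{U_i}{-}$ conjugated by distributors --- so that most polynomial-index laws follow formally from the monomial ones plus distributor coherence. This lets us avoid re-running the structural induction on $\t$ at the polynomial stage.

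The main obstacle I expect is step (5), the exchange law \eqref{eq:tape:LexchangeR}, together with the interaction laws \eqref{eq:whisk:dl}, \eqref{eq:whisk:Ldl} relating whiskerings to distributors: these are where the sesquistrict (as opposed to fully strict) nature of the rig structure bites, since $\dl{}{}{}$ is an identity only when the left argument is a basic sort, so the inductive step on a composite monomial $A\per U'$ genuinely uses the rig coherence axioms (R3)--(R5) and the derived laws of Figure~\ref{fig:dlaw} rather than being a triviality. I would handle this by first proving \eqref{eq:whisk:Ldl} and \eqref{eq:whisk:dl} (which describe how whiskering a distributor produces another distributor), then feeding those into the induction for \eqref{eq:tape:LexchangeR}. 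All remaining laws are then routine bookkeeping with the congruence $\congI$ of \eqref{eq:congr}.
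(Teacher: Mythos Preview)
Your plan is correct and is essentially the detailed unpacking of what the paper's proof defers to: the paper simply invokes \cite[Lemma 5.9]{bonchi2023deconstructing} for laws \eqref{eq:whisk:id}--\eqref{eq:whisk:Ldl}, noting that the only change from that reference is the \emph{removal} of the comonoid cases $\diag{U},\bang{U}$ (so fewer base cases, not more), and then proves only the genuinely new law \eqref{eq:whisk:fU} by induction on the polynomial $S$. Your steps (1)--(5) reconstruct from scratch the inductive argument that \cite{bonchi2023deconstructing} already carries out, with the $\Br{f}_U$ clauses slotted in as additional base cases --- which is exactly right, since the monomial clause $\RW{W}{\Br{f}_V}=\Br{f}_{VW}$ is one of the defining equations in Table~\ref{tab:producttape}, and the polynomial lift is the short computation the paper spells out. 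So the two approaches coincide in substance; the paper is terser because it can cite, while you give the self-contained roadmap.
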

\begin{proof}
    The proof is a straightforward adaptation of the inductive proofs for~\cite[Lemma 5.9]{bonchi2023deconstructing}. In particular, we do not have to prove the laws involving $\diag{U}$ and $\bang{U}$, as these are not part of the syntax of $\algt$-tapes, and we do not have to consider them as base cases in the remaining proofs.

    The additional law, \eqref{eq:whisk:fU}, is proved by induction on $S$: for $S = \zero$ we have that $\RW \zero {f_U} = \id{\zero}$, by definition of polynomial whiskering in Table~\ref{tab:producttape}. By definition of $\Br{f}_P$ in Table~\ref{table:def syn sugar}, we have that $\Br{f}_\zero = \id{\zero}$ and thus $\RW{\zero}{\Br{f}_U} = \Br{f}_{\zero} = \Br{f}_{U \per \zero}$. 
    
    Now for $S = W \piu S'$, we have that:
    \begin{align*}
        \RW{W \piu S'}{\Br{f}_U} &= \dl{U}{W}{S'} ; ( \RW{W}{\Br{f}_U} \piu \RW{S'}{\Br{f}_U} ) ; \Idl{\PiuL[][n]{U}}{W}{S'} \tag{\ref{eq:whisk:sum}} \\
        &=  ( \RW{W}{\Br{f}_U} \piu \RW{S'}{\Br{f}_U} ) ; \Idl{\PiuL[][n]{U}}{W}{S'} \tag{\cite[Lemma E.1.1]{bonchi2023deconstructing}} \\
        &= ( \Br{f}_{UW} \piu \Br{f}_{US'} ) ; \Idl{\PiuL[][n]{U}}{W}{S'} \tag{Table~\ref{tab:producttape}, Ind. hyp.} \\
        &= ( \Br{f}_{UW} \piu \Br{f}_{US'} ) ; \Idl{\PiuL[][n]{\uno}}{UW}{US'} \tag{*} \\
        &= \Br{f}_{UW \piu US'} \tag{Table~\ref{table:def syn sugar}} \\
        &= \Br{f}_{U(W \piu S')}
    \end{align*} 
    where (*) is proved by a straightforward induction on $n$ and using the inductive definition of $\delta^l$ in Table~\ref{table:def syn sugar}.
\end{proof}

\begin{proof}[Proof of Theorem \ref{thm:free signametape}]
    By construction, $(\CatTapeT[\Gamma][\algt], \piu, \zero)$ is a symmetric monoidal category. Moreover, for every $P \in (\sort^\star)^\star$, there are natural commutative monoids $(\codiag{P}, \cobang{P})$ defined as in Table~\ref{table:def syn sugar}, making it a fc category. 
    

    The proof that $\t_1 \per \t_2 \defeq \LW{P}{\t_2} ; \RW{S}{\t_1} $ is a monoidal product involves showing that the whiskerings satisfy certain laws, reported in Table~\ref{table:whisk} in Appendix~\ref{app:tape}. With these laws, it is easy to see that $(\CatTapeT[\Gamma][\algt], \per, \uno)$ is a strict symmetric monoidal category, as shown in~\cite[Theorem 5.10]{bonchi2023deconstructing}.
    Similary, the proof that $\CatTapeT[\Gamma][\algt]$ is a $\sort$-sesquistrict rig category is taken verbatim from that of \cite[Theorem 5.10]{bonchi2023deconstructing}.

    Now, obeserve that there is a morphism of fc categories $i \colon \opcat{\lawveret} \to \CatTapeT[\Gamma][\algt]$ defined inductively on objects as 
    \[ i(\zero) \defeq \zero \qquad \text{ and } \qquad i(1 + n) \defeq 1 \piu i(n), \]
    while on the arrows it maps the monoid structure of $\opcat{\lawveret}$ to the monoid structure of $\CatTapeT[\Gamma][\algt]$, i.e.
    \[ i(\codiag{1}) \defeq \codiag{1} \qquad\text{ and }\qquad i(\cobang{1}) \defeq \cobang{1} \]
    and each $f \colon n \to 1$ in $\sign$ to the corresponding tape diagram, i.e. $i(f) \defeq \Br{f}_\uno$. One can easily check that $i(-)$ satisfies the condition in Definition~\ref{def:Trig cat}, and thus $\CatTapeT[\Gamma][\algt]$ is a sesquistrict $\algt$-rig category. We conclude the proof by showing that it is also the free one.

    Given a monoidal signature $(\sort,\Gamma)$, a sesquistrict $\algt$-rig category $(\Cat{M} \to \Cat{C}, j \colon \opcat{\lawveret} \to \Cat{C})$ and an interpretation $\interpretation=(\alpha_\sort \colon \sort \to \ob{\Cat{M}}, \alpha_\Gamma \colon \Gamma \to Ar(\Cat{C}))$ of $(\sort,\Gamma)$ in $\Cat{M} \to \Cat{C}$, 
    the inductive extension of $\interpretation$, hereafter referred to as $\CBdsem{-}_\interpretation \colon \CatTapeT[\Gamma][\algt] \to \Cat{C}$, is defined as follows:

    \noindent{\small
    \renewcommand{\arraystretch}{1.5}
    \!\!\!\begin{tabular}{lllll}
    $\CBdsem{s}_{\interpretation} = \alpha_{\Gamma}(s) $& $\CBdsem{\, \tapeFunct{c} \,}_{\interpretation}= \CBdsem{c}_{\interpretation} $ & $\CBdsem{\codiag{U}}_{\interpretation}= \codiag{\alpha^\sharp_\sort(U)} $&$ \CBdsem{\cobang{U}}_{\interpretation} = \cobang{\alpha^\sharp_\sort(U)}$   &  $\CBdsem{\Br{f}_{U}}_{\interpretation} = j(f) \per \id{\alpha^\sharp_\sort(U)}$\\
    $\CBdsem{\id{A}}_{\interpretation}= \id{\alpha_\sort(A)} $&$ \CBdsem{\id{1}}_{\interpretation}= \id{1} $&$ \CBdsem{\symmt{A}{B}}_{\interpretation} = \symmt{\alpha_\sort(A)}{\alpha_\sort(B)}  $&$ \CBdsem{c;d}_{\interpretation} = \CBdsem{c}_{\interpretation}; \CBdsem{d}_{\interpretation}  $ & $ \CBdsem{c\per d}_{\interpretation} = \CBdsem{c}_{\interpretation} \per \CBdsem{d}_{\interpretation}$\\
    $\CBdsem{\id{U}}_{\interpretation}= \id{\alpha^\sharp_\sort(U)} $&$ \CBdsem{\id{\zero}}_{\interpretation} = \id{\zero}  $&$ \CBdsem{\symmp{U}{V}}_{\interpretation}= \symmp{\alpha^\sharp_\sort(U)}{\alpha^\sharp_\sort(V)} $&$ \CBdsem{\s;\t}_{\interpretation} = \CBdsem{\s}_{\interpretation} ; \CBdsem{\t}_{\interpretation} $&$ \CBdsem{\s \piu \t}_{\interpretation} = \CBdsem{\s}_{\interpretation} \piu \CBdsem{\t}_{\interpretation} $
    \end{tabular}
    }

    Observe that there is a trivial interpretation of $(\sort, \Gamma)$ in $(\sort \to \CatTapeT[\Gamma][\algt], i \colon \opcat{\lawveret} \to \CatTapeT[\Gamma][\algt])$ given by  $\interpretation' = (\alpha'_\sort \colon \sort \to \ob{\sort}, \alpha'_\Gamma \colon \Gamma \to Ar(\CatTapeT[\Gamma][\algt]))$, where $\alpha'_\sort(A) \defeq A$ and $\alpha'_\Gamma(\gen) \defeq \tapeFunct{\gen}$, for all $A \in \sort, \gen \in \Gamma$.

    We aim to find a unique sesquistrict $\algt$-rig functor $(\alpha \colon \sort \to \Cat{M}, \beta \colon \CatTapeT[\Gamma][\algt] \to \Cat{C})$ such that $\alpha'_\sort ; \alpha = \alpha_\sort$ and $\alpha'_\Gamma ; \beta = \alpha_\Gamma$. 

    First observe that for every $A \in \sort$, $\alpha'_\sort(A) = A$, and thus $\alpha(A)$ is forced to be $\alpha_\sort(A)$. Secondly, since we ask for $\beta$ to be a morphism of fc rig categories, it must preserve the two symmetric monoidal structures and the commutative monoid structure. The fact that $\beta$ preserves the rest of the rig structure can be proved using the inductive definitions in Table~\ref{table:def syn sugar}.

    The only degree of freedom is on the generators $\gen \in \Gamma$ and the $\Br{f}_U$'s.  However, observe that $\alpha'_\Gamma(\gen) = \tapeFunct{\gen}$, therefore $\beta$ must map $\tapeFunct{\gen}$, and thus $\gen$, in $\alpha_\Gamma(\gen)$.

    Finally, since we want $\beta$ to be a morphism of $(\sign, E)$-rig categories, we must require that $i ; \beta = j$ and so it must be such that $\beta(f_\uno) = j(f)$. Since $\beta$ preserves the rig structure and $\Br{f}_U = \Br{f}_1 \per \id{U}$, it is also the case that $\beta(\Br{f}_U) = \beta(\Br{f}_1 \per \id{U}) = \beta(\Br{f}_1) \per \beta(\id{U}) = j(f) \per \id{\alpha^\sharp_\sort(U)}$. This proves that the only such $\beta$ is $\CBdsem{-}_\interpretation$.\qedhere




    
\end{proof}

\begin{proof}[Proof of Theorem \ref{th:tapes-free-T-cd}]
To prove that $\CatTapeTCD[\Gamma][\algt]$ is a sesquistrict $\algt$-rig category, we proceed exactly as in Theorem~\ref{thm:free signametape}. To establish that it is also a copy-discard category, and therefore a $\algt$-cd rig, we must show that the operations $\copier{P}$ and $\discharger{P}$, defined in \eqref{eq:copierind}, satisfy the axioms of cocommutative comonoids for every polynomial $P \in (\sort^\star)^\star$. This follows from~\cite[Theorem 7.3]{bonchi2023deconstructing}.

Finally, the fact that $\CatTapeTCD[\Gamma][\algt]$ is the free sesquistrict $\algt$-cd rig follows from the same argument as in Theorem~\ref{thm:free signametape}. 
The key observation is that an interpretation $\interpretation=(\alpha_\sort \colon \sort \to \ob{\Cat{M}}, \alpha_\Gamma \colon \Gamma \to Ar(\Cat{C}))$ of $(\sort,\Gamma)$ in a sesquistrict $\algt$-cd rig category $(\Cat{M} \to \Cat{C}, j \colon \opcat{\lawveret} \to \Cat{C})$, gives rise uniquely to a morphism $\CBdsem{-}_\interpretation \colon \CatTapeTCD[\Gamma][\algt] \to \Cat{C}$, defined inductively as follows:

\begin{equation*}
    \renewcommand{\arraystretch}{1.5}
\!\!\!\begin{tabular}{l@{\;\;\;\;}l@{\;\;\;\;}l@{\;\;\;\;}l@{\;\;\;\;}l}
$\CBdsem{s}_{\interpretation} \defeq \alpha_{\Gamma}(s) $& $\CBdsem{\, \tapeFunct{c} \,}_{\interpretation}\defeq \CBdsem{c}_{\interpretation} $ & $\CBdsem{\codiag{U}}_{\interpretation}\defeq \codiag{\alpha^\sharp_\sort(U)} $&$ \CBdsem{\cobang{U}}_{\interpretation} \defeq \cobang{\alpha^\sharp_\sort(U)}$   &  $\CBdsem{\Br{f}_{U}}_{\interpretation} \defeq j(f) \per \id{\alpha^\sharp_\sort(U)}$\\
$\CBdsem{\id{A}}_{\interpretation}\defeq \id{\alpha_\sort(A)} $&$ \CBdsem{\id{1}}_{\interpretation}\defeq \id{1} $&$ \CBdsem{\symmt{A}{B}}_{\interpretation} \defeq \symmt{\alpha_\sort(A)}{\alpha_\sort(B)}  $&$ \CBdsem{c;d}_{\interpretation} \defeq \CBdsem{c}_{\interpretation}; \CBdsem{d}_{\interpretation}  $ & $ \CBdsem{c\per d}_{\interpretation} \defeq \CBdsem{c}_{\interpretation} \per \CBdsem{d}_{\interpretation}$\\
$\CBdsem{\id{U}}_{\interpretation}\defeq \id{\alpha^\sharp_\sort(U)} $&$ \CBdsem{\id{\zero}}_{\interpretation} \defeq \id{\zero}  $&$ \CBdsem{\symmp{U}{V}}_{\interpretation}\defeq \symmp{\alpha^\sharp_\sort(U)}{\alpha^\sharp_\sort(V)} $&$ \CBdsem{\s;\t}_{\interpretation} \defeq \CBdsem{\s}_{\interpretation} ; \CBdsem{\t}_{\interpretation} $&$ \CBdsem{\s \piu \t}_{\interpretation} \defeq \CBdsem{\s}_{\interpretation} \piu \CBdsem{\t}_{\interpretation} $ \\
 & & $\CBdsem{\copier{A}} \defeq \copier{\alpha_\sort(A)}$ & $\CBdsem{\discharger{A}} \defeq \discharger{\alpha_\sort(A)}$ & 
\end{tabular}
\end{equation*}

\end{proof}

\end{document}